\newtheorem{theorem}{Theorem}
\newtheorem{lemma}[theorem]{Lemma}
\newtheorem{corollary}[theorem]{Corollary}
\newtheorem{fact}[theorem]{Fact}
\theoremstyle{definition}
\newtheorem{definition}[theorem]{Definition}
\newtheorem{conjecture}{Conjecture}
\newcommand{\iflncs}[2]{\ifthenelse{\equal{\uselncs}{yes}}{#1}{#2}}
\newcommand{\uselncs}{no}
\newcommand{\tw}{\ensuremath{\mathtt{tw}}\xspace}
\newcommand{\pw}{\ensuremath{\mathtt{pw}}\xspace}
\newcommand{\td}{\ensuremath{\mathtt{td}}\xspace}
\newcommand{\poly}{\mathrm{poly}}
\newcommand{\Oh}{\ensuremath{\mathcal{O}}}
\newcommand{\oeff}{\ensuremath{o_{\text{eff}}}}
\newcommand{\N}{\ensuremath{\mathbb{N}}}
\newcommand{\Z}{\ensuremath{\mathbb{Z}}}
\newcommand{\tree}{\texttt{tree}}
\newcommand{\chld}{\texttt{chld}}
\newcommand{\tail}{\texttt{tail}}
\newcommand{\roots}{\texttt{roots}}
\newcommand{\Aa}{\mathbf{A}}
\newcommand{\Ff}{\mathbf{F}}
\newcommand{\Tt}{\mathbf{T}}
\newcommand{\cfont}[1]{\textnormal{\textbf{#1}}}
\newcommand{\mystrut}{\vphantom{tg}} 
\newcommand{\optional}[1]{\textcolor[rgb]{0.3,0.3,0.3}{#1}}
\newcommand{\lred}{\leq_L}
\newcommand{\exampleProblem}{\textsc{3Coloring}\xspace}
\newcommand{\tdExample}{\textnormal{td-}\exampleProblem}
\newcommand{\tdEx}[1]{\textnormal{td-}\exampleProblem{}[\ensuremath{#1}]}
\newcommand{\pwExample}{\textnormal{pw-}\exampleProblem}
\newcommand{\pwEx}[1]{\textnormal{pw-}\exampleProblem{}[\ensuremath{#1}]}
\newcommand{\twExample}{\textnormal{tw-}\exampleProblem}
\newcommand{\twEx}[1]{\textnormal{tw-}\exampleProblem{}[\ensuremath{#1}]}
\newcommand{\NAuxSAplain}[3]{\cfont{NAuxSA[}#1,#2,#3\cfont{]}}
\newcommand{\NAuxSA}[3]{\cfont{NAuxSA[}\underset{time}{#1\mystrut},\underset{space}{#2\mystrut},\underset{height}{#3\mystrut}\cfont{]}}
\newcommand{\NAuxPDA}[2]{\cfont{NAuxPDA[}\underset{time\mystrut}{#1},\underset{space\mystrut}{#2}\cfont{]}}
\newcommand{\DTiSp}[2]{\cfont{D[}\underset{time}{#1\mystrut},\underset{space}{#2\mystrut}\cfont{]}}
\newcommand{\DTime}[1]{\cfont{D[}\underset{time}{#1\mystrut}\cfont{]}}
\newcommand{\DSpace}[1]{\cfont{D[}\underset{space}{#1\mystrut}\cfont{]}}
\newcommand{\NTiSp}[2]{\cfont{N[}\underset{time}{#1\mystrut},\underset{space}{#2\mystrut}\cfont{]}}
\newcommand{\NSpace}[1]{\cfont{N[}\underset{space}{#1\mystrut}\cfont{]}}
\newcommand{\ATiSz}[2]{\cfont{A[}\underset{time}{#1\mystrut},\underset{treesize}{#2\mystrut}\cfont{]}}
\newcommand{\ASpSz}[2]{\cfont{A[}\underset{space}{#1\mystrut},\underset{treesize}{#2\mystrut}\cfont{]}}
\newcommand{\paraDTiSp}[2]{\cfont{D[}#1,#2\cfont{]}}
\newcommand{\paraNTiSp}[2]{\cfont{N[}#1,#2\cfont{]}}
\newcommand{\NSC}[1]{\cfont{NSC}^{#1}}
\newcommand{\SACq}[1]{\cfont{SAC}_{quasi}^{#1}}
\newcommand{\NL}{\cfont{NL}\xspace}
\newcommand{\SC}{\cfont{SC}\xspace}
\newcommand{\app}{$\spadesuit$}
\newcommand{\defparproblemu}[4]{
  \vspace{2mm}
\noindent\fbox{
  \begin{minipage}{0.96\textwidth}
  \begin{tabular*}{\textwidth}{@{\extracolsep{\fill}}lr} #1 & {\textbf{Parameter:}} #3 \\ \end{tabular*}
  {\textbf{Input:}} #2  \\
  {\textbf{Question:}} #4
  \end{minipage}
  }
  \vspace{2mm}
}
\title{On space efficiency of algorithms working on structural decompositions of graphs
\thanks{
	{A 
	preliminary version of this paper appeared in the Proceedings of the 33$^{\textrm{rd}}$  Symposium on		
	Theoretical Aspects of Computer Science (STACS’16).}\newline
  The research of both authors is supported by Polish National Science Centre grant DEC-2013/11/D/ST6/03073.
  During the work on these results, Micha\l{} Pilipczuk held a post-doc position at Warsaw Center of Mathematics and Computer Science, and was supported by the Foundation for Polish Science via the START stipend programme.}}
\author{
  Micha\l{} Pilipczuk\thanks{Institute of Informatics, University of Warsaw, Poland, \texttt{\{michal.pilipczuk,m.wrochna\}@mimuw.edu.pl}.}
  \and 
  Marcin Wrochna$^{\dagger}$
  }
\date{}
\begin{document}

\maketitle

\begin{abstract}
Dynamic programming on path and tree decompositions of graphs is a technique that is ubiquitous in the field of parameterized and exponential-time algorithms. However, one of its drawbacks is that the space usage is exponential in the decomposition's width. Following the work of Allender et al. [Theory of Computing,~'14], we investigate whether this space complexity explosion is unavoidable. Using the idea of reparameterization of Cai and Juedes~[J. Comput. Syst. Sci.,~'03], we prove that the question is closely related to a conjecture that the {\textsc{Longest Common Subsequence}} problem parameterized by the number of input strings does not admit an algorithm that simultaneously uses $\cfont{XP}$ time and $\cfont{FPT}$ space. Moreover, we complete the complexity landscape sketched for pathwidth and treewidth by Allender et al. by considering the parameter {\em{tree-depth}}. We prove that computations on tree-depth decompositions correspond to a model of non-deterministic machines that work in polynomial time and logarithmic space, with access to an auxiliary stack of maximum height equal to the decomposition's depth.
Together with the results of Allender et al., this describes a hierarchy of complexity classes for polynomial-time non-deterministic machines with different restrictions on the access to working space, which mirrors the classic relations between treewidth, pathwidth, and tree-depth.

\end{abstract}

\newpage
\section{Introduction}\label{sec_intro}
Treewidth is a graph parameter that measures how much a graph resembles a tree. Intuitively, a graph has treewidth $s$ if it can be decomposed into parts (called {\em{bags}}) of size at most $s+1$ so that this decomposition, called a {\em{tree decomposition}}, has the shape of a tree. While initially introduced by Robertson and Seymour in their Graph Minors project~\cite{RobertsonS86}, treewidth has found numerous applications in the field of algorithms. This is mostly due to the fact that many problems that are intractable on general graphs, become efficiently solvable on graphs of small treewidth using the principle of dynamic programming. Theorems of Courcelle~\cite{Courcelle90} and of Arnborg et al.~\cite{ArnborgLS91} explain that every decision or optimization problem expressible in Monadic Second Order logic can be solved in time $f(s)\cdot n$ on graphs of treewidth $s$ and size $n$, for some function $f$. While $f$ can be non-elementary, for many classic problems, like {\textsc{Vertex Cover}}, {\textsc{3Coloring}}, or {\textsc{Dominating Set}}, the natural dynamic programming approach yields an algorithm with running time $\Oh(c^s\cdot n)$ for some typically small constant $c$. This is important from the point of view of applications, because dynamic programming procedures working on tree decompositions often serve as critical subroutines in more complex algorithms, such as subexponential algorithms derived using the technique of {\em{bidimensionality}}~\cite{DemaineFHT05}, or approximation schemes that use Baker's approach~\cite{Baker94}. Algorithms working on tree decompositions are usually analyzed in the paradigm of {\em{parameterized complexity}}, where the width of the decomposition is the considered parameter. Nowadays, the study of problems parameterized by structural measures of the input, such as treewidth, is a classic topic of study in this field. We refer to textbooks~\cite{platypus,DowneyF13,grohe:book} for a broad introduction, and to a recent survey of Langer et al.~\cite{LangerRRS14} for more specific results.

A certain limitation of dynamic programming on a tree decomposition is that it uses space exponential in its width. In practical applications this is often a prohibitive factor, because the machine is likely to simply run out of space much before the elapsed time exceeds tolerable limits. Therefore, recently there is much focus on reducing the space complexity of exponential-time algorithms to polynomial, even at the cost of slightly worsening the time complexity~\cite{AustrinKKM13,narrow-sieves,FominKLPS15,FurerY14,LokshtanovN10,Nederlof13}. Here, the usage of algebraic tools proved to be an extremely useful approach. Unfortunately, algorithms working on tree decompositions of graphs remain a family where virtually no progress has been achieved in this matter. Therefore, a natural question arises: Can we reduce the space complexity of algorithms working on tree decompositions while keeping (or moderately worsening) their time complexity? The first explicit statement of this question known to us is due to Lokshtanov et al.~\cite{LokshtanovMS11}, who sketched how using a simple tradeoff one can obtain polynomial space complexity while increasing the time complexity to $2^{\Oh(s^2)}+\Oh(n^2)$. Later, the same question was asked by Langer et al.~\cite{LangerRRS14}.

Following early completeness results of Monien and Sudborough~\cite{MonienS85} on bandwidth-constrained problems and of Gottlob et al.~\cite{GottlobLS01} on conjunctive queries of bounded treewidth, Allender et al.~\cite{AllenderCLPT14} recently initiated the systematic study of satisfaction complexity in small, but non-constant path- and treewidth. Essentially, they observe that CSP-like problems---say, {\textsc{3Coloring}} for concreteness\footnote{Allender et al. use {\textsc{SAT}} parameterized by treewidth/pathwidth of its primal graph as an exemplary problem, but {\textsc{SAT}} and {\textsc{3Coloring}} can be easily seen to be equivalent under logspace reductions; see Lemma~\ref{lem:equivalenceOfProblems}. In this paper, we prefer to use {\textsc{3Coloring}} as an exemplary hard CSP-like problem.}---when limited to instances of small treewidth or pathwidth, are complete for certain complexity classes under logspace reductions. More precisely, when the input graph is equipped with a path decomposition of width at most $s(n)\geq \log n$, for some fixed constructible function $s$ of the input size, then {\textsc{3Coloring}} (denoted in this case as \pwExample{}[$s$]) is complete for the class $\NTiSp{\poly}{s(\poly)}$: problems that admit non-deterministic algorithms working simultaneously in polynomial time and space $\Oh(s(\poly(n))$. Similarly \twExample{}[$s$], where $s(n)$ bounds the width of a given tree decomposition, is complete for the class $\NAuxPDA{\poly}{s(\poly)}$; the difference with $\NTiSp{\poly}{s(\poly)}$ is that the algorithm can use an auxiliary push-down of unlimited size, to which read/write access is only from the top. We remark that Allender et al. focus on a different characterization by \emph{semi-unbounded fan-in} (SAC) circuits; this characterization will not be relevant to our work, and hence we omit it.
While they state their results only for $s(n)=\log^k n$, the proof works in the more general setting given below.

\begin{theorem}[\cite{AllenderCLPT14}]\label{thm:pwCompleteness}
Let $s(n) \geq \log n$ be a nice function\footnote{By a {\em{nice}} function we mean a function $s$ that is constructible and such that $s(n)/\lg n$ is non-decreasing.}.\\
Then \pwExample{}$[s]$ is complete for $\NTiSp{\poly}{s(\poly)}$ under logspace reductions, whereas \twExample{}$[s]$ is complete for $\NAuxPDA{\poly}{s(\poly)}$ under logspace reductions.
\end{theorem}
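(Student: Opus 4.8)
The plan is to prove the two statements by establishing membership and hardness separately, with the pathwidth case serving as a blueprint that we then upgrade to handle treewidth. For membership, I would show that given an instance of \pwExample{}$[s]$ --- a graph $G$ on $n$ vertices together with a path decomposition of width at most $s(n)$ --- a non-deterministic machine can verify a proper $3$-coloring while only ever keeping track of the colors assigned to the $\Oh(s(n))$ vertices in the current bag. Concretely, the machine sweeps along the path decomposition from one end to the other, at each step guessing the colors of the newly introduced vertices, checking that every edge inside the current bag is properly colored, and then forgetting vertices that leave the bag. Since each bag holds at most $s(n)+1$ vertices and a color needs $2$ bits, the working space is $\Oh(s(n))$; since there are $\Oh(n)$ bags and the per-bag work is polynomial, the total time is polynomial. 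After reparameterizing $n \mapsto \poly(n)$ to account for the blow-up allowed in the class definition, this places \pwExample{}$[s]$ in $\NTiSp{\poly}{s(\poly)}$. For \twExample{}$[s]$ the same idea works with the path replaced by a tree: the machine performs a depth-first traversal of the tree decomposition, and uses the auxiliary push-down precisely to store, for each ancestor bag on the current root-to-node path, the colors of the vertices that still need to be remembered when we eventually come back up. Because each bag contributes $\Oh(s(n))$ bits to the stack and the stack depth is the length of a root-leaf path, the push-down is used in the unlimited-height, top-access-only fashion that defines $\NAuxPDA{\poly}{s(\poly)}$, while the genuinely random-access working memory stays at $\Oh(s(n))$.

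For hardness I would reduce an arbitrary language $L \in \NTiSp{\poly}{s(\poly)}$ to \pwExample{}$[s]$ by a logspace reduction. Given a non-deterministic machine $M$ for $L$ running in time $T = \poly(n)$ and space $S = \Oh(s(\poly(n)))$, one builds, in the standard tableau style, a graph whose $3$-colorability encodes the existence of an accepting computation of $M$ on the input $x$. The key point is to lay the tableau out so that its natural path decomposition has width $\Oh(S)$: one orders the constraint gadgets column-by-column through time, so that at any moment the ``active'' part of the tableau is one configuration --- the tape contents ($S$ cells), the head position, and the state --- which is exactly $\Oh(S)$ vertices wide. Transition constraints between consecutive time steps are local, so they fit inside consecutive bags. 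Encoding a CSP/Turing-machine transition as a $3$-coloring instance is routine and can be done in logspace; one must also check that the produced path decomposition is computable in logspace, which follows because its structure is completely uniform and depends only on $T$ and $S$, not on $x$. For \twExample{}$[s]$ and $\NAuxPDA{\poly}{s(\poly)}$, the same reduction is applied but now the computation is a branching one on a push-down machine: the tableau becomes a tree mirroring the shape of the stack's life-cycle (a push spawns a subtree, a pop returns to the parent), and the resulting tree decomposition has width $\Oh(S)$ by the same ``one configuration per bag'' accounting. The final step is to invoke the guarantee that $s$ is nice, so that $s(\poly(n))$ composes and bounds things the way we need, and to note that the width bound we produce matches $s$ after absorbing polynomial factors into the reparameterization.

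The part I expect to be the main obstacle is the treewidth/push-down direction of the hardness proof: one must argue that an arbitrary $\NAuxPDA{\poly}{s(\poly)}$ computation can be ``serialized'' into a tree-shaped tableau of width $\Oh(s(\poly(n)))$ in a way that is still logspace-computable. The subtlety is that a push-down automaton can, in polynomial time, push and pop a polynomial number of symbols, so the ``tree of configurations'' can be exponentially large as a combinatorial object --- yet the graph we output must have polynomial size. The resolution is to exploit that the push-down content is never read except at the top: one only needs to branch the tableau at the moments where the stack height changes, and to use the surface-configuration technique (tracking only the current finite control plus the top-of-stack plus the $\Oh(S)$ bits of random-access memory) so that each tree node carries $\Oh(S)$ vertices and the whole construction has size $\poly(T) = \poly(n)$. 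Getting this bookkeeping right --- and, in parallel, verifying that the tree decomposition itself, not merely the graph, is emitted by a logspace transducer --- is where the real work lies; the pathwidth case is the degenerate version where the stack is never used and the tree is a path.
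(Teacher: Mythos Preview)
Your pathwidth argument is essentially what the paper does: membership via a left-to-right sweep guessing and remembering colors only for the current bag (this appears verbatim in the proof of Theorem~\ref{thm:pwParaCompleteness}), and hardness via the tableau / computation-gadget construction of Lemma~\ref{lem:compGadget} (the paper explicitly remarks that Lemmas~\ref{lem:compGadget} and~\ref{lem:equivalenceOfProblems} with $h=0$ give the first half of Theorem~\ref{thm:pwCompleteness}). So this half is fine.

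For the treewidth half --- which the paper does not reprove but cites from Allender et al.\ --- your membership sketch (DFS on the tree decomposition, pushing the parent bag's coloring when descending and popping when ascending) is correct. But the obstacle you single out for hardness is not the right one. A polynomial-time $\cfont{NAuxPDA}$ machine performs at most polynomially many pushes and pops in total, so the push--pop tree has polynomially many nodes; it is \emph{never} exponentially large, and no ``surface-configuration trick'' is needed to keep its size down. The genuine difficulty is different: the \emph{shape} of the push--pop tree is not determined by the input --- it depends on the machine's nondeterministic choices --- yet the logspace reduction must output a single fixed graph together with a single fixed tree decomposition. Your sentence ``the tableau becomes a tree mirroring the shape of the stack's life-cycle'' tacitly assumes the reduction knows this shape, which it does not. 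The two standard ways around this are (i) to first regularize the machine so that the push--pop tree always has a fixed canonical shape, after which one can wrap computation gadgets around that fixed skeleton (the paper carries out exactly this plan for the tree-depth analogue in Lemmas~\ref{lem:tdRegularize} and~\ref{lem:tdHardness}, forcing the tree to be a full binary tree of known depth), or (ii) to pass through the $\cfont{SAC}$-circuit characterization of $\cfont{NAuxPDA}$, which is the route Allender et al.\ take. Your surface-configuration idea is relevant for keeping each bag at size $\Oh(s)$ once a skeleton is fixed, but it does not by itself resolve the tree-shape issue.
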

Thus, the feasibility of various space-time tradeoffs for algorithms working on tree/path decompositions is equivalent to inclusions of corresponding complexity classes. For instance (assuming for conciseness $s(n^c)=\Oh(s(n))$ for each constant~$c$, e.g., $s(n)=\log^{k} n$ for some $k$), \pwExample{}$[s]$ is solvable in:
\begin{itemize}
	\item time $2^{o(s(n) \log n)}$ and space $2^{o(s(n))}$ if and only if
	$\NTiSp{\poly}{s} \subseteq \DTiSp{2^{ o(s \cdot \log)} }{ 2^{o(s)} }$;
	\item time $2^{\Oh(s(n))}$ and space $\poly(n)$ if and only if
	$\NTiSp{\poly}{s} \subseteq \DTiSp{2^{\Oh(s)}}{\poly}$.
\end{itemize}
Similar statements can be inferred for treewidth. In contrast, the best known determinization results come from a brute-force approach or from Savitch's theorem%
\footnote{There is also a simple simulation approach that tries each possible sequence of non-deterministic choices, proving $\NTiSp{t}{s} \subseteq \DSpace{t}$, but in our context it is always worse than Savitch's theorem.}~\cite{savitch1970relationships}, yielding respectively (for constructible $t(n)\geq n$ and $s(n)\geq \log t(n)$):
\begin{eqnarray*}
&\NTiSp{t}{s} \subseteq \DTime{2^{\Oh(s)}} = \DTiSp{2^{\Oh(s)}}{\optional{2^{\Oh(s)}}}; \\
&\NTiSp{t}{s} \subseteq \DSpace{s\cdot \log t} = \DTiSp{\optional{2^{\Oh(s\cdot \log t)}}}{s\cdot \log t}.
\end{eqnarray*}
In this manner, Allender et al. conclude that, intuitively speaking, achieving better time-space tradeoffs for algorithms working on path and tree decompositions of small width would require developing a general technique of improving upon the tradeoff of Savitch. As Lipton~\cite{lipton2010savitch} phrased it, \enquote{one of the biggest
embarrassments of complexity theory is the fact that Savitch's theorem has
not been improved \textelp{}. Nor has anyone proved that it is tight}. 

Allender et al. argue that such an improvement would contradict certain rescaled variants of known conjectures about the containment of time- and space-constrained classes, in particular the assumption that $\cfont{NL}\nsubseteq \cfont{SC}$; more precisely that the rescaled hierarchies built on top of these two classes are somehow orthogonal---we refer to~\cite{AllenderCLPT14} for details. We consider the study of Allender et al. not as a definite answer in the topic, but rather as an invitation to a further investigation of links between the introduced conjectures and more well-established complexity assumptions.

\paragraph*{Our Contribution.}
In the {\textsc{Longest Common Subsequence}} problem (LCS), we are given an alphabet $\Sigma$ and $k$ strings $s_1,s_2,\ldots,s_k$ over $\Sigma$, and the task is to compute the longest sequence of symbols of $\Sigma$ that appears as a subsequence in each $s_i$. The applicability of the LCS problem in, e.g., computational biology, motivated many to search for faster or more space-efficient algorithms, as the classical dynamic programming solution, running in time and space $\Oh(n^k)$ (where $n$ is the length of each string and $k$ the number of strings) is often far from practical. From the point of view of parameterized complexity, LCS parameterized by $k$ is W[$t$]-hard for every level $t$~\cite{BodlaenderDFW95}, remains W[1]-hard even for a fixed-sized alphabet~\cite{Pietrzak03}, and is W[1]-complete when parameterized jointly by $k$ and $\ell$, the target length of the subsequence~\cite{Guillemot11}. In a recent breakthrough, Abboud et al.~\cite{AbboudBW15} proved that the existence of an algorithm with running time $\Oh(n^{k-\varepsilon})$, for any $\varepsilon>0$, would contradict the Strong Exponential Time Hypothesis. As far as the space complexity is concerned, only modest progress has been achieved: The best known result is an algorithm of Barsky et al.~\cite{BarskySTU07}, which improves the space complexity to $\Oh(n^{k-1})$. This motivates us to formulate the following conjecture.

\begin{conjecture}\label{conj:lcs}
There is no algorithm for LCS that works in time $n^{f(k)}$ and space $f(k) \poly(n)$ for a computable function $f$, where $k$ is the number of input strings and $n$ the total input length.
\end{conjecture}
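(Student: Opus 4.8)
\medskip
\noindent\textbf{Proof proposal.}
The statement is a \emph{conjecture}, so I do not expect an unconditional proof: via Theorem~\ref{thm:pwCompleteness} together with a reparameterization, the conjecture turns out to be essentially equivalent to the assertion that space-bounded nondeterminism admits no determinization with a time--space tradeoff better than Savitch's, and improving the lower-bound side of Savitch's theorem is a notorious open problem. The realistic plan therefore has two parts: (i)~check that no known algorithmic technique for LCS comes close to refuting the conjecture, and (ii)~derive it from --- and, after a suitable reparameterization, match it to --- a cleaner and more widely believed complexity assumption. Part~(ii) is, I expect, what the body of the paper establishes.

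For~(i): the textbook dynamic program uses time and space $\Oh(n^k)$; Barsky et al.~\cite{BarskySTU07} lower the space to $\Oh(n^{k-1})$, and the natural Hirschberg-style divide-and-conquer, which gives $\Oh(n)$ space for $k=2$, again bottoms out near $n^{k-1}$ space, since locating an optimal ``split configuration'' forces one to materialize a whole $(k-1)$-dimensional slice of the table. The one generic tool producing genuinely small space --- a Savitch-style recursion on the LCS product DAG $\{0,\dots,n\}^k$, equivalently $\NTiSp{t}{s}\subseteq\DSpace{s\log t}$ applied to the obvious nondeterministic algorithm --- pays for it with running time $n^{\Oh(k\log n)}$, which is \emph{not} of the form $n^{f(k)}$. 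The SETH-based lower bound of Abboud et al.~\cite{AbboudBW15} forbids time $n^{k-\varepsilon}$ but is silent about space, and the known W-hardness results for LCS parameterized by $k$ concern only time, not the simultaneous time--space regime of the conjecture. So nothing currently known is in conflict with it, and a refutation would need a genuinely new idea.

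For~(ii): first the easy inclusion --- LCS with $k$ strings of total length $n$ lies in $\NTiSp{\poly}{\Oh(k\log n)}$, since a nondeterministic machine can guess the common subsequence symbol by symbol while maintaining one $\Oh(\log n)$-bit pointer into each input string, advancing the pointers greedily and counting the length. The sweep order of this computation is path-like, so LCS belongs on the \pwExample{} side of Theorem~\ref{thm:pwCompleteness}. The substantial step is the reverse reduction: using the Cai--Juedes reparameterization --- instantiating the number of strings $k$ so that $k\log n$ plays the role of the width function $s$, tuned to the targeted running time --- one shows that \pwEx{s} reduces to LCS in logspace, so that LCS becomes hard, hence complete, for $\NTiSp{\poly}{s(\poly)}$. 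An algorithm for LCS in time $n^{f(k)}$ and space $f(k)\poly(n)$ would then yield, after the reparameterization, a determinization of $\NTiSp{\poly}{s(\poly)}$ beating the time--space tradeoff of Savitch's theorem --- an inclusion roughly of the form $\NTiSp{\poly}{s}\subseteq\DTiSp{2^{o(s\cdot\log)}}{2^{o(s)}}$ --- which, as Allender et al.~\cite{AllenderCLPT14} argue, would make their rescaled hierarchies non-orthogonal. Hence Conjecture~\ref{conj:lcs} follows from those assumptions, and is in fact equivalent to a rescaled variant of one of them.

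I expect the reverse reduction in~(ii) to be the main obstacle: one must encode \pwEx{s} --- equivalently, an arbitrary $\NTiSp{\poly}{s}$-computation, whose configuration graph has $2^{\Theta(s)}$ vertices --- as an LCS instance with a small number $k=\Theta(s/\log n)$ of polynomially long strings, so that the LCS product DAG faithfully simulates the dynamic-programming DAG of the source, and this must be done with only logspace overhead and with the parameter tracked faithfully throughout. The unconditional conjecture itself I would leave open, for the reasons given at the outset.
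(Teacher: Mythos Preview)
You correctly recognise that Conjecture~\ref{conj:lcs} is left open and that the paper's contribution is to place it in a web of equivalences and implications, not to prove it. Your part~(i) matches the paper's motivation paragraph essentially verbatim.

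Your part~(ii), however, diverges from what the paper actually does, and the divergence matters. You propose to construct a direct logspace reduction from \pwEx{s} to LCS, encoding a $2^{\Theta(s)}$-vertex configuration graph into an LCS instance with $k=\Theta(s/\log n)$ strings, and you flag this as the hard step. The paper sidesteps this entirely: it invokes the completeness result of Elberfeld, Stockhusen and Tantau (Theorem~\ref{thm:lcs-completeness}), which already shows LCS complete for the \emph{parameterized} class $\paraNTiSp{f\poly}{f\log}$ under pl-reductions. The paper's own work is on the other side: it introduces the reparameterized problem \pwExample$^{\log n}$ (parameter $k=s/\lg n$) and proves \emph{that} complete for the same class (Theorem~\ref{thm:pwParaCompleteness}), using only the computation gadget of Lemma~\ref{lem:compGadget}. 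The two completeness results together give the clean equivalence ``Conjecture~\ref{conj:lcs} $\Leftrightarrow$ $\paraNTiSp{f\poly}{f\log}\not\subseteq\paraDTiSp{n^f}{f\poly}$'', and the Cai--Juedes trick is then used (Lemmas~\ref{para-weak-sandwich} and~\ref{para-strong-sandwich}) to translate this into the sandwiching statements about \pwExample of Theorem~\ref{thm:main}.

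So two concrete corrections: first, the ``reverse reduction'' you anticipate as the obstacle is not carried out in this paper---it is imported from~\cite{ElberfeldST15} (building on Guillemot~\cite{Guillemot11}). Second, your target statement ``$\NTiSp{\poly}{s}\subseteq\DTiSp{2^{o(s\cdot\log)}}{2^{o(s)}}$'' is not what the paper derives; the precise equivalent is the parameterized non-containment above, and the unparameterized statements you have in mind appear only as weaker consequences (Lemma~\ref{para-weak-iff}: $\NL\not\subseteq\SC$) or stronger hypotheses (Lemma~\ref{para-strong-iff}) in the hierarchy of Figure~\ref{fig:summary}, not as equivalents.
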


Quite surprisingly, we show that Conjecture~\ref{conj:lcs} is closely related to the question about time-space tradeoffs for algorithms working on path decompositions of bounded width.
\begin{theorem}\label{thm:main}
The following statements hold:
\begin{itemize}
\item Conjecture~\ref{conj:lcs} implies the following: There is no algorithm for \pwExample working in time $2^{\Oh(s)} \poly(n)$ and space $2^{\oeff(s)} \poly(n)$ (for all values of width $s$).
\item Conjecture~\ref{conj:lcs} is implied by the following: There is an unbounded, computable function $g$, for which \pwExample cannot be solved in time $2^{s\cdot g(s)} \poly(n)$ and space $2^{\oeff(s)} \poly(n)$ (for all values of width $s$).
\end{itemize}
\end{theorem}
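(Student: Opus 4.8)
The plan is to pin both \pwExample and LCS to the complexity landscape of non-deterministic machines that simultaneously bound time and working space. By Theorem~\ref{thm:pwCompleteness}, \pwExample of width $s$ sits at $\NTiSp{\poly}{\Oh(s)}$, and the two facts I want to add are: (i) the decision version of LCS on $k$ strings of total length $N$ and maximum length $L\le N$ lies in $\NTiSp{\poly(N)}{\Oh(k\log L)}$, via the obvious algorithm that guesses the common subsequence symbol by symbol while advancing $k$ forward-pointers and a length counter; and (ii) conversely, a width-$s$, $n$-vertex instance of \pwExample reduces, for \emph{any} chosen $k\in\{1,\dots,s\}$, to an LCS instance with $\poly(k)$ strings of length $2^{\Oh(s/k)}\poly(n)$. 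Composing (i) with the standard tableau reduction underlying Theorem~\ref{thm:pwCompleteness} (any $\NTiSp{\poly}{\sigma}$ computation logspace-reduces to \pwExample of width $\Oh(\sigma)$ and polynomial size), an LCS instance with parameters $(k,L,N)$ reduces to a \pwExample instance of width $\Oh(k\log L)$ and size $\poly(N)$. Both implications now follow by chasing these reductions and bookkeeping how $s$ and $k$ convert into each other.

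\emph{First bullet (contrapositive).} Suppose \pwExample is solved in time $2^{\Oh(s)}\poly(n)$ and space $2^{h(s)}\poly(n)$ for some computable $h$ with $h(s)=o(s)$. Reduce an LCS instance to a \pwExample instance of width $s_G=\Oh(k\log L)$ and size $\poly(N)$ and run the assumed algorithm. Its running time is $2^{\Oh(s_G)}\poly(N)=L^{\Oh(k)}\poly(N)\le N^{\Oh(k)}$, i.e. $N^{f_1(k)}$ for a \emph{linear} computable $f_1$; its space is at most $2^{h(\Oh(k\log N))}\poly(N)$, and since $h$ is computable with $h(x)=o(x)$ (indeed $\oeff$), for each fixed $k$ the exponent $h(\Oh(k\log N))$ is effectively $o(\log N)$ and hence bounded by $\psi(k)+\tfrac{1}{2}\log N$ for a computable $\psi$, so the space is $f_2(k)\poly(N)$. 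Then $f=\max(f_1,f_2)$ contradicts Conjecture~\ref{conj:lcs}. This direction produces an \emph{exact} $2^{\Oh(s)}$ bound precisely because the translation $2^{\Oh(s)}\mapsto N^{\Oh(k)}$ costs only a constant in the exponent, and the effectivity hidden in $\oeff$ is exactly what makes the space translate into a computable $f$.

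\emph{Second bullet (contrapositive).} Suppose Conjecture~\ref{conj:lcs} fails, witnessed by an algorithm running in time $N^{f(k)}$ and space $f(k)\poly(N)$ with $f$ computable, and let $g$ be any unbounded computable function. On a width-$s$, $n$-vertex instance of \pwExample: if $s<\log n$ (or $s=\Oh(1)$) the brute-force dynamic program over the path decomposition runs in $\poly(n)$ time and space and meets both targets; otherwise $\log n\le s$ and we use reduction (ii) with $k=k(s)$, feeding the resulting LCS instance, with $\poly(k)$ strings and total length $M=2^{\Oh(s/k)}\poly(n)$, to the assumed algorithm. Its time is $2^{\Oh(f(\poly(k))(\log n+s/k))}\le 2^{\Oh(f(\poly(k))\cdot s)}$ (using $\log n\le s$), which is $\le 2^{sg(s)}\poly(n)$ provided $f(\poly(k(s)))=\Oh(g(s))$; its space is $2^{\log f(\poly(k))+\Oh(s/k)}\poly(n)$, which is $2^{\oeff(s)}\poly(n)$ provided $k(s)\to\infty$ and $\log f(\poly(k(s)))=o(s)$. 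Taking $k(s)$ to be the minimum of the largest $k$ with $f(\poly(k))=\Oh(g(s))$ and the largest $k$ with $\log f(\poly(k))\le\sqrt{s}$ --- both computable and tending to infinity, the former because $g$ is unbounded, the latter because $f$ is computable --- meets all requirements, so Conjecture~\ref{conj:lcs} follows. This direction yields only $2^{sg(s)}$, not $2^{\Oh(s)}$, because the translation $N^{f(k)}\mapsto 2^{\Theta(s)\cdot f(k)/k}$ drags the ratio $f(k)/k$ into the exponent, and for an arbitrary computable $f$ this ratio need not be bounded; without knowing $f$ one can only say it is eventually dominated by some unbounded $g$.

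\emph{The main obstacle} is reduction (ii). The plan is to take a nice path decomposition $B_1,\dots,B_t$, use that a proper $3$-coloring of $G$ amounts to a sequence of bag-colorings $\phi_1,\dots,\phi_t$ consistent along the sweep and proper on within-bag edges, and distribute the vertices among $k$ \emph{tracks} so that each track carries at most $\lceil(s+1)/k\rceil$ active vertices at every position --- possible because the interval graph of the decomposition is perfect, so its $(s+1)$-coloring splits into $k$ classes of that load. Each track gets a string whose subsequences encode exactly the time-indexed sequences of colorings of that track's live vertices that are consistent along the sweep and proper on intra-track edges; this is where the reduction exploits that sweeping a path decomposition is monotone and ``being a subsequence'' is a monotone condition, so every symbol read advances the sweep, and a single track's colorings range over $[3^{\lceil(s+1)/k\rceil}]$, which caps the string length at $2^{\Oh(s/k)}\poly(n)$. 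The remaining edges --- those between distinct tracks --- are checked by $\Oh(k^2)$ extra verification strings, one per pair of tracks, which re-read those two tracks' colorings (tagged by track index so the strings can tell them apart) and enforce properness. Expressing the transition relations between consecutive sweep positions as subsequence sets, and keeping the track strings synchronized with the verification strings, is the delicate part; the construction is in the spirit of the W[1]-hardness reductions for LCS~\cite{BodlaenderDFW95,Pietrzak03} but needs a sharper grip on the number-of-strings versus string-length trade-off than those provide.
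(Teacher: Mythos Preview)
Your approach differs substantially from the paper's, and the difference matters most for the second bullet.

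The paper never constructs a direct reduction from \pwExample to LCS. Instead it goes through the parameterized class $\paraNTiSp{f\poly}{f\log}$ and uses two completeness facts as black boxes: LCS is complete for this class under pl-reductions (Theorem~\ref{thm:lcs-completeness}, due to Elberfeld et al.), and the \emph{reparameterized} problem $\pwExample^{\log n}$ (with parameter $k=s/\lg n$) is also complete for it (Theorem~\ref{thm:pwParaCompleteness}, whose containment half is your fact~(i) and whose hardness half is the tableau reduction you invoke). Hence Conjecture~\ref{conj:lcs} is equivalent to saying that \pwExample admits no algorithm running in time $n^{f(s/\lg n)}$ and space $f(s/\lg n)\poly(n)$ for any computable $f$. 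Both bullets of Theorem~\ref{thm:main} then follow by pure padding in the Cai--Juedes style (Lemmas~\ref{para-weak-sandwich} and~\ref{para-strong-sandwich}): for the second, one pads a width-$s$, size-$n$ instance to size $n'=2^{s/f^{-1}(g(s))}$ and runs the assumed $n^{f(s/\lg n)}$-time, $f(s/\lg n)\poly(n)$-space algorithm on the padded instance.

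Your first bullet is essentially correct and close to Lemma~\ref{para-weak-sandwich}. Your second bullet, however, rests entirely on reduction~(ii), which you yourself flag as ``the main obstacle'' and only sketch. The sketch has real holes: encoding each track's consistent sweep as the set of subsequences of a single string, and more critically \emph{synchronizing} the $k$ track strings with the $\Oh(k^2)$ verification strings so that any common subsequence of the target length selects exactly one coloring per track per time step, is precisely the delicate gadgetry that the W[$t$]-hardness reductions for LCS accomplish --- but as you acknowledge, those reductions do not deliver the sharp $\poly(k)$-strings versus $2^{\Oh(s/k)}$-length tradeoff you claim. You have not proved that such a reduction exists; the paper sidesteps the question entirely.

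The point you are missing is that no explicit \pwExample $\to$ LCS reduction is needed. The trivial containment $\pwExample\in\NTiSp{\poly}{\Oh(s)}$ already places $\pwExample^{\log n}$ in the class for which LCS is hard, so any hypothetical LCS algorithm automatically yields an algorithm for \pwExample with \emph{parameterized} bounds (time $n^{f(s/\lg n)}$, space $f(s/\lg n)\poly(n)$). Converting those parameterized bounds into the target subexponential bounds is then a short padding computation, not a reduction design problem.
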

Thus, Conjecture~\ref{conj:lcs} is sandwiched between a weaker statement that it is impossible to achieve subexponential space complexity while keeping single exponential time complexity, and a stronger statement that this holds even if we allow the time complexity exponent to increase by some (arbitrarily slowly growing) computable function of the width.

To prove Theorem~\ref{thm:main}, we use the results of Elberfeld et al.~\cite{ElberfeldST15} who showed a completeness result for LCS. Viewed from this perspective, Conjecture~\ref{conj:lcs} is equivalent to a statement in parameterized complexity about the impossibility of determinization results improving upon Savitch's theorem (we defer the definitions and discussion to Section~\ref{sec:parameterized}).
Using the ideas of Cai and Juedes~\cite{CaiJ03} that connected subexponential complexity to fixed-parameter tractability, we consider a reparameterized version of \pwExample.
This allows us to compare questions concerning time-space tradeoffs for \pwExample and determinization of $\NTiSp{t}{s}$ classes to those concerning parameterized classes and the complexity of LCS.
In particular, we show that Conjecture~\ref{conj:lcs} implies $\NL\not\subseteq\DTiSp{\poly}{\poly\log}$ (the latter class being usually called $\SC$)
and is implied by a rescaled version of the following stronger variant: $\NL\not\subseteq\DTiSp{2^{o(\log^2 n)}}{n^{o(1)}}$.

\bigskip

In the second part of this work, we complement the findings of Allender et al.~\cite{AllenderCLPT14} by considering the graph parameter {\em{tree-depth}}. For a graph $G$, its tree-depth is equal to the minimum height of a rooted forest $\mathcal{T}$ whose ancestor-descendant closure contains $G$ as a subgraph; the forest $\mathcal{T}$ is then called a {\em{tree-depth decomposition}} of $G$. Tree-depth of a graph is lower bounded by its pathwidth and upper bounded by its treewidth times $\lg n$. Our motivation for considering this parameter is two-fold. First, recent advances have uncovered a wide range of topics where tree-depth appears naturally. For instance, tree-depth plays an important role in the theory of sparse graphs developed by Ne\v{s}et\v{r}il and Ossona de Mendez~\cite{sparsity}, is the key factor in classification of homomorphism problems that can be solved in logspace~\cite{ChenM14}, and corresponds to classes of graphs where the expressive power of First-Order logic and Monadic Second-Order logic coincides~\cite{ElberfeldGT12}. It was also rediscovered several times in different contexts and under different names: {\em{minimum elimination tree height}}~\cite{pothen}, {\em{ordered chromatic number}}~\cite{KatchalskiMS95}, {\em{vertex ranking}}~\cite{BodlaenderDJKKMT98}, or the maximum number of introduce nodes on a root-to-leaf path in a tree decomposition of a graph~\cite{FurerY14}.

Second, algorithms working on tree-depth decompositions of small height model generic exponential-time Divide\&Conquer algorithms. In this approach, after finding a small, usually balanced separator $S$ in the graph, the algorithm tries all possible ways a solution can interact with $S$, and solves connected components of $G-S$ recursively. A run of such an algorithm naturally gives rise to a tree-depth decomposition of the graph, where $S$ is placed on top of the decomposition, and decompositions of the components of $G-S$ are attached below it as subtrees. The maximum total number of separator vertices handled at any moment in the recursion corresponds to the height of the decomposition. Thus, many classic Divide\&Conquer algorithms, including the ones derived for planar graphs using the Lipton-Tarjan separator theorem~\cite{LiptonT80}, can be reinterpreted as first building a tree-depth decomposition of the graph using a separator theorem, and then running the algorithm on this decomposition.

Most importantly for us, when implemented using recursion, the algorithms working on tree-depth decompositions run in polynomial space. For instance, such an algorithm for {\textsc{3Coloring}} on a tree-depth decomposition of depth $s$ runs in time $3^s\cdot \poly(n)$ and space $\Oh(s+\log n)$ (see Lemma~\ref{lem:deter-3col}), which places \tdExample{}$[s]$ in $\DTiSp{2^{\Oh(s)}\poly}{s+\log}=\DSpace{s+\log n}$. This is immediate for CSP-like problems like {\textsc{3Coloring}}, but recently F\"urer and Yu~\cite{FurerY14} showed that algebraic transforms can be used to reduce the space usage to polynomial in $n$ also for other problems, like counting perfect matchings or dominating sets. In Section~\ref{sec:domset}, we describe how this approach gives an $3^s\cdot \poly(n)$-time $\poly(n)$-space algorithm for {\textsc{Dominating Set}} in more detail, and then improve the space usage to $\Oh(s \cdot \log n)$ using Fourier transforms and the Chinese remainder theorem. This means that the reduction of space complexity that is conjectured to be impossible for treewidth and pathwidth, actually is possible for tree-depth. Therefore, we believe that it is useful to study the model of computations standing behind low tree-depth decompositions, in order to understand how it differs from the models for treewidth and pathwidth. 

Consequently, mirroring Theorem~\ref{thm:pwCompleteness}, we prove that computations on tree-depth decompositions exactly correspond to the class $\NAuxSA{\poly}{\log}{s}$: problems which can decided by a non-deterministic Turing Machine that uses polynomial time and logarithmic working space, but also has access to an auxiliary stack of maximum height $s$. The stack can be freely read by the machine, just like the input tape, but the write access to it is only via \texttt{push} and \texttt{pop} operations.
\begin{theorem}\label{thm:main-td}
Let $s(n) \geq \log^2 n$ be a nice function.
Then \tdExample{}$[s]$ is complete for\linebreak $\NAuxSA{\poly}{\log}{s(\poly)}$ under logspace reductions.
\end{theorem}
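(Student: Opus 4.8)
The plan is to prove the two directions of the completeness claim. Following Theorem~\ref{thm:pwCompleteness}, I take an instance of \tdEx{s} to be a graph $G$ on $n$ vertices together with a tree-depth decomposition $\mathcal{T}$ of $G$ of height at most $s(n)$, and throughout I would invoke Lemma~\ref{lem:equivalenceOfProblems} to move freely between \exampleProblem and its equivalent \textsc{Sat} formulation in the gadgetry.

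\emph{Membership.} To show that \tdEx{s} lies in $\NAuxSA{\poly}{\log}{s(\poly)}$, I would run the nondeterministic counterpart of the recursive algorithm behind Lemma~\ref{lem:deter-3col}: instead of branching over the three colors available at a vertex, guess one. The machine carries out a depth-first traversal of $\mathcal{T}$, keeping the current vertex on the $\Oh(\log n)$-bit work tape and maintaining on the auxiliary stack one constant-size cell per vertex of the current root-to-node path, holding the color guessed for that vertex; descending to a child pushes a freshly guessed color and backtracking pops, so the stack height always equals the current depth and therefore never exceeds the height of $\mathcal{T}$, which is at most $s(n)\le s(\poly(n))$. When a vertex $v$ is reached, its only edges running ``upward'' in $\mathcal{T}$ are, by definition of a tree-depth decomposition, those joining $v$ to its ancestors; these are checked by walking up from $v$ along the parent pointers stored in the input while simultaneously scanning the stack downwards, rejecting the branch as soon as some ancestor that is a neighbor of $v$ carries the same color. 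After all trees of $\mathcal{T}$ have been processed, the machine accepts. This runs in polynomial time and $\Oh(\log n)$ work space with a readable stack of height $s(\poly(n))$, and it has an accepting run precisely when $G$ admits a proper $3$-coloring.

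\emph{Hardness.} For the converse inclusion, fix a language $L$ decided by a nondeterministic machine $M$ in time $p(n)$, work space $\Oh(\log n)$ and auxiliary-stack height $s(q(n))$ for polynomials $p$ and $q$. On input $x$ I would construct, in logspace, a graph $G_x$ together with a tree-depth decomposition $\mathcal{T}_x$ such that $G_x$ is $3$-colorable iff $M$ accepts $x$, with the height of $\mathcal{T}_x$ bounded by $\Oh(s(q(n))+\log^2 n)$, which equals $\Oh(s(q(n)))$ by the hypothesis $s(n)\ge\log^2 n$ and the niceness of $s$. The gadgets encoding the state/work-tape/head part of a configuration of $M$ and enforcing that consecutive configurations are linked by a transition are taken over from the constructions of Allender et al.~\cite{AllenderCLPT14} for \pwEx{s} and \twEx{s}; what is new is the shape of $\mathcal{T}_x$. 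Its skeleton is the laminar family of \emph{stack blocks} of the computation --- the time interval between a push to a given stack level and its matching pop --- which forms a forest of depth at most $s(q(n))$, and along each root-to-node path every block contributes a single constant-size gadget carrying the constant-alphabet symbol that was pushed there. Thus the portion of the stack contents that a step may inspect is precisely the sequence of pushed symbols along the ancestor path to that step, which is exactly what a readable stack of bounded height makes available; this is the feature that pins the class to tree-depth, paralleling how a plain tape and a top-only push-down pin the pathwidth and treewidth classes of Theorem~\ref{thm:pwCompleteness}. Inside one block the computation is a straight-line segment at a fixed stack level interleaved with its sub-blocks; although this segment can have polynomial length, it is not laid out as a long path but verified by a Savitch-style divide-and-conquer over time, contributing $\Oh(\log n)$ nested levels of guessed intermediate configurations of $\Oh(\log n)$ vertices each, that is, $\Oh(\log^2 n)$ to the total depth.

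\emph{The main obstacle.} The crux, where I expect most of the effort to go, is to interleave this divide-and-conquer with the laminar skeleton so that its cost is paid essentially \emph{once}, additively on top of the depth-$\le s(q(n))$ skeleton, rather than once per stack level: the naive arrangement spends $\Oh(\log n)$ per level and hence $\Oh(s(q(n))\cdot\log n)$ in total, which already contradicts completeness when $s(n)=\log^2 n$. What makes the additive bound reachable is that each stack cell holds only $\Oh(1)$ bits, so no full configuration of $M$ ever needs to be copied into a node of $\mathcal{T}_x$ --- only the per-level symbols do. The remaining work is routine but delicate: routing the continuation of a block past each of its sub-blocks (the edge leading from a pop back into the parent block) without creating adjacencies between incomparable nodes of $\mathcal{T}_x$, checking that the whole construction is logspace-computable, and verifying that $\mathcal{T}_x$ is indeed a tree-depth decomposition of $G_x$. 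With the gadgets in place, the equivalence ``$M$ has an accepting run on $x$'' $\Leftrightarrow$ ``$G_x$ has a proper $3$-coloring'' then follows gadget by gadget, just as in Allender et al.
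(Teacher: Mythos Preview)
Your membership argument is essentially the paper's (Lemma~\ref{lem:tdContain}). The hardness direction diverges from the paper and leaves a genuine gap exactly where you flag the ``main obstacle''. You keep the push-pop tree at its native depth $\le s$, hang a constant-size stack-symbol gadget on each level, and verify the computation inside each block by Savitch-style halving contributing $\Oh(\log^2 n)$. As you note, nesting this across the $s$ stack levels costs $\Oh(s\cdot\log n)$ and you need the Savitch cost to become additive; but the hint you offer (``each stack cell holds only $\Oh(1)$ bits'') does not address the structural problem. At a node of the push-pop tree with children $c_1,\dots,c_k$, the $\Oh(\log n)$-bit boundary configurations that thread the parent-level computation through the sub-blocks must all be pairwise comparable in the tree-depth decomposition, yet the sub-blocks themselves sit in pairwise incomparable subtrees. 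Pushing the boundary configurations down into each child's subtree leaves the segment between $c_i$ and $c_{i+1}$ with its two endpoints at incomparable positions; hoisting all of them to the parent costs $\Oh(k\log n)$ of depth there, and $k$ may be polynomial. Your proposal gives no mechanism for this, and the observation about $\Oh(1)$-bit cells is orthogonal to it.

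The paper's route is to \emph{reshape} the push-pop tree rather than interleave Savitch into it. It first observes (Theorem~\ref{thm:space-increase}) that one may enlarge the working space from $\Oh(\log n)$ to $\Oh(s/\log n)$ without changing the class; the machine is then made to push and pop in blocks of size $\lceil s/\lg n\rceil$ and, via the Akatov--Gottlob embedding (Lemma~\ref{lem:treeEmbedding}), regularized so that the push-pop tree becomes a \emph{full binary tree of depth $\Oh(\log n)$} (Lemma~\ref{lem:tdRegularize}). Now every node has exactly two children and only a bounded number of boundary configurations, each of size $\Oh(s/\log n)$; the skeleton contributes depth $\Oh(\log n)\cdot\Oh(s/\log n)=\Oh(s)$, and each computation gadget between stack operations has tree-depth $\Oh((s/\log n)\cdot\log(\poly)+h)=\Oh(s)$ by Lemma~\ref{lem:compGadget}. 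In effect the paper swaps your ``depth $s$, width $\Oh(1)$ per level'' for ``depth $\Oh(\log n)$, width $\Oh(s/\log n)$ per level'', and it is this transposition---not a sharing trick for Savitch---that makes the arithmetic close.
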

Thus, computations on tree-depth and path decompositions differ by the access restrictions to $\Oh(s)$ space used by the machine. While for pathwidth this space can be accessed freely, for tree-depth all of the space apart from an $\Oh(\log n)$ working buffer has to be organized in a stack.

The proof of Theorem~\ref{thm:main-td} largely follows the approach of Akatov and Gottlob~\cite{AkatovG10}, who proved a different completeness result for the class $\NAuxSA{\poly}{\log}{\log^2}$, which they call $\cfont{DC}^1$.
The main idea is to regularize the run of the machine so that the push-pop tree has the rigid shape of a full binary tree. Then we can use this concrete structure to ``wrap around'' gadgets encoding an accepting run of a regularized NAuxSA machine.
However, the motivation in the work of Akatov and Gottlob was answering conjunctive queries in a hypergraph by exploiting a kind of balanced decomposition, and hence the problem proven to be complete for $\cfont{DC}^1$ is a quite general and expressive problem originating in database motivations; see~\cite{akatov2010exploiting,AkatovG10} for details. 
In our setting, in order to get a reduction to \exampleProblem, we need to work more to encode an accepting run. In particular, to encode each part of the computation where no push or pop is performed, instead of producing a single atom in a conjunctive query, we use computation gadgets that originate in Cook's proof of the NP-completeness of SAT.
The assumption that the computation has a polynomial number of steps is essential here for bounding the tree-depth of each such gadget.
This way, Theorem~\ref{thm:main-td} presents a more natural complete problem for $\cfont{DC}^1$.

Another difference is that Theorem~\ref{thm:main-td} works for any well-behaved function $s(n)\geq \log^2 n$, as opposed to the bound $s(n)=\log^2 n$ inherent to the problem considered by Akatov and Gottlob. 
For this, the crucial new idea is to increase the working space of the machine to $s(n)/\log n$ in order to be able to perform regularization -- a move that looks dangerous at first glance, but turns out not to increase the expressive power of the computation model. This proves the following interesting by-product of our work.

\begin{theorem}\label{thm:space-increase}
Let $s(n)\geq \log^2 n$ be a nice function.
Then $$\NAuxSA{\poly}{\log}{s(\poly)} = \NAuxSA{\poly}{s(\poly)/\log}{s(\poly)}.$$
\end{theorem}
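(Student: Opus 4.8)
The inclusion ``$\subseteq$'' is immediate: since $s(n)\ge\log^2 n$ and $s$ is nice, $s(\poly(n))/\log n\ge\log n$, so any machine witnessing membership in $\NAuxSA{\poly}{\log}{s(\poly)}$ is already a machine of the form required for $\NAuxSA{\poly}{s(\poly)/\log}{s(\poly)}$ --- it simply leaves the extra working cells unused. Everything is in the reverse inclusion. A direct simulation --- storing the enlarged working tape of a $\NAuxSA{\poly}{s(\poly)/\log}{s(\poly)}$ machine inside the auxiliary stack --- is frustrated by the restriction that the stack admits writes only via \texttt{push} and \texttt{pop}, so the plan is instead to route the reverse inclusion through the complete problem $\tdEx{s}$.

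First I would record two auxiliary facts. (i) $\NAuxSA{\poly}{\log}{s(\poly)}$ is closed under logspace reductions: given $L\lred L'$ with $L'$ in the class, run the machine for $L'$ while recomputing, on demand, each queried bit of the reduced instance by the logspace reduction; since the reduced instance has polynomial length and $s$ absorbs $\poly\circ\poly=\poly$, the composite machine still works in polynomial time, $\Oh(\log n)$ working space, and stack height $s(\poly(n))$. (ii) $\tdEx{s}\in\NAuxSA{\poly}{\log}{s(\poly)}$, which is the membership half of Theorem~\ref{thm:main-td}: the non-deterministic machine does a DFS of the given tree-depth decomposition, keeping on the stack the colours guessed along the current root-to-leaf path (height at most the depth of the decomposition, hence at most $s(|G|)\le s(\poly(|G|))$) and all DFS bookkeeping in the $\Oh(\log n)$ buffer, checking each edge against the relevant ancestor, which is visible on the stack. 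By (i) and (ii) it suffices to exhibit a logspace reduction from every language in $\NAuxSA{\poly}{s(\poly)/\log}{s(\poly)}$ to $\tdEx{s}$.

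Such a reduction is precisely the hardness construction behind Theorem~\ref{thm:main-td}, the key point being that it already accepts an input machine with the \emph{enlarged} working space. Given a machine $M$ running in polynomial time with working space $w:=s(\poly(n))/\log n$ and stack height $h:=s(\poly(n))$, I would (a) regularise the run of $M$ --- a step permitted to use up to $w$ working cells, which is exactly why the working space is inflated --- so that the push-pop tree attains a rigid, fully predictable shape; (b) build in logspace a graph $G$ together with a tree-depth decomposition $\mathcal{T}$ encoding an accepting regularised run, realising each maximal segment of the run during which the stack is untouched by a Cook--Levin-style computation tableau, and chaining and branching these tableaux along the push-pop tree so that each \texttt{push}/\texttt{pop} corresponds to entering/leaving a subtree; (c) bound $\mathrm{depth}(\mathcal{T})=\Oh(s(\poly(n)))$, padding $G$ by a constant power of its size so that $\mathrm{depth}(\mathcal{T})\le s(|G|)$, as the definition of $\tdEx{s}$ demands. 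Running the algorithm of (ii) on $(G,\mathcal{T})$ then decides the language of $M$, closing the cycle.

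The main obstacle is step (c): arguing that assembling the per-segment tableaux along the push-pop tree does \emph{not} multiply tree-depth. The naive fear is an extra $\log n$ factor, or a factor equal to the depth of the push-pop tree, since a single $\poly(n)\times(\text{few cells})$ tableau can already have tree-depth $\Theta(w\log n)=\Theta(s(\poly(n)))$. The resolution is to exploit that $M$'s working tape persists across \texttt{push} and \texttt{pop}: along any one root-to-leaf path of the push-pop tree the consecutive segments continue one another on the same working tape, so the whole path contributes a single $\poly(n)\times w$ tableau of total length $\poly(n)$, whose tree-depth is $\Oh(w\log n)=\Oh((s(\poly(n))/\log n)\cdot\log n)=\Oh(s(\poly(n)))$ --- here the polynomial time bound on $M$ is exactly what keeps the logarithmic factor at $\Oh(\log n)$ --- and the subtrees hanging off this path are attached beneath it with no further growth. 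This is also the technical heart of Theorem~\ref{thm:main-td}; the only new ingredient needed for Theorem~\ref{thm:space-increase} is the observation that the construction never used logspace-ness of $M$, and that working space $s(\poly)/\log$ is precisely the largest budget under which the bound $\Oh(s(\poly))$ on the tree-depth of the segment tableaux survives.
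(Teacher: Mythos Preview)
Your proposal is correct and follows exactly the paper's route: the trivial inclusion plus the cycle $\tdEx{s}\in\NAuxSA{\poly}{\log}{s(\poly)}\subseteq\NAuxSA{\poly}{s(\poly)/\log}{s(\poly)}\lred\tdEx{s}$, using regularisation and the Cook-style construction of Lemma~\ref{lem:tdHardness}. One inaccuracy worth flagging: your account of step~(c) --- merging all segments along a root-to-leaf path into ``a single $\poly(n)\times w$ tableau'' --- is not how the depth bound is actually obtained (those segments are not contiguous in time, and stacking full tableaux along the path would overshoot); in the paper, only the $\Oh(w)$-sized configuration and stack blocks are placed along the $\Oh(\log n)$-deep push-pop tree (giving skeleton depth $\Oh(w\log n)=\Oh(s)$), while each computation gadget, of tree-depth $\Oh(w\log n + h)=\Oh(s)$ by Lemma~\ref{lem:compGadget}, is hung as a separate subtree beneath its node.
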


The following determinization result for NAuxSA machines follows from the observation that \tdExample{}$[s]$ can be solved in $\DTiSp{2^{\Oh(s)}\poly}{s+\log}=\DSpace{s+\log n}$.

\begin{theorem}\label{thm:determinization}
Let $s(n)\geq \log^2(n)$ be a nice function. Then $$\NAuxSA{\poly}{\log}{s(\poly)}\subseteq \DSpace{s(\poly)}.$$
\end{theorem}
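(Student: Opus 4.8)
The plan is to obtain this statement as an essentially immediate corollary of the completeness result Theorem~\ref{thm:main-td} together with a space-efficient deterministic algorithm for the complete problem \tdEx{s}; the only genuine work is to push the space bound back through the logspace reduction.

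First I would recall the deterministic algorithm for \tdEx{s} underlying Lemma~\ref{lem:deter-3col}. Given a graph $G$ on $N$ vertices with a tree-depth decomposition $\mathcal{T}$ of depth at most $s(N)$, recall that every edge of $G$ joins a node of $\mathcal{T}$ to one of its ancestors, so the neighbours of a vertex $v$ lying outside the subtree of $\mathcal{T}$ rooted at $v$ are all ancestors of $v$. Hence a proper $3$-colouring can be searched for by a plain backtracking/Divide\&Conquer recursion over the rooted forest $\mathcal{T}$: descend through $\mathcal{T}$ in DFS order, at each node try the (at most) $3$ colours, discard a colour as soon as it conflicts with a colour already fixed on an ancestor, and accept once every node has been coloured. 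At any moment the only data maintained are one colour symbol per ancestor on the current root-to-node path --- an array of at most $s(N)$ entries, hence $\Oh(s(N))$ bits --- together with $\Oh(\log N)$ bits to navigate $\mathcal{T}$ (a pointer to the current node, plus scratch for following parent/child/sibling links of the given decomposition and for testing adjacency in $G$). The running time is $3^{s(N)}\cdot\poly(N)$, so this places \tdEx{s} in $\DTiSp{2^{\Oh(s)}\poly}{s+\log}$, and in particular in $\DSpace{s(N)+\log N}$.

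Next I would take an arbitrary language $L\in\NAuxSA{\poly}{\log}{s(\poly)}$ and invoke Theorem~\ref{thm:main-td} to obtain a logspace reduction $f$ from $L$ to \tdEx{s}: on input $x$ of length $n$, $f$ produces (in space $\Oh(\log n)$) an instance $f(x)$ of \tdEx{s} of size $N=\poly(n)$, whose tree-depth decomposition has depth at most $s(N)$. To decide $x\in L$ I would run the algorithm of the previous paragraph on $f(x)$; since $f(x)$ is too large to store, each requested bit of $f(x)$ is recomputed on demand by re-running $f$, which costs only $\Oh(\log n)$ extra bits for $f$'s work tape and for a position counter into $f(x)$. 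The total workspace is therefore $\Oh\bigl(s(N)+\log N\bigr)=\Oh\bigl(s(\poly(n))+\log n\bigr)$, and since $s(n)\geq\log^2 n$ (inherited from Theorem~\ref{thm:main-td}) forces $s(\poly(n))=\omega(\log n)$, this is $\Oh\bigl(s(\poly(n))\bigr)$. Hence $L\in\DSpace{s(\poly)}$; niceness of $s$ is used only to make $\DSpace{s(\poly)}$ a well-defined space bound and to legitimise the composition.

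Since this is a corollary rather than a theorem with a hard core, no single step is a real obstacle; the points needing care are (i) checking that the recursion for \tdEx{s} fits in $\Oh(s+\log N)$ space rather than $\Oh(s\cdot\log N)$ --- this hinges on representing the state of each recursive level by a single colour symbol and never storing a per-level pointer, navigating $\mathcal{T}$ with one reusable node pointer instead --- and (ii) the standard but fiddly fact that a logspace reduction composes with a $\DSpace{g}$ algorithm into a $\DSpace{g(\poly)+\log}$ algorithm, together with the remark that the additive $\log$ is swallowed precisely because $s\geq\log^2$. It is worth noting why routing through \tdEx{s} is what delivers $\DSpace{s}$ rather than $\DSpace{s^2}$: the auxiliary stack of an NAuxSA machine contributes $\Theta(s)$ bits to a configuration, so treating such a machine as an ordinary $\NSpace{s+\log n}$ machine and applying Savitch's theorem~\cite{savitch1970relationships} would give only $\DSpace{(s+\log n)^2}$; the improvement to $\DSpace{s}$ comes exactly from the fact that the $\Oh(s)$ extra space is stack-organised, which the tree-depth decomposition of $f(x)$ records faithfully and the recursion on $\mathcal{T}$ consumes "for free".
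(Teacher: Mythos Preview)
Your proposal is correct and follows essentially the same route as the paper: invoke Theorem~\ref{thm:main-td} to reduce an arbitrary language in $\NAuxSAplain{\poly}{\log}{s(\poly)}$ to \tdEx{s} via a logspace reduction, apply the $\Oh(s+\log N)$-space deterministic algorithm of Lemma~\ref{lem:deter-3col}, and use closure of $\DSpace{s(\poly)}$ under logspace reductions. Your added remarks---on why each recursion level needs only a constant-size record (rather than a per-level node pointer), and on why Savitch applied naively would only yield $\DSpace{(s+\log n)^2}$---are accurate elaborations but not part of a different argument.
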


Theorem~\ref{thm:determinization} for $s(n)=\log^2 n$ also follows from the work of Akatov and Gottlob~\cite{AkatovG10}. Observe that now the justification for the assumption $s(n)\geq \log^2 n$ becomes apparent: for, say, $s(n)=\log n$, the theorem would state that $\cfont{L}=\cfont{NL}$, a highly unexpected outcome.

We find Theorem~\ref{thm:determinization} interesting, because a naive simulation of the whole configuration space for NAuxSA would require space exponential in $s$. It appears, however, that the exponential blow-up of the space complexity can be avoided. We do not see any significantly simpler way to prove this result other than going through the \tdEx{s} problem, and hence it seems that the tree-depth view gives a valuable insight into the computation model of NAuxSA.

The classic relations between treewidth, pathwidth and tree-depth are, through completeness results, mirrored in a hierarchy between NAuxPDA, N, and NAuxSA classes, as detailed in the concluding section.
In particular, this answers a question of Akatov and Gottlob~\cite{akatov2010exploiting,AkatovG10} about the relation of $\NAuxSA{\poly}{\log}{\poly \log}$ to other classes in $\cfont{NP}$.

Finally, using Theorem~\ref{thm:main-td} we also give an alternative view on NAuxSA computations using alternating Turing machines in Theorem~\ref{thm:alternation}, answering another question of Akatov and Gottlob.
From this point of view, Theorem~\ref{thm:determinization} is immediate.

\paragraph*{Outline.} In Section~\ref{sec:prelims} we give preliminaries, in Section~\ref{sec:parameterized} we prove Theorem~\ref{thm:main} and related results, 
and in Section~\ref{sec:treedepth} we prove Theorem~\ref{thm:main-td} and derive corollaries from it (Theorems~\ref{thm:space-increase} and~\ref{thm:determinization}).
Section~\ref{sec:domset} describes the algorithm for \textsc{Dominating Set} in graphs of low tree-depth. 
We finish by concluding remarks in Section~\ref{sec:conc}.
The discussion of how standard NP-hardness reductions preserve parameters linearly is deferred to Appendix~\ref{app:preserve}.

\section{Preliminaries}\label{sec:prelims}
\subsection{Reductions and complexity classes}
For two languages $P,Q$, we write $P \lred Q$ when $P$ is logspace reducible to $Q$.
Most of the complexity classes we consider are closed under logspace reductions.

Because we handle various measures of complexity and compare a wide array of classes that bound two measures simultaneously, we introduce the following notation.
A complexity class is first described by the machine model: \cfont{D}, \cfont{N}, \cfont{A} denote deterministic, non-deterministic, and alternating (see~\cite{Ruzzo80}) Turing machines, respectively.
Then, in square brackets, bounds on complexity measures are described (up to constant factors) as a list of functions with the name of the measure it bounds underneath.
All functions except the symbol $f$ (which we reserve for classes in parameterized complexity) are functions of the input size $n$.
For example, for $t,s:\N\to \N$, $\DSpace{s}$ denotes the class of languages recognizable by deterministic Turing machines using at most $\Oh(s(n))$ space, usually known as DSpace$(s(n))$;
similarly $\NTiSp{t}{s}$ is often denoted NTiSp$(t(n),s(n))$.
We write $\lg$ for the logarithm with base 2, $\log(n)$ when the base is irrelevant and $\poly(n)$ for $n^{\Oh(1)}$.
As customary for the $\Oh$-notation, a complexity class stated with a bound that is a family of functions (instead of a single function) is defined as the sum of classes over all functions in the family.
For example, $\DTime{\poly} = \bigcup_{k\in\N} \DTime{n^k} = \cfont{P}$ and $\NSpace{\log}=\NL$.

An auxiliary push-down or stack is denoted as AuxPDA or AuxSA, respectively: the difference is that a push-down can only be read at the top, while a stack can be read just as a tape (both can be written to only by pushing and popping symbols at the top), see e.g.~\cite{VinayC90}.
The measure named \emph{height} is the maximum height of the push-down or stack.

We say a function $s:\mathbb{N} \to \mathbb{N}$ is \emph{constructible} if there is a Turing machine which given a number $n$ in unary outputs $s(n)$ in unary using logarithmic space; in particular, this implies $s(n)\leq \poly(n)$.
A function $s$ is \emph{nice} if it is constructible and $\frac{s(n)}{\lg n}$ is non-decreasing; note that this implies that $s(n)$ itself is also non-decreasing.
For simplicity, we will assume all functions $s: \N \to \N$ describing complexity bounds to be nice.

Note that logspace reductions can blow-up instance sizes polynomially, hence the closure of $\NTiSp{\poly}{s}$ under such reductions is $\NTiSp{\poly}{s(\poly)}$, for example.
These are equal for functions $s(n)$ such that $s(\poly(n)) \leq \Oh(s(n))$ (that is, if for every $c>0$ there is a $d>0$ such that $s(n^c)\leq d s(n)$).
This includes $\lg^k(n)$ for any $k\geq 1$ and $\lg n \lg\lg n$, for example; however, one can construct artificial examples that are polylogarithmically bounded but fail to have this property.

\subsection{Structural parameters}
We recall the definitions of treewidth, pathwidth and tree-depth.
For conciseness, we will refer to the certifying structures defined below as \emph{decompositions} for all three parameters.

\newcommand{\T}{\mathcal{T}}

\begin{definition}[treewidth]
A \emph{tree decomposition} of an undirected graph $G$ is a tree $\T$ together with a collection of sets of vertices of $G$ (called \emph{bags}) $X_t$ indexed by nodes $t\in\T$, such that:
\begin{itemize}[itemsep=0pt]
\item every vertex of $G$ is in at least one bag;
\item for every edge $uv$ of $G$, there is a bag containing both $u$ and $v$; and
\item for every vertex $v$ of $G$, the set $\{t\in\T \mid v \in X_t\}$ induces a connected subtree of $\T$.
\end{itemize}
The width of a tree decomposition is defined as $\max_{t\in\T} |X_t| -1$.
The \emph{treewidth} of $G$ is the minimum width over all possible tree decompositions of $G$.
\end{definition}

\begin{definition}[pathwidth]
A \emph{path decomposition} of an undirected graph $G$ is a tree decomposition $(\T,(X_t)_{t\in\T})$ in which $\T$ is a path.
The \emph{pathwidth} of $G$ is the minimum width over all possible path decompositions of $G$.
\end{definition}

\begin{definition}[tree-depth]
A \emph{tree-depth decomposition} of an undirected graph $G$ is a rooted forest $\T$ (a disjoint union of rooted trees) together with a bijective mapping $\mu$ from the vertices of $G$ to the nodes of $\T$, such that for every edge $uv$ of $G$, $\mu(u)$ is an ancestor of $\mu(v)$ or $\mu(v)$ is an ancestor of $\mu(u)$ in $\T$.
The \emph{depth} of a rooted forest is the largest number of nodes on a path between a root and a leaf.
The \emph{tree-depth} of $G$ is the minimum depth over all possible tree-depth decompositions of $G$.
\end{definition}

For technical reasons, we assume that in all given tree and path decompositions, $|\T| \leq 2|V(G)|^2$;
standard methods allow to prune any decomposition to this size in logspace, see~\cite[Lemma 13.1.2]{Kloks94}.

\medskip

For a graph problem, such as \exampleProblem, a structural parameter $\pi\in\{\td,\pw,\tw\}$, and a nice function $s:\N\to\N$, we define $\pi$-\exampleProblem{}$[s]$ to be the decision problem where given an instance $G$ of \exampleProblem and a $\pi$-decomposition of $G$, we ask whether the decomposition has width at most $s(|V(G)|)$ and $G$ is a yes-instance of \exampleProblem.
The assumption that a decomposition is given on input is to factor away the complexity of finding it, which is a problem not directly relevant to our work. Note that the validity and width/depth of a decomposition given in any natural encoding can easily be checked in logarithmic space.

Observe also that for any $c>0$, $\pi$-\exampleProblem{}$[s(n)]$ is equivalent to $\pi$-\exampleProblem{}$[s(n^c)]$ under logspace reductions. Namely, a reduction from $\pi$-\exampleProblem{}$[s(n)]$ to $\pi$-\exampleProblem{}$[s(n^c)]$ is trivial, while the reverse reduction follows easily by padding: adding isolated vertices up to size $n^c$ that do not change the answer nor the value of $\pi$.
Since we assume $s$ to be nice, we have $\frac{s(n)}{\lg n} \leq \frac{s(n^c)}{\lg n^c}$, hence $c\cdot s(n) \leq s(n^c)$ for any $c\geq 1$. This implies that $\pi$-\exampleProblem{}$[c\cdot s(n)]$ is equivalent to $\pi$-\exampleProblem{}$[s(n)]$.

\medskip

A hierarchy between these classes immediately follows from well-known inequalities between the parameters $\td$, $\pw$, $\tw$, shown in the next lemma and corollary.
Every tree-depth decomposition can be turned into a path decomposition by taking bags to be the vertex sets of all the root-to-leaf paths, and ordering them as in a left-to-right scan of the tree.
Every path decomposition is trivially a tree decomposition.
Every tree decomposition allows to find small separators, which can be used to recursively build a tree-depth decomposition, yielding the last inequality---details can be found e.g. in~\cite[Corollary 2.5]{NesetrilM06}, we show how to execute them effectively below for completeness.

\begin{lemma}\label{lem:td_pw_tw}
	There is a constant $c\in\mathbb{N}$ such that for any graph $G$,
	$\td(G) \geq \pw(G) \geq \tw(G) \geq \td(G) /(c\cdot \log |V(G)|)$.
	Furthermore, each inequality is certified by an algorithm that transforms the respective graph decompositions in logspace.
\end{lemma}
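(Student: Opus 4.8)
The plan is to establish the three inequalities separately, each time exhibiting an explicit logspace transformation between the corresponding decompositions; the numeric constant $c$ will be extracted from the last (and only non-trivial) one.

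\textbf{From tree-depth to pathwidth.} Given a tree-depth decomposition $(\T,\mu)$ of depth $d$, I would define the path decomposition as follows. Order the leaves of $\T$ by a fixed left-to-right (depth-first) traversal; for the $i$-th leaf $\ell_i$, let the bag $Y_i$ be $\mu^{-1}$ of the set of nodes on the root-to-$\ell_i$ path in $\T$. Arrange these bags along a path in the leaf order. Each bag has at most $d$ vertices, so the width is at most $d-1 \le \td(G)-1$. One has to check the three axioms: every vertex lies on some root-to-leaf path, so it is in some bag; every edge $uv$ of $G$ has $\mu(u),\mu(v)$ in an ancestor-descendant relation, hence both lie on any root-to-leaf path through the deeper of the two, so they share a bag; and connectivity of the bags containing a fixed $v$ follows because the leaves whose root-paths pass through $\mu(v)$ form a contiguous interval in the depth-first leaf order (this is the one place needing a short argument, using that the subtree rooted at $\mu(v)$ occupies a contiguous block of leaves). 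All of this is computable in logspace: the traversal order of leaves and the set of ancestors of a given node in a forest are both logspace-computable from a standard encoding.

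\textbf{From pathwidth to treewidth.} This is immediate: a path decomposition \emph{is} a tree decomposition with $\T$ a path, so the identity transformation works, and $\tw(G)\le\pw(G)$.

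\textbf{From treewidth back to tree-depth.} This is the substantive inequality and the main obstacle. Given a tree decomposition of width $w$, I would follow the standard recursive separator argument: a tree decomposition of width $w$ yields, for any vertex subset, a balanced separator of size $\le w+1$, so one recursively picks such a separator $S$, places its $\le w+1$ vertices on top of the tree-depth decomposition as a path, and recurses into the connected components of $G-S$, each of which has at most, say, $\tfrac12|V(G)|$ vertices (or a similar constant fraction). The recursion depth is $\Oh(\log|V(G)|)$ and each level contributes at most $w+1$ vertices to any root-to-leaf path, giving depth $\Oh((w+1)\log|V(G)|) = \Oh(\tw(G)\cdot\log|V(G)|)$, which is exactly the claimed bound for a suitable constant $c$. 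The delicate point is the \textbf{logspace} implementation: one cannot store the recursion stack, whose depth is logarithmic but whose frames name separators of size up to $w$ — that would be $\Oh(w\log n)$ bits, too much in general since $w$ is unbounded. The fix is the usual reentrant/oracle-style recomputation trick: to output the node of the tree-depth decomposition assigned to a given vertex $v$, one recomputes the sequence of separators encountered on the branch of the recursion containing $v$ from scratch, at each level only needing to remember a logarithmic-size pointer describing "which component" and recomputing the separator of the current subgraph on demand (a separator of small size in a given tree decomposition can be found in logspace by the folklore centroid-edge argument). Here it is important that $\ex--$ since we only need to certify the \emph{inequality} (produce \emph{some} tree-depth decomposition of depth within the stated bound) $\ex--$ we are free to fix any canonical logspace rule for choosing the balanced separator and the order of components, so the recomputation is deterministic and consistent across calls. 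Composing this with the first two transformations and taking the product of the hidden constants yields the single constant $c$ in the statement. I expect writing out the logspace-with-recomputation details (and pinning down the exact constant-fraction balance so the recursion depth is genuinely $\Oh(\log n)$) to be where most of the care is needed; everything else is bookkeeping.
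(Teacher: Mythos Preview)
Your handling of the first two inequalities is correct and matches the paper's (which simply calls them trivial). The gap is in the logspace implementation of the third inequality.

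The recomputation scheme you sketch does not yield logspace. To compute the separator $S_i$ on the branch through $v$, you must know the current subgraph, which is determined by $S_0,\dots,S_{i-1}$; even if each $S_j$ is individually logspace-computable from its subgraph, chaining $\Theta(\log n)$ such computations via the standard logspace-composition trick costs an additive $O(\log n)$ per level of nesting, giving $\Theta(\log^2 n)$ space overall. Your alternative phrasing, ``remember a logarithmic-size pointer per level'', runs into the same arithmetic: $O(\log n)$ levels times $O(\log n)$ bits each is $O(\log^2 n)$. Nothing in your outline breaks this barrier.

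The paper is explicit that this is the obstacle (``a straightforward implementation would be problematic because of recursion and the need to remember a subset of vertices'') and takes a genuinely different route to avoid it. Rather than recursing on $G$, it invokes a black-box result of Elberfeld, Jakoby and Tantau: given the tree $\T$ of the tree decomposition, one can in logspace produce a width-$3$ tree decomposition of $\T$ itself whose shape tree $\mathcal{S}$ is a full binary tree of depth $O(\log|\T|)$. Substituting bags of $G$ for bags of $\T$ yields a tree decomposition $(\mathcal{S},(Z_s))$ of $G$ of width $O(\tw(G))$ whose tree already has logarithmic depth. From such a decomposition the tree-depth decomposition is built directly---replace each node $s$ by a path on the vertices first appearing in $Z_s$---with no recursion at all. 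The non-trivial content (balancing a tree in logspace) is thus outsourced to the cited result, which is precisely the ingredient your argument is missing.
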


\begin{corollary}\label{cor:td_pw_tw}
Let $s:\N\to\N$ be a nice function. Then
$$ \tdEx{s} \lred \pwEx{s} \lred \twEx{s} \lred \tdEx{s\cdot \log}.$$
\end{corollary}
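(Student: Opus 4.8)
The plan is to derive Corollary~\ref{cor:td_pw_tw} directly from Lemma~\ref{lem:td_pw_tw} by chaining the three logspace transformations, while keeping track of how the bounds on width/depth interact with the definition of $\pi$-\exampleProblem$[s]$. Recall that an instance of $\pi$-\exampleProblem$[s]$ is a pair $(G,\mathcal{T})$ where $\mathcal{T}$ is a $\pi$-decomposition of $G$, and it is a yes-instance iff the width/depth of $\mathcal{T}$ is at most $s(|V(G)|)$ and $G$ is $3$-colorable. Since all three reductions will leave the graph $G$ untouched (they only transform the decomposition), the $3$-colorability part of the answer is preserved for free; the only thing to verify is that the width/depth threshold is respected.

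First I would handle the two easy reductions $\tdEx{s}\lred\pwEx{s}$ and $\pwEx{s}\lred\twEx{s}$. For the former: given $(G,\mathcal{T})$ with $\mathcal{T}$ a tree-depth decomposition, run the logspace algorithm from Lemma~\ref{lem:td_pw_tw} certifying $\td(G)\geq\pw(G)$ to produce a path decomposition $\mathcal{P}$ of $G$. By the construction described before the lemma (bags are root-to-leaf vertex sets), each bag of $\mathcal{P}$ has size at most the number of nodes on a root-to-leaf path of $\mathcal{T}$, so $\mathrm{width}(\mathcal{P})\leq\mathrm{depth}(\mathcal{T})-1<\mathrm{depth}(\mathcal{T})$; hence if $\mathrm{depth}(\mathcal{T})\leq s(|V(G)|)$ then $\mathrm{width}(\mathcal{P})\leq s(|V(G)|)$, and conversely a bad input (depth too large, or $G$ not $3$-colorable) maps to a bad output. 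Output $(G,\mathcal{P})$. The reduction $\pwEx{s}\lred\twEx{s}$ is even simpler: a path decomposition is literally a tree decomposition of the same width, so the identity map (plus a trivial logspace reformatting if the encodings differ) works.

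Next, the interesting reduction $\twEx{s}\lred\tdEx{s\cdot\log}$. Given $(G,\mathcal{T})$ with $\mathcal{T}$ a tree decomposition of width $w$, apply the logspace algorithm from Lemma~\ref{lem:td_pw_tw} certifying $\tw(G)\geq\td(G)/(c\log|V(G)|)$ to obtain a tree-depth decomposition $\mathcal{D}$ of $G$ with $\mathrm{depth}(\mathcal{D})\leq c\cdot\log|V(G)|\cdot(\tw(G)+1)\leq c\cdot\log|V(G)|\cdot(w+1)$. Set $n=|V(G)|$. If $\mathrm{width}(\mathcal{T})=w\leq s(n)$, then $\mathrm{depth}(\mathcal{D})\leq c\log n\cdot(s(n)+1)$, which we need to be at most $(s\cdot\log)(n)$ up to the equivalences already established in the paper. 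Here I would invoke the remark made just before the lemma: $\pi$-\exampleProblem$[s(n)]$ is equivalent under logspace reductions to $\pi$-\exampleProblem$[c'\cdot s(n)]$ for any constant $c'\geq 1$ (by padding, using niceness of $s$), so absorbing the constant $c$ and the additive $+1$ into such a rescaling of $s\cdot\log$ is harmless. More carefully: $c\log n\cdot(s(n)+1)\leq 2c\cdot s(n)\log n$ for $n$ large enough (since $s(n)\geq\log n\geq 1$), and $2c\cdot(s\cdot\log)$-\exampleProblem is equivalent to $(s\cdot\log)$-\exampleProblem; the finitely many small $n$ are handled by hard-coding. Output $(G,\mathcal{D})$, composed with the rescaling reduction.

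The routine-but-necessary checks are: (i) all three transformations run in logspace and produce decompositions in the size-normalized form $|\mathcal{T}|\leq 2|V(G)|^2$ assumed in the preliminaries (re-pruning in logspace if needed, cf.~\cite{Kloks94}); (ii) $3$-colorability of $G$ is trivially unaffected since $G$ is carried over verbatim; (iii) composition of logspace reductions is logspace. The main obstacle — such as it is — is the third reduction, specifically making sure the constant factor $c$ and the additive $+1$ slack in the depth bound do not matter: this is exactly where the padding-equivalence $\pi$-\exampleProblem$[s]\equiv\pi$-\exampleProblem$[c'\cdot s]$ and the niceness of $s$ (which gives $c\cdot s(n)\leq s(n^c)$, and more basically $s(n)\geq\log n$) are used. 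Everything else is bookkeeping, so the corollary follows immediately from Lemma~\ref{lem:td_pw_tw}.
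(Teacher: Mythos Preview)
Your approach is exactly what the paper intends: the corollary is immediate from Lemma~\ref{lem:td_pw_tw}, and the paper gives no separate proof. Your expansion is fine in spirit, but there is one genuine gap in the no-instance direction of the reductions.

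You assert that ``a bad input (depth too large, or $G$ not $3$-colorable) maps to a bad output'', but this is not automatic. For the first reduction, your own computation gives $\mathrm{width}(\mathcal{P})=\mathrm{depth}(\mathcal{T})-1$, so an input with $\mathrm{depth}(\mathcal{T})=s(n)+1$ (a no-instance) produces $\mathrm{width}(\mathcal{P})=s(n)$, which \emph{does} satisfy the bound; if $G$ happens to be $3$-colorable, you have mapped a no-instance to a yes-instance. For the third reduction you do not address no-instances at all: a tree decomposition of width $w>s(n)$ could still yield, via the construction of Lemma~\ref{lem:td_pw_tw}, a tree-depth decomposition of depth $\leq s(n)\log n$ (since the construction's output depends on $w$, not on $s$). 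The standard fix---which you should state---is to first check in logspace whether the given decomposition meets the bound $s(|V(G)|)$ (possible since $s$ is constructible and the width/depth is readable in logspace), and if not, output a fixed no-instance of the target problem. A minor aside: niceness does not give $s(n)\geq\log n$; fortunately you only need $s(n)\geq 1$ for the inequality $c\log n\cdot(s(n)+1)\leq 2c\,s(n)\log n$, which holds for all but finitely many $n$.
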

\begin{proof}[Proof of Lemma~\ref{lem:td_pw_tw}]
	The algorithms for the first two inequalities are trivial.
	For the third inequality, a straightforward implementation would be problematic because of recursion and the need to remember a subset of vertices. We now show how to circumvent these issues.
	
	Let $\left(\T,(X_t)_{t\in\T}\right)$ be a given tree decomposition of a graph $G$ of width $k$.
	Elberfeld et al.~\cite[Theorem 14]{ElberfeldJT12} showed that there is a constant $c\in\mathbb{N}$ and a logspace algorithm that given a tree $\T$, outputs a width-3 tree decomposition $\left(\mathcal{S},(Y_s)_{s\in\mathcal{S}}\right)$ of $\T$ such that $\mathcal{S}$ is a full binary tree of depth $c \cdot \log |V(\T)|$ (their implementation in fact uses a circuit model even more restrictive than logspace).
	Let $Z_s=\bigcup_{t\in Y_s} X_t$; then it is easy to check that $\left(\mathcal{S}, (Z_s)_{s\in\mathcal{S}}\right)$ is a tree decomposition of $G$ of width at most $4k+3$, also computable in logspace. Note that $\mathcal{S}$ is rooted, so we can consider the ancestor relation on it.
	
	Since $\mathcal{S}$ has logarithmic depth, we can construct the following tree-depth decomposition of $G$.
	For $s\in V(\mathcal{S})$, let $\widetilde{Z}_s$ be the set of those vertices of $Z_s$, for which $s$ is the top-most node of $\mathcal{S}$ to whose bag $s$ belongs. Observe that $\{\widetilde{Z}_s\colon s\in V(\mathcal{S})\}$ is a partition of $V(G)$.
	Let $\mathcal{S}'$ be the tree obtained by replacing every node $s$ of $\mathcal{S}$ by a path $P_s$ of $|\widetilde{Z}_{s}|$ nodes, respecting the ancestor relation (so that the last vertex of $P_s$ becomes the parent of the first vertex of $P_{s'}$ for every child $s'$ of $s$).
	Consider any mapping $\mu: V(G) \to V(\mathcal{S'})$ which bijectively assigns vertices in each $\widetilde{Z}_{s}$ to nodes of $P_s$ in an arbitrary order. We claim that $(\mathcal{S},\mu)$ defines a tree-depth decomposition.
	To see this, consider any edge $uv$ of $G$.
	The vertices $u,v$ must be contained in some common bag $X_t$ and hence in some bag $Z_s$, $s\in V(\mathcal{S})$.
	Let $Z_{s(u)},Z_{s(v)}$ be the topmost bags containing $u,v$ respectively, then $u\in \widetilde{Z}_{s(u)}, v\in \widetilde{Z}_{s(v)}$.
	Both $s(u)$ and $s(v)$ must be ancestors of $s$ in $\mathcal{S}$, and hence they are themselves related by the ancestor relation.
	Since the ancestor relation was preserved by the construction, $\mu(u)$ is related to $\mu(v)$.
	This shows correctness.
	
	Membership in $\widetilde{Z}_s$, as well as $|\widetilde{Z}_s|$ can be calculated on the fly in logspace, hence it is straightforward to perform the whole construction in logspace.
	The depth of $\mathcal{S'}$ is at most $3k \cdot c \cdot \log |\T|$.
	Since we assumed that in all given decompositions $|\T|\leq 2 |V(G)|^2$, the depth is $\Oh(k \cdot \log |V(G)|)$.
\end{proof}

\subsection{Equivalence of problems}
We say that a reduction between two graph problems \emph{preserves structural parameters} (linearly) if for each parameter $\pi\in\{\tw,\pw,\td\}$ there is an integer $c\in\N$ such that for any instance with graph $G$, the graph $H$ produced by the reduction satisfies $\pi(H) \leq c \cdot \pi (G)$, and moreover a decomposition of $G$ of width/depth at most $s$ can be transformed in logspace into a decomposition of $H$ of width/depth at most $c\cdot s$.
Many known NP-hardness reductions can be shown to have this property, in particular those that replace each vertex or edge with a gadget of bounded size (see the descriptions of `local replacement' and `component design' methods in the classical work of Garey and Johnson~\cite{GareyJ79}).
For example, \exampleProblem and variants of SAT are equivalent in all our theorems, while \textsc{Vertex Cover} or \textsc{Dominating Set} (defined in~\cite{GareyJ79}) are at least as hard.
The proofs are deferred to the appendix.

\begin{definition}
Let $\phi$ be a CNF formula.
The \emph{primal (Gaifman) graph of $\phi$} is the graph with a vertex for each variable of $\phi$ and an edge between every pair of variables that appear together in some clause.
The \emph{incidence graph of $\phi$} is the bipartite graph with a vertex for each clause and each variable of $\phi$ and an edge between each clause and every variable contained in that clause.
\end{definition}

\begin{lemma}[\app]\label{lem:equivalenceOfProblems}
The following problems are equivalent under logspace reductions that preserve structural parameters: \textsc{3Coloring}, CNF-SAT (using a decomposition of the primal graph), $k$-SAT (using a decomposition of either the primal or incidence graph) for each $k\geq 3$.

Furthermore, the following problems admit logspace reductions that preserve structural parameters from the above problems: \textsc{Vertex Cover}, \textsc{Independent Set}, \textsc{Dominating Set}.
\end{lemma}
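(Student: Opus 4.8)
The plan is to establish the chain of equivalences and reductions by exhibiting, in each case, a concrete local-replacement gadget and checking that it preserves each of \tw, \pw, \td linearly while being computable in logspace. I would organize the work around the observation that all of these problems are ``constraint-like'' in the sense that a solution is a labelling of vertices satisfying local constraints, so the standard NP-hardness gadgets from Garey and Johnson are available; the only additional content is the structural-parameter bookkeeping, which is exactly the notion of \emph{preserving structural parameters} defined just above the statement.

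First I would handle the equivalences among \textsc{3Coloring}, CNF-SAT (primal graph), and $k$-SAT for $k\geq 3$. For $k$-SAT $\to$ \textsc{3Coloring} I would use Cook's-style clause gadgets of bounded size attached to a common ``palette'' triangle: each variable becomes a vertex, each clause of width $k$ becomes an $\Oh(k)$-vertex gadget touching only the (bounded number of) variable vertices of that clause. Since each gadget has bounded size and touches only vertices already appearing together in a bag/on a root-to-leaf path of the given decomposition, inserting the gadget vertices into that bag (or as a bounded-depth appendix below the deepest relevant node, in the tree-depth case) multiplies the width/depth by a constant. For the reverse direction \textsc{3Coloring} $\to$ 3-SAT, I would encode each vertex by $\Oh(1)$ Boolean variables (``is this vertex colour $i$'') with a constant number of clauses per vertex and per edge; this is again purely local. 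CNF-SAT $\to$ 3-SAT is the textbook clause-splitting reduction, and one must note that splitting a width-$w$ clause into a chain of 3-clauses adds $\Oh(w)$ fresh variables forming a path inside the corresponding bag, so on the primal graph the width grows by $\Oh(w)$, but $w\le \pi(\phi)+1$ anyway; alternatively one passes through the incidence graph, where splitting a clause into a path of 3-clauses only lengthens a path in the decomposition and keeps the width bounded. The equivalence of the primal- and incidence-graph parameterizations for $k$-SAT (for fixed $k$) is the standard fact that for bounded clause width the two graphs have linearly related \tw, \pw, \td, again provable by a local bag-rewriting.

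Then I would treat \textsc{Vertex Cover}, \textsc{Independent Set}, \textsc{Dominating Set}. \textsc{Independent Set} and \textsc{Vertex Cover} are complementary, so it suffices to reduce, say, \textsc{3Coloring} to one of them via the standard gadget reduction from 3-SAT (the Garey--Johnson ``component design'' reduction), which is local: a constant-size truth-setting component per variable and a constant-size satisfaction-testing component per clause, wired only to the variables of that clause. For \textsc{Dominating Set} I would use the classical reduction from 3-SAT in which each variable is a triangle-like gadget and each clause a small gadget adjacent only to its literals, plus a private guard vertex forcing the gadget's internal choice; each component is bounded-size and locally attached, so the same bag-insertion argument applies. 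In every case the decomposition transformation is a logspace streaming rewrite: read the input decomposition, and for each bag/path-segment output the same vertices plus the (logspace-indexable) gadget vertices that belong there.

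The main obstacle I expect is the tree-depth bookkeeping rather than treewidth or pathwidth: for \td one cannot simply ``throw extra vertices into a bag,'' but must insert the $\Oh(1)$ gadget vertices as new nodes on a root-to-leaf path while respecting that a gadget attached to vertices $u,v$ (which lie on a common root-to-leaf path, one an ancestor of the other) must be placed below the deeper of $u,v$; this is fine for edge/vertex gadgets that touch only $\Oh(1)$ original vertices, but for a clause of width $w$ whose $w$ variable-vertices need not be pairwise comparable in the forest, one must instead route the clause gadget through a fresh path hanging below the common ancestor of all $w$ variables, which can be as deep as the whole decomposition --- hence one really needs the clauses to have bounded width, which is why CNF-SAT is stated only for the \emph{primal} graph (where a width-$w$ clause is a clique, forcing its $w$ variables onto a common path of length $\ge w$, so the gadget adds only $\Oh(w)=\Oh(\td)$ to the depth) and why the incidence-graph statement is restricted to fixed $k$. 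Once this placement rule is fixed, verifying correctness of each gadget (that covers/colourings/dominating sets correspond) is the routine Garey--Johnson argument, and verifying the $\Oh(c\cdot s)$ depth bound and the logspace implementability is the bounded-size-gadget bookkeeping described above. I would present one gadget in full (say 3-SAT to \textsc{3Coloring}) with its three decomposition transformations, and then indicate that the others follow the same pattern.
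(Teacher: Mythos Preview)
Your proposal is correct and follows essentially the same route as the paper: standard local-replacement gadgets (Garey--Johnson--Stockmeyer for \textsc{3Coloring}, clause-splitting for CNF-SAT $\to$ 3-SAT, etc.), with the only nontrivial content being the structural-parameter bookkeeping, and your identification of tree-depth as the delicate case (and of why CNF-SAT must be stated for the primal graph) is exactly right.

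The main difference is organizational. You propose to verify the $\tw/\pw/\td$ bounds separately for each reduction by ad-hoc bag-insertion and ``hang below the deepest relevant node'' arguments. The paper instead abstracts this once and for all via an \emph{$H$-deconstruction}: a bag assignment $(B_h)_{h\in V(H)}$ showing that the output graph $G$ is, up to bounded-size pieces, structured like the input graph $H$; a single lemma then turns any $\pi$-decomposition of $H$ into one of $G$ with width/depth multiplied by the deconstruction width, uniformly for $\pi\in\{\tw,\pw,\td\}$. A companion lemma handles the ``attach bounded components to a clique'' situation (used for clause-splitting and for subdivisions). With these two lemmas, each reduction is dispatched in one or two lines by naming the bags. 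This buys cleanliness and removes the need to repeat the tree-depth placement argument you sketch; your per-parameter analysis would work but is longer and more error-prone. A second, minor difference: for \textsc{Dominating Set} the paper goes \textsc{Vertex Cover} $\to$ \textsc{Dominating Set} by subdividing every edge once (a one-line reduction handled by the clique-attachment lemma), which is simpler than your direct 3-SAT $\to$ \textsc{Dominating Set} gadget.
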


We will often consider problems like $\pi$-3-SAT or $\pi$-CNF-SAT, for $\pi\in\{\tw,\pw,\td\}$, in which case we always mean the width/depth of a given decomposition of the primal graph of the formula.

\subsection{Cook's theorem with bounded space}
In our reductions we will need to describe Turing machine computations using CNF formulas, just as in Cook's theorem on the NP-completeness of CNF-SAT.
It has already been observed by Monien and Sudborough~\cite{MonienS85} that Cook's reduction applied to machines with bounded space yields formulas of bounded width.
The only difference is that in this setting a machine's worktape space bound can be significantly shorter than the input word---the read-only tape on which the input is placed must be treated differently. One standard solution would be to modify the machine to make it oblivious, i.e., simulate tape operations so that head movements are independent of the input;
the reduction can then encode appropriate input symbols exactly where they would be read.
We employ a different approach by encoding the reading process directly into the formula, providing the input to each computation step with a copy of the following simple \emph{random access gadget}.
This has the advantage of making our reductions slightly more explicit and adaptable.

\begin{lemma}[Random access gadget]\label{lem:ramGadget}
For every $n\in\mathbb{N}$, there is a 3-CNF formula including named variables: $x_0,\dots,x_{n-1}$ (`{input}'), $y_0,\dots,y_{\lceil \lg n \rceil-1}$ (`{index}'), and $z$ (`{output}'), such that:
every assignment of the named variables can be extended to a satisfying assignment if and only if it satisfies $z=x_{\bar{y}}$.
Here $\bar{y}$ is the number encoded in binary by the index variables (we require $z=0$ if $\bar{y}\geq n$).
The formula has $\Oh(n)$ variables in total, tree-depth $\Oh(\log n)$ (of the primal graph), and can be constructed using $\Oh(\log n)$ space, given $n$.
\end{lemma}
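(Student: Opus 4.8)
The natural way to build the random access gadget is a balanced binary selection tree on top of the inputs. I would introduce, for each level $j = 0, 1, \dots, \lceil \lg n\rceil$, a layer of selection variables; the bottom layer ($j=0$) is just the input variables $x_0,\dots,x_{n-1}$ (padded with fresh variables forced to $0$ up to the next power of two), and the top layer ($j = \lceil \lg n\rceil$) is the single output variable $z$. A variable $w^{(j)}_i$ at level $j$ should hold the bit selected from the block of $x$'s of size $2^j$ starting at position $i\cdot 2^j$, using the top $j$ bits $y_{j-1},\dots,y_0$ of the index as the selector within that block; equivalently, $w^{(j+1)}_i$ is determined from its two children $w^{(j)}_{2i}$ and $w^{(j)}_{2i+1}$ by the single index bit $y_j$, via the constraint $w^{(j+1)}_i = (\lnot y_j \land w^{(j)}_{2i}) \lor (y_j \land w^{(j)}_{2i+1})$. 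This multiplexer relation on four Boolean variables is a constant-size constraint, so it can be written as a constant-size 3-CNF after introducing $\Oh(1)$ auxiliary variables; there are $\Oh(n)$ such gadgets in total, giving $\Oh(n)$ variables overall.

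\textbf{Tree-depth of the primal graph.} The key structural point is that the primal graph of this formula essentially \emph{is} a complete binary tree (on the $w$-variables) plus the $\lceil\lg n\rceil$ index variables $y_j$, with $y_j$ adjacent to everything at levels $j$ and $j+1$, plus the $\Oh(1)$ auxiliary variables per multiplexer which are local to a single constraint. To bound its tree-depth I would build an explicit tree-depth decomposition: put all $\lceil\lg n\rceil$ index variables $y_0,\dots,y_{\lceil\lg n\rceil-1}$ on a single root-to-node path at the top (that is $\Oh(\log n)$ nodes), and below them hang the binary tree of $w$-variables in the obvious ancestor-respecting way (a complete binary tree has depth $\Oh(\log n)$), finally appending the $\Oh(1)$ auxiliary variables of each multiplexer as a short tail just below the lower of its two endpoints. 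Every edge of the primal graph then has its endpoints in ancestor--descendant relation: edges inside a multiplexer touch only $y_j$ (an ancestor of everything), two parent/child $w$-vertices (comparable in the binary tree), and the local auxiliaries (placed as descendants of the constraint's vertices). The total depth is $\Oh(\log n) + \Oh(\log n) + \Oh(1) = \Oh(\log n)$, as claimed.

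\textbf{Correctness and logspace construction.} Correctness is an induction on $j$: assuming $w^{(j)}_i = x_{i\cdot 2^j + (\bar y \bmod 2^j)}$ for all $i$ (reading out-of-range $x$'s as the padding zeros), the multiplexer constraint forces $w^{(j+1)}_i = x_{i\cdot 2^{j+1} + (\bar y \bmod 2^{j+1})}$; at $j=\lceil\lg n\rceil$ this gives $z = x_{\bar y}$, with $z = 0$ when $\bar y \geq n$ because those positions were padded with zeros. Conversely any assignment of the named variables satisfying $z = x_{\bar y}$ extends uniquely by reading off the forced values of all intermediate $w$-variables and the multiplexer auxiliaries. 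For the space bound, the variable names can be indexed by $(j,i)$ with $j \leq \lceil\lg n\rceil$ and $i < n$, which fits in $\Oh(\log n)$ bits; the machine simply iterates over all $(j,i)$ and, for each, outputs the fixed $\Oh(1)$-size clause pattern of the corresponding multiplexer with the appropriate indices substituted, together with the padding clauses $\lnot x_i$ for $n \leq i < 2^{\lceil\lg n\rceil}$. Nothing needs more than a few counters, so $\Oh(\log n)$ space suffices.

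\textbf{Main obstacle.} I expect the only real subtlety is the tree-depth bound rather than correctness: one must be careful that the index variables $y_j$, each of which touches a whole \emph{level} of the binary tree, do not blow up the depth --- placing all of them on one shared ancestor path (rather than interleaving them with the tree levels) is what keeps the depth logarithmic, and one has to double-check that this placement is still ancestor-respecting for every multiplexer edge, including those involving the per-constraint auxiliary variables. A secondary minor point is handling $n$ not a power of two cleanly; padding the input layer up to $2^{\lceil\lg n\rceil}$ with zero-forced dummy variables takes care of both the indexing arithmetic and the required behaviour $z=0$ for $\bar y \geq n$.
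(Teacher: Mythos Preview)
Your proposal is correct and follows essentially the same approach as the paper: a full binary selection tree with a multiplexer constraint at each internal node, padding the leaves to a power of two with zero-forced dummies, and a tree-depth decomposition that places the index variables on a path above the root of the binary tree. The only minor difference is that you introduce $\Oh(1)$ auxiliary variables per multiplexer, whereas the paper observes that the constraint $v = (\lnot y_j \land v_0)\lor (y_j \land v_1)$ can be written directly as four clauses of size three on $\{v,v_0,v_1,y_j\}$, so no auxiliaries are needed; this does not affect correctness or the asymptotic bounds.
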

\begin{proof}
Construct a full binary tree of variables, of depth $\lceil \lg n \rceil$.
Let the root be $z$ and name the leaves $x_0,x_1,\dots, x_{2^{\lceil \lg n \rceil}-1}$.
Add clauses of size $1$ requiring $x_i=0$ for $i\geq n$.
Introduce new variables $y_0,\dots,y_{\lceil \lg n \rceil-1}$.
For each internal variable $v$ at level $j$ of the tree, with children $v_0,v_1$, enforce that $v=v_{0}$ if $y_j$ is false and $v=v_{1}$ otherwise (using four clauses of size $3$ on the variables $\{v,v_0,v_1,y_j\}$).
This enforces that any assignment to the input and index variables can be extended to a satisfying assignment in exactly one way, in which furthermore $z=x_{\bar{y}}$.
A tree-depth decomposition of logarithmic depth is obtained from the full binary tree simply by mapping the index variables to a path of length $\lceil \lg n \rceil$ attached above the root of the tree.
\end{proof}

The following lemma shows more precisely (for our needs) how a Turing machine computation can be encoded in a formula (a \emph{computation gadget}) -- the crucial part of Cook's theorem.
In reductions involving stack machines it will also describe fragments of computation without any push/pop operation. The contents of the stack will be considered as a separate read-only input tape, which we treat differently because, while smaller in size, the content is not given to the reduction.

Note that Lemmas~\ref{lem:compGadget} and~\ref{lem:equivalenceOfProblems}, applied for $h=0$, immediately give the first part of Theorem~\ref{thm:pwCompleteness}: for nice $s(n)\geq \log n$, \pwExample{}$[s]$ is complete for $\NTiSp{\poly}{s(\poly)}$ under logspace reductions.

\begin{lemma}[Computation gadget]\label{lem:compGadget}
Let $M$ be a non-deterministic Turing machine over alphabet $\Sigma$ with two read-only input tapes and one work tape.
Given an input word $\alpha$ over $\Sigma$ of length $n$ and integers $s,t,h$ such that $\lg n+\lg h\leq \Oh(s)$, one can using $\Oh(\log(n+s+t+h))$ space output a CNF formula such that:
\begin{itemize}
\item The formula has $\Oh(t\cdot(s+h+n))$ variables, including named variables $u_1,\dots,u_{s'}$, $v_1,\dots,v_{s'}$, $w_1,\dots,w_{h'}$, where $s'\in \Theta(s)$ and $h'=h\cdot \lceil \lg |\Sigma|\rceil$. These variables respectively describe two configurations $\mathbf{u}$, $\mathbf{v}$ of $M$ (up to $s$ symbols of the working tape, heads' positions encoded in binary, and the state), and a word $\bar{w}$ over $\Sigma$ of length $h$.
\item Any assignment to the named variables can be extended to a satisfying assignment iff the computation of $M$ on inputs $\alpha$ and $\bar{w}$ can lead (by some sequence of non-deterministic choices) from the configuration $\mathbf{u}$ to the configuration $\mathbf{v}$, using at most $t$ steps and $s$ space.
\item The formula's primal graph has pathwidth $\Oh(s+h)$ and tree-depth  $\Oh(s\cdot\log(n+s+t+h)+h)$. Moreover, appropriate decompositions can be output within the same space bound.
\end{itemize}
\end{lemma}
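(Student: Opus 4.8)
The plan is to follow Cook's classical tableau construction, but with two twists tailored to our setting: the input tape is handled by the random access gadget of Lemma~\ref{lem:ramGadget}, and the ``long'' input tape holding $\bar w$ is encoded explicitly by the $h'$ named variables $w_1,\dots,w_{h'}$. First I would set up the tableau: a $(t+1)\times(s+\Oh(1))$ grid of cells, where row $i$ describes the configuration of $M$ after $i$ steps, restricted to the $s$ work-tape cells actually used (we may assume without loss of generality that $M$ never leaves this window, aborting otherwise, since the space bound is $s$). Each cell gets $\Oh(\log|\Sigma|)$ Boolean variables encoding the tape symbol there, together with a flag and state-encoding for whichever cell the work-tape head currently scans; in addition each row carries $\Oh(\lg n + \lg s + \lg t + \lg h)=\Oh(s)$ variables encoding, in binary, the two head positions (on $\alpha$ and on $\bar w$) and the current state. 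The named variables $u_1,\dots,u_{s'}$ and $v_1,\dots,v_{s'}$ (with $s'\in\Theta(s)$) are identified with the variables of row $0$ and row $t$ respectively, and the named variables $w_1,\dots,w_{h'}$ are identified with a single copy of the word $\bar w$ shared by every row.

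Next I would write the clauses. There are three kinds. (i) \emph{Consistency within a row}: at most one state, exactly one head position (a standard binary-is-well-formed constraint), the head flags agree with the binary head counter, etc.; these involve $\Oh(s)$ variables of a single row and contribute $\Oh(s)$ to the pathwidth. (ii) \emph{Transition between consecutive rows}: for each $i$, the configuration in row $i{+}1$ must follow from that in row $i$ by one legal (non-deterministic) move of $M$. This is where the two read-only tapes enter: the move depends on the symbol under the $\alpha$-head and the symbol under the $\bar w$-head. For $\bar w$, I attach to the pair of rows $(i,i{+}1)$ a fresh copy of a random-access gadget from Lemma~\ref{lem:ramGadget} with inputs $w_1,\dots,w_{h'}$ (grouped into $h$ blocks of $\lceil\lg|\Sigma|\rceil$ bits, so really $\lceil\lg|\Sigma|\rceil$ parallel gadgets), index variables equal to the binary $\bar w$-head counter of row $i$, and output feeding into the transition clauses; for $\alpha$ I do the same but the ``input'' variables are hard-wired to the known bits of $\alpha$ — which is why the reduction can read $\alpha$ but need not be given $\bar w$. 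Because the transition is local in the tape cells (only the scanned cell and its neighbours change) plus these $\Oh(s)$ head/state counters and one constant-size-per-row access gadget of tree-depth $\Oh(\log n)$ and pathwidth $\Oh(\log n)$, each between-rows block stays within pathwidth $\Oh(s+\log n)=\Oh(s)$. (iii) \emph{Boundary conditions}: row $0$ and row $t$ are just the free named variables $\mathbf u,\mathbf v$, with the small caveat that if $M$ may reach $\mathbf v$ in fewer than $t$ steps we allow a ``halt-and-stutter'' self-loop, so that ``at most $t$ steps'' is captured by ``exactly $t$ rows''. Existential extendability then says precisely: some choice of all interior variables — i.e. some valid computation path — connects $\mathbf u$ to $\mathbf v$ within $t$ steps and $s$ space.

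For the decomposition bounds I would build a path decomposition of the primal graph by sweeping row by row: a bag containing row $i$, row $i{+}1$, and the at-most-$\Oh(1)$-many access gadgets attached to that pair; sliding this window along $i=0,\dots,t{-}1$ and, within each step, sliding a sub-window through each access gadget (which itself has pathwidth $\Oh(\log n)$) gives pathwidth $\Oh(s+h+\log n)=\Oh(s+h)$ — here the term $h$ appears because the $w$-variables are shared across all rows and hence must sit in every bag, unless one is slightly more careful; the honest bound is $\Oh(s+h)$, matching the statement. For tree-depth, recall $\td \le \pw\cdot\lg(\#\text{vertices})$ in general, but we can do better directly: take the path decomposition above and apply the generic path-to-tree-depth conversion, or simply observe the primal graph is a ``path of blocks'', each block of pathwidth $\Oh(s+h)$ and $\poly(n+s+t+h)$ vertices, so each block has tree-depth $\Oh((s+h)\log(n+s+t+h))$ and the blocks can be stacked into a caterpillar-like tree-depth decomposition of depth $\Oh(s\log(n+s+t+h)+h)$ — the $w$-variables forming a path of length $h$ at the top, and the access gadgets contributing only $\Oh(\log n)$ extra depth each, absorbed into the $s\log(\cdot)$ term. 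All of this is computable in $\Oh(\log(n+s+t+h))$ space since each clause, each cell, and each gadget is produced from $\Oh(\log(\cdot))$-bit indices by a fixed local rule, and the random access gadget itself is logspace-constructible by Lemma~\ref{lem:ramGadget}; the hypothesis $\lg n + \lg h \le \Oh(s)$ is exactly what is needed so that a head position on $\alpha$ or a symbol index into $\bar w$ fits in $\Oh(s)$ bits, keeping the per-row counter block within the claimed $s'\in\Theta(s)$ named variables.

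The main obstacle I expect is not the construction but the bookkeeping of the decomposition bounds: in particular, making sure that (a) the shared word $\bar w$ contributes exactly an additive $\Oh(h)$ and not a multiplicative factor — this forces the tree-depth decomposition to place the $w$-variables as a length-$h$ path \emph{above} everything else rather than duplicating them per row — and (b) that the $t$ access gadgets, each of tree-depth $\Oh(\log n)$, when chained along the caterpillar do not accumulate to $\Oh(t\log n)$ depth but only to $\Oh(\log n)$ extra on top of the row currently being processed; this is handled by noting that once row $i$ is ``finished'' its gadgets can be closed off as subtrees hanging below row $i{+}1$, so only one gadget is ever ``open'' on any root-to-leaf path. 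Getting these two points right, together with the routine verification that Cook-style transition clauses are sound and complete, is the bulk of the work.
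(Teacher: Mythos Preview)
Your proposal is correct and follows essentially the same route as the paper: Cook's tableau with one configuration block per time step, a copy of the random access gadget (Lemma~\ref{lem:ramGadget}) per step for each read-only tape (inputs hard-wired to $\alpha$ for the first tape, identified with $w_1,\dots,w_{h'}$ for the second), a stutter transition to capture ``at most $t$ steps'', and a path decomposition obtained by sliding a two-row window while keeping all $w$-variables in every bag.

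The one place where your write-up wobbles is the tree-depth bound. You say each block has pathwidth $\Oh(s+h)$, hence tree-depth $\Oh((s+h)\log(\cdot))$, and then jump to a caterpillar of depth $\Oh(s\log(\cdot)+h)$; that implication does not follow as written. The clean argument---which you do gesture at with ``the $w$-variables forming a path of length $h$ at the top''---is the one the paper uses, and you should make it primary rather than parenthetical: first \emph{remove} the $w$-variables from the path decomposition; what remains has width $\Oh(s)$ (the gadget widths $\Oh(\log n+\log h)$ are absorbed by the hypothesis $\lg n+\lg h\le\Oh(s)$); apply the generic $\pw\to\td$ conversion of Lemma~\ref{lem:td_pw_tw} to get depth $\Oh(s\cdot\log(n+s+t+h))$; and only then place $w_1,\dots,w_{h'}$ as a path above the root, adding $h'=\Oh(h)$. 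A minor point: drop the $\lg t$ term from your per-row variable count---rows are indexed positionally, no step counter is stored in a configuration, and $\lg t\le\Oh(s)$ is not among the hypotheses.
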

\begin{proof}
	For simplicity of presentation, we assume input tapes use the binary alphabet. A larger alphabet can be reduced by encoding each symbol using a block of $\lceil \lg |\Sigma|\rceil$ bits, and applying straightforward modifications to the machine $M$.
	
	As in Cook's original proof, we create $t$ blocks of $s'\in\Theta(s)$ variables each, describing the configuration at each step.
	The first and last blocks contain variables $(u_i)_{1\leq i\leq s'}$ and $(v_i)_{1\leq i\leq s'}$, respectively, to encode configurations $\mathbf{u}$ and $\mathbf{v}$.
	Additionally, variables $(w_i)_{1\leq i\leq h}$ are created.
	
	We may assume that machine $M$ always keeps track of the indices on which the heads of the input tapes are placed. These indices are encoded in binary in pre-defined buffers on the working tape ({\em{index buffers}}), and each time the head on an input tape is moved, the machine updates the index. To encode reading access to the input tapes in the formula, for each step and each input tape, we create a copy of the random access gadget of Lemma~\ref{lem:ramGadget}. For the first input tape, the gadget has its input variables fixed with the bits of the word $\alpha$ (given to the reduction). For the second input tape, the gadget has its input variables identified with $w_1,\dots,w_{h}$. The index variables are identified with the variables of the block that encode the contents of respective index buffers. The machine behavior is then encoded with clauses binding variables of two consecutive blocks, including the output bit $z$ of each random-access gadget, exactly as in Cook's proof: the clauses verify the correctness of the transition. In doing this, we allow at each step a transition that does not change the configuration in any way. Transitions that would move the working tape's head outside the first $s$ symbols are not allowed. This construction makes the formula satisfiable exactly with assignments in which the $t$-th block describes a configuration reachable in at most $t$ steps and $s$ space (from the configuration described by $u$ variables). 
	
	It remains to bound the pathwidth and the tree-depth of the constructed formula's primal graph.
	To construct a path decomposition, create $t-1$ bags $A_1,A_2,\dots,A_{t-1}$, where each $A_i$ contains a pair of consecutive configuration blocks $i$ and $i+1$ (that is, variables describing the configuration just before and after a single computation step) and all the variables $(w_i)_{1\leq i\leq h}$.
	Consider one of the two random-access gadgets created for step $i$, and let $B_1,\dots,B_b$ be the bags of the provided path decomposition of this gadget ($|B_j|= \Oh(\log n+\log h)\leq \Oh(s)$ for each $1\leq j\leq b$). Similarly, let $B'_1,\dots,B'_{b'}$ be the bags of the provided path decomposition of the second gadget ($|B'_j|\leq \Oh(s)$). 
	Then the final decomposition is obtained by adding bags $A_i \cup B_1,\dots, A_i \cup B_b,A_i \cup B'_1, \dots, A_i \cup B'_{b'}$ immediately after $A_i$.
	From the construction it follows that each bag contains at most $\Oh(s'+h+s)=\Oh(s+h)$ variables and each clause binds a set of variables contained in one of the bags.
	For the tree-depth bound, consider the above decomposition with the variables $w_1,\dots,w_h$ removed.
	Its width is $\Oh(s)$ and hence using Lemma~\ref{lem:td_pw_tw}, we can obtain a tree-depth decomposition of the formula's primal graph of depth $\Oh(s\cdot \log(n+s+t+h))$ if we removed the $w$ variables.
	The variables can then be reintroduced by placing them atop all others in the decomposition, in any order, which increases the tree-depth by at most $h$.
\end{proof}

\section{Connections with tradeoffs for LCS}\label{sec:parameterized}

\newcommand{\rfont}[1]{\textcolor{blue}{#1}}
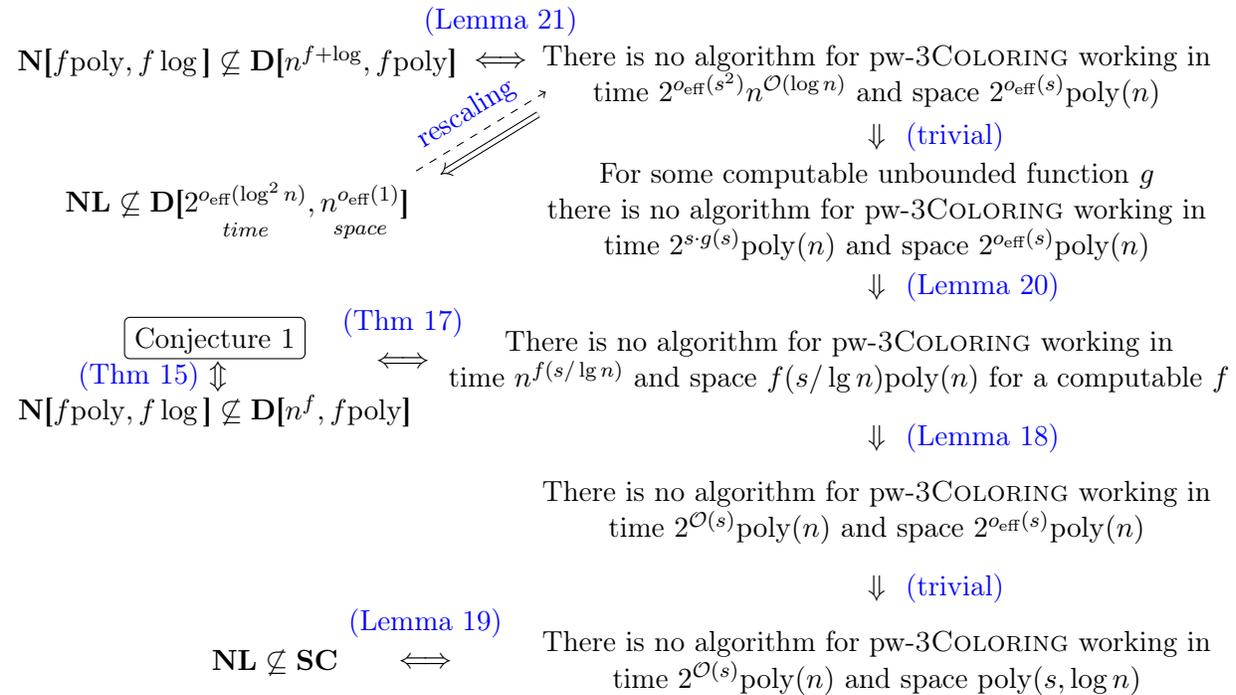
\begin{figure}[h]
	\begin{tikzpicture}
	\tikzset{every node/.style={align=center}}
	
	\node at (0,-0.2) {
		There is no algorithm for \pwExample working in\\
		time $2^{\oeff(s^2)}n^{\Oh(\log n)}$ and space $2^{\oeff(s)}\poly(n)$};
	\node[label=right:\rfont{(trivial)}] at (0,-1) {$\Downarrow$};
	\node at (0,-2) {
		For some computable unbounded function $g$\\
		there is no algorithm for \pwExample working in\\
		time $2^{s\cdot g(s)}\poly(n)$ and space $2^{\oeff(s)}\poly(n)$}; 
	\node[label=right:\rfont{(Lemma~\ref{para-strong-sandwich})}] at (0,-3) {$\Downarrow$};
	\node at (-0.5,-4) {
		There is no algorithm for \pwExample working in\\
		time $n^{f(s/\lg n)}$ and space $f(s/\lg n)\poly(n)$ for a computable $f$};
	\node[label=right:\rfont{(Lemma~\ref{para-weak-sandwich})}] at (0,-5) {$\Downarrow$}; 
	\node at (0,-6) {
		There is no algorithm for \pwExample working in\\
		time $2^{\Oh(s)}\poly(n)$ and space $2^{\oeff(s)}\poly(n)$};
	\node[label=right:\rfont{(trivial)}] at (0,-7) {$\Downarrow$};
	\node at (0,-8) {
		There is no algorithm for \pwExample working in\\
		time $2^{\Oh(s)}\poly(n)$ and space $\poly(s,\log n)$};

	\node at (-8.5,0) {
		$\paraNTiSp{f \poly}{f\log} \not\subseteq \paraDTiSp{n^{f+\log}}{f\poly}$};
	\node[label=above:\rfont{(Lemma~\ref{para-strong-iff})}] at (-5,-0) {$\Longleftrightarrow$};

	\node at (-8.5,-2) {$\NL \not\subseteq \DTiSp{2^{\oeff(\log^2 n)}}{n^{\oeff(1)}}$};
	\draw[-implies,double equal sign distance] (-4.5,-0.7) -- (-5.8,-1.5);
	\draw[<-,dashed] (-4.4,-0.4) --node [sloped,midway,above] {\rfont{rescaling}} (-6.2,-1.5);
	
	\draw[rounded corners=2] (-10,-3.95) rectangle (-7.6,-3.4);
	\node at (-8.8,-4.2) {
		Conjecture~\ref{conj:lcs}\\
		\rfont{(Thm~\ref{thm:lcs-completeness})} $\Updownarrow$ \hspace*{4em}\ \\
		$\paraNTiSp{f \poly}{f\log} \not\subseteq \paraDTiSp{n^f}{f\poly}$};
	\node[label=above:\rfont{(Thm~\ref{thm:pwParaCompleteness})}] at (-6.3,-4) {$\Longleftrightarrow$};
	
	
	\node at (-8,-8) {$\NL\not\subseteq\SC$};
	\node[label=above:\rfont{(Lemma~\ref{para-weak-iff})}] at (-6,-8) {$\Longleftrightarrow$};
	\end{tikzpicture}
	\caption{A summary of the relationships between various statements related to Conjecture~\ref{conj:lcs}.}
	\label{fig:summary}
\end{figure}

In this section we relate Conjecture~\ref{conj:lcs} to statements of varying strength concerning complexity class containments, or time-space tradeoffs for pathwidth-constrained problems.
The results are summarized in Figure~\ref{fig:summary}.
Here, we consider algorithms that work uniformly for all values of pathwidth, with complexity depending on both the input size $n$ and the pathwidth $s$ of a given decomposition.

We use the notion of pl-reduction between parameterized problems: it is an algorithm that transforms an instance of one problem with parameter $k$ into an equivalent instance of another problem with parameter $k'\leq f(k)$, working in space $f(k)+\Oh(\log n)$, for some computable function $f$.
Following Elberfeld et al.~\cite{ElberfeldST15} we define\footnote{Throughout this section, in classes $\paraNTiSp{\cdot}{\cdot}$ and $\paraDTiSp{\cdot}{\cdot}$ we always use time and space as the first and the second argument, respectively, hence we drop the subscripts for readability.}
$\paraNTiSp{f\poly}{f\log}$ as the class of parameterized problems that can be solved in non-deterministic time $f(k) \poly(n)$ and space $f(k)\log(n)$ for some computable function $f$, where $k$ is the parameter.
Similarly, $\paraDTiSp{n^f}{f\poly}$ is the class of parameterized problems that can be solved in deterministic time $n^{f(k)}$ and space $f(k)\poly(n)$ for some computable function $f$.
Further classes $\paraDTiSp{t}{s}$ will be defined analogously for different expressions $t,s$.
All those mentioned in the article are closed under pl-reductions.
The reason we do not use the better known fpt-reductions is that $\paraNTiSp{f\poly}{f\log}$ is not expected to be closed under such reductions; its closure under fpt-reductions has been called WNL by Guillemot~\cite{Guillemot11}, but Elberfeld et al.~\cite{ElberfeldST15} argue that a different parameterized class should have this name.

We use the notation $\oeff(h(n))$ as an effective variant of $o(h(n))$; formally, for $f,h:\mathbb{N}\to\mathbb{N}$ we write $f = \oeff(h)$ if there is a non-decreasing, unbounded, computable function $g(n)$ such that $f = \Oh(\frac{h}{g})$.   
The \emph{inverse} of a function $f$ is the function $f^{-1}(n) := \max\{i \mid f(i)\leq n\}$; observe that $f(f^{-1}(n)) \leq n \leq f^{-1}(f(n))$. 
Cai et al.~\cite[Lemmas 3.2, 3.4]{CaiCDF95} showed how computable bounds $f$ and their inverses can be assumed to be computable in appropriately bounded space (logarithmic in $f(n)$) without loss of generality. We use this implicitly when computing tradeoffs in this section, and refer to~\cite{CaiCDF95} for further details.

\subsection{Completeness results and statements equivalent to Conjecture~\ref{conj:lcs}}
Conjecture~\ref{conj:lcs} refers to the following parameterized problem.

\defparproblemu{LCS}{A finite alphabet $\Sigma$, $k$ strings $s_1,s_2,\ldots,s_k$ over $\Sigma$, and an integer $\ell$.}{$k$}{Is there a common subsequence of $s_1,s_2,\ldots,s_k$ of length at least $\ell$?}

We would like to stress that in this variant, the alphabet $\Sigma$ is not of fixed size, but is given on the input and can be arbitrarily large. Under standard fpt-reductions, LCS is known to be W[$t$]-hard for every level $t$~\cite{BodlaenderDFW95}. Moreover, there is some fixed alphabet size for which the problem remains W[1]-hard~\cite{Pietrzak03}, and the problem is also W[1]-complete when parameterized jointly by $k$ and $\ell$~\cite{Guillemot11}. This makes it very hard from the parameterized perspective. From the point of view of pl-reductions, Elberfeld et al.~\cite{ElberfeldST15}, drawing on the work of Guillemot~\cite{Guillemot11}, pinpointed the exact complexity of LCS.

\begin{theorem}[\cite{ElberfeldST15}]\label{thm:lcs-completeness}
LCS is complete for $\paraNTiSp{f\poly}{f\log}$ under pl-reductions.
\end{theorem}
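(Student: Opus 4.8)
The plan is to establish the two inclusions separately: that LCS itself lies in $\paraNTiSp{f\poly}{f\log}$ (which, since the class is closed under pl-reductions, also handles everything pl-reducible to LCS), and that every problem in $\paraNTiSp{f\poly}{f\log}$ pl-reduces to LCS.

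\textbf{Membership.} I would use the characterization of the subsequence relation by greedy leftmost matching. On an instance with strings $s_1,\dots,s_k$ over $\Sigma$ and target $\ell$, a non-deterministic machine keeps $k$ pointers $p_1,\dots,p_k$, where $p_i$ is a position in $s_i$ (so $\lceil\lg n\rceil$ bits), together with a counter. It repeatedly guesses a symbol $\sigma\in\Sigma$ and moves each $p_i$ forward to the first occurrence of $\sigma$ strictly after $p_i$, rejecting if some $s_i$ has no such occurrence; after $\ell$ successful rounds it accepts. Some run accepts iff a common subsequence of length $\ell$ exists, because matching each guessed symbol at the leftmost feasible position is optimal. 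This uses $\Oh(k\lg n)$ space and time $\poly(n)$ (each round scans all strings once and there are at most $\ell\le n$ rounds), so LCS $\in\paraNTiSp{f\poly}{f\log}$ with $f(k)=\Oh(k)$.

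\textbf{Hardness.} Let $Q\in\paraNTiSp{f\poly}{f\log}$ be decided by a non-deterministic machine $M$ running, on input $x$ of length $n$ with parameter $k$, in time $T=f(k)\cdot n^{\Oh(1)}$ and space $S=f(k)\cdot\lceil\lg n\rceil$. By standard normalizations, using space-efficient computations of $f$ (see~\cite{CaiCDF95}), I would assume $M$ has a unique start and a unique accepting configuration, runs for exactly $T$ steps, and is \emph{oblivious}, i.e. the positions of its work and input heads at step $t$ are computable from $t$ in $\Oh(\lg n)$ space (this costs only a polynomial time factor and a constant factor in $S$). A configuration is then described by $\Oh(S)$ bits, split into $r=\Oh(f(k))$ blocks of $\lceil\lg n\rceil$ bits each, plus an $\Oh(1)$-bit control (state); each block content is a number bounded by $\poly(n)$. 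The reduction outputs, over an alphabet of size $\poly(n)$, one string $s_j$ per block that lists, over the $T$ time steps, codewords encoding the pair ``(step $t$, content of block $j$ after step $t$)'', plus a control string tracking the state and the guessed transitions; the target $\ell$ equals $T$ times the codeword length. A system of separator symbols and the codeword alphabet are designed so that a length-$\ell$ common subsequence is forced to march through $t=0,1,\dots,T$ in lockstep across all strings, with consecutive pieces encoding a valid transition of $M$ (obliviousness lets the gadget hard-code which block the head reads and which bit of $x$ is read, using that $x$ is known to the reduction) and with the final control symbol demanding the accepting configuration. Hence a common subsequence of length $\ell$ exists iff $M$ accepts $x$. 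There are $\Oh(f(k))$ strings, and each output symbol is a local function of $t$, $j$, the oblivious head positions, and $x$, so the instance is produced in space $f(k)+\Oh(\lg n)$: a pl-reduction.

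\textbf{Main obstacle.} The delicate direction is hardness: unlike in Cook's theorem (cf. Lemma~\ref{lem:compGadget}), LCS has no built-in time axis, so a careless encoding would allow a common subsequence to match symbols at unrelated positions in different strings, effectively letting distinct blocks simulate inconsistent computations. The crux is the separator/codeword design forcing length-$\ell$ common subsequences to be synchronized in time and globally consistent with a single guessed transition at each step; trading the data-dependent ``which block currently holds the head'' question for obliviousness is what makes this possible, at the price of a careful but routine accounting of codeword lengths and alphabet size (which must stay $\poly(n)$ to respect the pl-reduction budget).
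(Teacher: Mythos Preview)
The paper does not prove this theorem: it is stated with attribution to Elberfeld, Stockhusen, and Tantau~\cite{ElberfeldST15} (building on Guillemot~\cite{Guillemot11}) and used as a black box, so there is no proof in the paper to compare your proposal against.

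On its own merits, your membership argument is correct and standard. The hardness direction, however, is only a high-level sketch: you correctly identify that the crux is forcing a length-$\ell$ common subsequence to march through time steps in lockstep across all strings, but you do not actually give the separator/codeword design that achieves this. In LCS each string is matched independently, so nothing in your construction as written prevents block string $s_j$ from encoding a block-$j$ history that is individually consistent with \emph{some} run while being globally inconsistent with the block histories chosen by the other strings (in particular, at a step where the head is in block $j$, the transition depends on the symbol under the head, but the other strings have no access to this information). Obliviousness tells you \emph{which} block the head is in at each step, but it does not by itself synchronize the \emph{contents} of that block with what the control string and the other block strings assume. The actual reductions in~\cite{Guillemot11,ElberfeldST15} handle this by a careful alphabet and gadget design that broadcasts the locally relevant information (the nondeterministic choice and the symbol under the head) into every string at each step, so that all strings are forced to agree on it; your proposal asserts that such a design exists but does not supply it, and this is precisely the nontrivial part of the proof.
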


Thus Conjecture~\ref{conj:lcs} is in fact a general statement about parameterized complexity classes. 

\begin{corollary}
Conjecture~\ref{conj:lcs} holds if and only if $\paraNTiSp{f\poly}{f\log} \not\subseteq \paraDTiSp{n^f}{f\poly}$.
\end{corollary}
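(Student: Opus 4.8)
The plan is to read off both implications directly from Theorem~\ref{thm:lcs-completeness}, after observing that Conjecture~\ref{conj:lcs} is, verbatim, the statement that LCS $\notin \paraDTiSp{n^f}{f\poly}$: an algorithm running in time $n^{f(k)}$ and space $f(k)\poly(n)$ for a computable $f$ is precisely a certificate of membership in this class, so the asserted non-existence of such an algorithm is the same as non-membership. With this translation in place, the corollary becomes the assertion that LCS $\notin \paraDTiSp{n^f}{f\poly}$ is equivalent to $\paraNTiSp{f\poly}{f\log} \not\subseteq \paraDTiSp{n^f}{f\poly}$, which is exactly what completeness of LCS for the left-hand class buys us.

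For the direction ``Conjecture~\ref{conj:lcs} $\Rightarrow$ non-containment'', I would use only the membership half of Theorem~\ref{thm:lcs-completeness}, namely LCS $\in \paraNTiSp{f\poly}{f\log}$. Under the conjecture, LCS is then a problem lying in $\paraNTiSp{f\poly}{f\log}$ but not in $\paraDTiSp{n^f}{f\poly}$, which immediately witnesses the desired non-containment.

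For the converse, ``non-containment $\Rightarrow$ Conjecture~\ref{conj:lcs}'', I would argue by contraposition using the hardness half of Theorem~\ref{thm:lcs-completeness}: suppose LCS $\in \paraDTiSp{n^f}{f\poly}$. Since LCS is hard for $\paraNTiSp{f\poly}{f\log}$ under pl-reductions and $\paraDTiSp{n^f}{f\poly}$ is closed under pl-reductions, every problem in $\paraNTiSp{f\poly}{f\log}$ would then also lie in $\paraDTiSp{n^f}{f\poly}$, i.e.\ $\paraNTiSp{f\poly}{f\log} \subseteq \paraDTiSp{n^f}{f\poly}$, contradicting the hypothesis. Hence LCS $\notin \paraDTiSp{n^f}{f\poly}$, which is Conjecture~\ref{conj:lcs}.

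The argument is essentially immediate given Theorem~\ref{thm:lcs-completeness}, and the only point that needs a moment's care -- the closest thing to an obstacle -- is confirming that $\paraDTiSp{n^f}{f\poly}$ is closed under pl-reductions, which is what lets the hardness of LCS be transferred. This holds because composing a pl-reduction (polynomial running time, parameter blown up to at most $f(k)$, space $f(k)+\Oh(\log n)$) with a $\paraDTiSp{n^f}{f\poly}$ algorithm still yields polynomial-exponent time and $f(k)\poly(n)$ space after absorbing the reduction's own resource bounds; as this closure is stated in the paragraph preceding the corollary, it may simply be invoked rather than reproved.
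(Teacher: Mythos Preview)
Your proposal is correct and matches the paper's approach: the paper states the corollary immediately after Theorem~\ref{thm:lcs-completeness} without proof, treating it as the standard consequence of completeness (a complete problem lies in a class closed under the reductions if and only if its whole class does). Your write-up simply makes explicit the two directions and the use of closure under pl-reductions, all of which the paper leaves implicit.
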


Similarly as described in the introduction, the best known determinization results (a brute-force approach and Savitch's theorem) imply only that $\paraNTiSp{f \poly}{f\log}$ is contained in classes $\paraDTiSp{n^f}{n^f}$ (commonly known as $\cfont{XP}$) and 
$\paraDTiSp{n^{f(k)\cdot \log n}}{f(k) \cdot \log^2 n}$.
\medskip

To relate parameterized tractability bounds to subexponential bounds, we use the following tool from Cai and Juedes~\cite{CaiJ03}.
For a parameterized problem $\Pi$, its reparametrization (or \emph{extended version}) $\Pi^{\log n}$ is defined as the same problem parameterized by $s/\lg n$, where $s$ was the old parameter.
In particular, \pwExample{}$^{\log n}$ is the following parameterized problem:

\defparproblemu{\pwExample{}$^{\log n}$}{A graph $G$ with a path decomposition of width $k \cdot \lg n$}{$k$}{Is $G$ 3-colorable?}

Similarly as in Theorem~\ref{thm:pwCompleteness}, pathwidth-constrained problems turn out to be complete for non-deterministic computation with simultaneous time and space bounds.

\begin{theorem}\label{thm:pwParaCompleteness}
\pwExample{}$^{\log n}$ is complete for \paraNTiSp{$f \poly$}{$f\log$} under pl-reductions.
\end{theorem}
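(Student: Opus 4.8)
The plan is to prove the two directions of completeness separately, using the machinery already developed in the preliminaries, especially the computation gadget (Lemma~\ref{lem:compGadget}) and the equivalence of problems (Lemma~\ref{lem:equivalenceOfProblems}), together with Theorem~\ref{thm:pwCompleteness} as a guide.

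\textbf{Membership.} First I would show that \pwExample{}$^{\log n}$ belongs to $\paraNTiSp{f\poly}{f\log}$. Given an instance consisting of a graph $G$ on $n$ vertices together with a path decomposition of width $k\lg n$, a non-deterministic machine first verifies in logspace that the decomposition is valid and has width at most $k\lg n$ (rejecting otherwise), and then runs the standard non-deterministic dynamic-programming-style algorithm for \textsc{3Coloring} along the path: it guesses the coloring of each bag as it scans the decomposition left to right, keeping only the coloring of the current bag in memory and checking consistency with the previous bag and with the edges covered. The space used is $\Oh(k\lg n \cdot \lg 3 + \log n) = \Oh(k\log n)$, which is of the form $f(k)\log n$, and the running time is polynomial in $n$ (times a constant per bag), hence of the form $f(k)\poly(n)$. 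This is essentially the membership half of Theorem~\ref{thm:pwCompleteness} specialized to $s(n)=k\lg n$, re-read through the reparameterization lens, so it only requires checking that the parameter $k$ (rather than $s=k\lg n$) is what controls the multiplicative overhead, which it does.

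\textbf{Hardness.} For the harder direction I would take an arbitrary problem $\Pi \in \paraNTiSp{f\poly}{f\log}$, solved by a non-deterministic machine $M$ running in time $f(k)n^c$ and space $f(k)\lceil\lg n\rceil$ on inputs of length $n$ with parameter $k$, and give a pl-reduction to \pwExample{}$^{\log n}$. The idea mirrors Cook's theorem with bounded space: apply Lemma~\ref{lem:compGadget} with the space bound $s = f(k)\lceil\lg n\rceil$, time bound $t = f(k)n^c$, and $h=0$ (no second input tape is needed), initial configuration $\mathbf{u}$ fixed to $M$'s starting configuration on the given input, and final configuration $\mathbf{v}$ fixed to the (unique, after standard normalization) accepting configuration. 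The resulting CNF formula $\phi$ is satisfiable if and only if $M$ accepts, it has primal-graph pathwidth $\Oh(s) = \Oh(f(k)\log n)$, and it — together with a path decomposition of that width — can be produced in space $\Oh(\log(n+s+t)) = \Oh(f(k)+\log n)$, which is exactly the pl-reduction space budget. One then composes with the logspace-and-parameter-preserving reduction from CNF-SAT (on the primal graph) to \textsc{3Coloring} of Lemma~\ref{lem:equivalenceOfProblems}, which multiplies the pathwidth by a constant. The output is a graph $H$ with a path decomposition of width $c' f(k)\log n' $ where $n'$ is the (polynomially larger) size of $H$; since $c'f(k)\log n' = \Oh(f(k)\log n')$, padding with isolated vertices up to an appropriate size (as in the discussion preceding Corollary~\ref{cor:td_pw_tw}) makes the width exactly $k'\lg n'$ with $k' \leq f'(k)$ a computable function of $k$ alone, which is the form required for an instance of \pwExample{}$^{\log n}$ with new parameter $k'$. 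The correctness equivalence is inherited chain-wise: $M$ accepts $\iff \phi$ is satisfiable $\iff H$ is $3$-colorable.

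\textbf{Main obstacle.} The delicate point is bookkeeping the parameter through the reparameterization: the width of the produced decomposition is $\Theta(f(k)\log n')$, an expression in \emph{both} $k$ and the new instance size, and we must argue that after the (size-increasing) reductions and a padding step, this is at most $k' \lg n'$ for a new parameter $k'$ depending computably on $k$ only, and that the pl-reduction space bound $f(k)+\Oh(\log n)$ is respected despite $f$ only being computable (here we invoke the Cai–Cai–Downey–Fellows observation, cited in the excerpt, that computable bounds can be assumed space-constructible within logarithmic-in-$f(n)$ space). Everything else — the gadget construction, the pathwidth bound, the \textsc{3Coloring} reduction — is already packaged in the cited lemmas; the work is in verifying that these estimates compose cleanly and that the closure of $\paraNTiSp{f\poly}{f\log}$ under pl-reductions (stated in the text) delivers the "complete" conclusion once membership and hardness are in hand.
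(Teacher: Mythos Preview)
Your proposal is correct and follows essentially the same route as the paper: membership via the standard bag-by-bag non-deterministic colouring along the path decomposition, and hardness by applying the computation gadget of Lemma~\ref{lem:compGadget} (with $h=0$) to an arbitrary $\paraNTiSp{f\poly}{f\log}$ machine and then translating via Lemma~\ref{lem:equivalenceOfProblems}. The paper handles the parameter bookkeeping slightly more tersely---it simply declares the new parameter to be $k'=\Oh(f(k))$ without an explicit padding step, and records the reduction space as $\Oh(\log f(k)+\log n)$ rather than your (looser but still pl-admissible) $\Oh(f(k)+\log n)$---but these are cosmetic differences, not substantive ones.
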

\begin{proof}
To show containment, we give a non-deterministic algorithm solving \pwExample in time $\poly(n)$ ans space $\Oh(s+\log n)$, where $s$ is the pathwidth of the graph decomposition given on the input.
The algorithm proceeds on consecutive bags of the pathwidth decomposition by guessing the color of every vertex when it is introduced for the first time and comparing it with previously guessed colors of adjacent vertices in the bag.
At any moment, only the colors of all vertices in the current bag and the position in the decomposition need to be remembered, hence $\Oh(s + \log n)$ space suffices.
Correctness follows from the fact that every color is guessed exactly once and every edge is checked, because every pair of adjacent vertices is contained in some bag.
Since $k=\frac{s}{\lg n}$ is the parameter of the rescaled problem,
the same algorithm works in time $\poly(n)$ and space $\Oh(k \log n + \log n)$, and hence in time $f(k) \poly(n)$ and space $\Oh(f(k) \log n)$ for $f(k)=k$.

To show hardness, consider any non-deterministic Turing machine $M$ solving a problem with input size $n$ and parameter $k$ in time $t(n,k)=f(k) \poly(n)$ and space $s(n,k)=\Oh(f(k) \log n)$, for some computable function $f$.
Given an instance $\alpha$ of this problem with size $n$ and parameter $k$, the reduction of Lemma~\ref{lem:compGadget}, with $h=0$, $t=t(n,k)$, $s=s(n,k)$, will  output a CNF-SAT instance $\varphi$, using $\Oh(\log (n +s+t))) = \Oh(\log f(k) + \log n)$ space. 
We can easily add clauses on the named variables to enforce the correct shape of the initial configuration and that the final configuration is accepting. Hence, the formula can be modified so that $\varphi$ is satisfiable if and only if $M$ accepts $\alpha$.
The lemma further provides a path decomposition of $\varphi$'s primal graph of width $\Oh(s)=\Oh(f(k) \log n)$; it is easy to verify that the additional clauses do not spoil this bound.
Using Lemma~\ref{lem:equivalenceOfProblems} we can then reduce this instance to an instance of $\pwExample$ with a path decomposition of width $\Oh(f(k) \log n)$, that is, to \pwExample{}$^{\log n}$ with parameter $k'=\Oh(f(k))$.
\end{proof}

Conjecture~\ref{conj:lcs} is thus equivalent to the statement that $\pwExample{}^{\log n}$ is not in $\paraDTiSp{n^f}{f\poly}$, in other words, that $\pwExample$ cannot be solved deterministically in time $n^{f(s/\lg n)}$ and space $f(s/\lg n) \poly(n)$ for any computable function $f$, where $s$ is the width of the input path decomposition.
To contrast pathwidth with tree-depth, one may easily observe (see Lemma~\ref{lem:deter-3col}) that an instance of {\textsc{3Coloring}} with a tree-depth decomposition of depth $s$ can be solved deterministically using $\Oh(s+\log n)$ space. This places
\tdExample{}$^{\log n}$ in $\paraDTiSp{n^f}{f \log}$, a class usually known as $\cfont{XL}$.

\subsection{Statements weaker than Conjecture~\ref{conj:lcs}}
Similarly as in the work of Cai and Juedes~\cite{CaiJ03}, we show that the parameterized complexity of the reparameterized problem $\pwExample{}^{\log n}$ is related to subexponential bounds in the complexity of $\pwExample{}$. Recall that we always assume that instances of $\pwExample{}$ come with appropriate path decompositions of the graph.

\begin{lemma}\label{para-weak-sandwich}
Assuming Conjecture~\ref{conj:lcs}, there is no algorithm for \pwExample working in time $2^{\Oh(s)} \poly(n)$ and space $2^{\oeff(s)}  \poly(n)$ (for all values of pathwidth $s$).
\end{lemma}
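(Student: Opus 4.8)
The plan is to prove the contrapositive via a padding/rescaling argument, using Theorem~\ref{thm:lcs-completeness} together with Theorem~\ref{thm:pwParaCompleteness}. Suppose, towards a contradiction, that \pwExample admits an algorithm running in time $2^{\Oh(s)}\poly(n)$ and space $2^{\oeff(s)}\poly(n)$, say space $2^{\Oh(s/g(s))}\poly(n)$ for some non-decreasing unbounded computable $g$. We want to turn this into an algorithm for \pwExample$^{\log n}$ witnessing membership in $\paraDTiSp{n^f}{f\poly}$; by Theorem~\ref{thm:pwParaCompleteness} and Theorem~\ref{thm:lcs-completeness} (and since $\paraDTiSp{n^f}{f\poly}$ is closed under pl-reductions) this contradicts Conjecture~\ref{conj:lcs}.

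The key computation is just substituting $s=k\cdot\lg n$ into the two bounds. The time becomes $2^{\Oh(k\lg n)}\poly(n) = n^{\Oh(k)}\poly(n) = n^{f(k)}$ for a suitable computable $f$ — this is exactly an $\cfont{XP}$-type bound and causes no trouble. The space becomes $2^{\oeff(s)}\poly(n)$ evaluated at $s=k\lg n$; I need this to be $f(k)\poly(n)$, i.e.\ independent of $n$ up to a polynomial factor. Here is where the effectiveness of the little-$o$ matters: writing the space as $2^{\Oh(s/g(s))}\poly(n)$ with $g$ unbounded, at $s = k\lg n$ we get $2^{\Oh(k\lg n / g(k\lg n))}\poly(n)$. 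Since $g$ is unbounded and non-decreasing, for $n$ large enough (depending on $k$) we have $g(k\lg n)\geq 2k$ — more carefully, $g(k \lg n) \to \infty$ as $n\to\infty$ for fixed $k$ — so the exponent $k\lg n/g(k\lg n)$ is eventually bounded by, say, $\tfrac12\lg n$, giving a factor $2^{\frac12\lg n} = \sqrt n \leq \poly(n)$. For the finitely many small values of $n$ (below the threshold $n_0(k)$ where $g(k\lg n)\geq 2k$ first holds) one absorbs everything into the additive function $f(k)$: the space used there is at most $2^{\oeff(n_0(k))}\cdot n_0(k)^{\Oh(1)} =: f(k)$, a computable function of $k$ alone since $n_0(k)$ is computable from $g$. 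Thus the total space is bounded by $f(k)\poly(n)$, so \pwExample$^{\log n}\in\paraDTiSp{n^f}{f\poly}$, contradicting Conjecture~\ref{conj:lcs}.

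The main obstacle — and the only genuinely delicate point — is handling the threshold $n_0(k)$ cleanly: one must check that $n_0(k)$ is computable (which follows from $g$ being computable and the inverse-function machinery of Cai et al.~\cite{CaiCDF95} already invoked in this section), and that the algorithm can itself decide, given $k$ and $n$, whether $n \geq n_0(k)$ in order to branch between "use the assumed algorithm" and "use brute force with the precomputed $f(k)$ space bound". Everything else is routine arithmetic with the $\oeff$ notation and the $s = k\lg n$ substitution. I would also remark that the same argument shows the analogous statement with $2^{\Oh(s)}$ replaced by $2^{s\cdot g(s)}$ for the converse direction handled in Lemma~\ref{para-strong-sandwich}, but for this lemma the single-exponential time bound is all that is needed.
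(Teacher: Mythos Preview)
Your proposal is correct and follows essentially the same route as the paper: take the contrapositive, substitute $s=k\lg n$, and split cases according to whether $g(k\lg n)$ already dominates $k$ (giving space $\poly(n)$) or not (bounding $n$ by a computable function of $k$ via $g^{-1}$). The only difference is cosmetic: the paper compares $k$ directly with $g(k\lg n)$ and invokes $g^{-1}$ explicitly, whereas you phrase the same threshold as $n_0(k)$.

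One remark: your ``main obstacle'' is not actually an obstacle. There is no need for the algorithm to compute $n_0(k)$ or to branch to brute force --- the \emph{same} assumed algorithm is run on every input, and the case split is purely in the analysis of its space usage. The paper makes this explicit (``in each case, the same algorithm solves\ldots''), which simplifies the argument and removes your concern about deciding $n\geq n_0(k)$ on the fly.
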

\begin{proof}
Suppose to the contrary that \pwExample can be solved in time $2^{\Oh(s)} \poly(n)$ and space $2^{\oeff(s)}  \poly(n)$. We show that $\paraNTiSp{f \poly}{f\log} \subseteq \paraDTiSp{n^f}{f\poly}$, contradicting Conjecture~\ref{conj:lcs}.

The assumption implies that \pwExample{}$^{\log n}$ can be solved in time $2^{\Oh(k \cdot \lg n)} = n^{\Oh(k)}$ and space $2^{k \cdot \lg n / g(k \cdot \lg n)}$ for some unbounded and non-decreasing computable function $g(\cdot)$.
If $k\leq g(k \cdot \lg n)$, then the bound on space is bounded by $n$.
Otherwise, if $k > g(k \lg n) \geq g (\lg n)$, then $n \leq 2^{g^{-1}(k)}$.
In this case the bound on space is bounded by a computable function of $k$, namely $2^{k\cdot g^{-1}(k)}$.
Hence in each case, the same algorithm solves \pwExample{}$^{\log n}$ in time $n^{\Oh(k)}$ and space $n+2^{k\cdot g^{-1}(k)}$. By Theorem~\ref{thm:pwParaCompleteness}, this implies $\paraNTiSp{f \poly}{f\log} \subseteq \paraDTiSp{n^f}{f\poly}$.
\end{proof}

An even weaker statement is equivalent to $\NL \not\subseteq \SC$ by the following padding argument.

\begin{lemma}\label{para-weak-iff}
There is no algorithm for \pwExample working in time $2^{\Oh(s)}\poly(n)$ and space $\poly(s,\log n)$ (for all values of pathwidth $s$) if and only if $\NL \not\subseteq \SC$.
\end{lemma}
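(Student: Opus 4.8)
The plan is to prove both implications by padding arguments, using that $\cfont{3Coloring}$ with a path decomposition of width $s$ captures $\NTiSp{\poly}{s(\poly)}$ via Theorem~\ref{thm:pwCompleteness} (equivalently, Lemmas~\ref{lem:compGadget} and~\ref{lem:equivalenceOfProblems}). Throughout, note that $2^{\Oh(s)}\poly(n)$-time, $\poly(s,\log n)$-space is the ``$\SC$-like'' regime: when $s=\Theta(\log n)$ it degenerates to $\poly(n)$ time and $\poly\log(n)$ space, which is exactly $\SC$; and $\NL$ is exactly $\NTiSp{\poly}{\log}$, which by Theorem~\ref{thm:pwCompleteness} with $s(n)=\log n$ is captured by $\pwExample{}[\log n]$.

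For the forward direction (contrapositive): assume $\NL\subseteq\SC$, so in particular $\pwEx{\log n}$, being in $\NTiSp{\poly}{\log}=\NL$, is solvable in $\poly(n)$ time and $\poly\log(n)$ space. Given an instance of $\pwExample$ on $n$ vertices with a path decomposition of width $s$, I would first dispose of the case $s\geq \log n$ by a self-reduction: the completeness of $\pwEx{\log n}$ under logspace reductions lets us reduce to $\pwEx{\log N}$ where $\log N \approx s$, i.e.\ pad the instance up to size $N=2^{\Theta(s)}$ so that width $s$ becomes the ``logarithmic'' regime of the padded instance. (This padding is precisely the mechanism that Corollary~\ref{cor:td_pw_tw}-style equivalences exploit: add isolated vertices and rescale.) Running the assumed $\poly(N)$-time, $\poly\log(N)$-space algorithm for $\pwEx{\log N}$ on the padded instance then costs $\poly(2^{\Oh(s)}\poly(n))=2^{\Oh(s)}\poly(n)$ time and $\poly\log(2^{\Oh(s)}\poly(n))=\poly(s,\log n)$ space. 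When $s<\log n$ one can simply use a direct $\poly(n)$-time, $\Oh(\log n)$-space nondeterministic algorithm and derandomize via $\NL\subseteq\SC$ directly. Hence $\pwExample$ is solvable in time $2^{\Oh(s)}\poly(n)$ and space $\poly(s,\log n)$ for all $s$, contradicting the hypothesis.

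For the reverse direction (again contrapositive): assume there \emph{is} an algorithm for $\pwExample$ running in time $2^{\Oh(s)}\poly(n)$ and space $\poly(s,\log n)$. I would show $\NL\subseteq\SC$ as follows. Take any $L\in\NL$; then $L$ reduces in logspace to $\pwEx{\log n}$ by Theorem~\ref{thm:pwCompleteness} (the $s(n)=\log n$ instance of the completeness statement). So given an input $x$ of length $n$ we produce in logspace a $\pwExample$-instance of size $\poly(n)$ equipped with a path decomposition of width $\Oh(\log n)$. Feeding this to the assumed algorithm with $s=\Oh(\log n)$ gives running time $2^{\Oh(\log n)}\poly(n)=\poly(n)$ and space $\poly(\log n,\log n)=\poly\log(n)$, i.e.\ $L\in\SC$. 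Composing with the logspace reduction (which runs in $\Oh(\log n)$ space and hence does not spoil the $\SC$ bounds) yields $L\in\SC$, so $\NL\subseteq\SC$.

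The main obstacle, and the only place that needs care, is the self-reduction/padding in the forward direction: one must ensure that transforming a width-$s$ instance into the ``logarithmic-width'' regime of a size-$2^{\Theta(s)}\poly(n)$ instance can itself be done within the target resource bounds (logspace in the original input, producing an implicitly-represented padded instance so that the $2^{\Oh(s)}$ blow-up in instance size does not blow up the \emph{space}), and that the resulting padded path decomposition genuinely has the claimed width after rescaling. The niceness of $s$ (so that $s(\poly(n))=\Oh(s(n))$-type slack behaves, and $s(n)/\lg n$ is monotone) is what makes the rescaling consistent, exactly as in the paragraph following Corollary~\ref{cor:td_pw_tw}. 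Everything else is a routine composition of logspace reductions with the assumed algorithms.
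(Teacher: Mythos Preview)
Your proposal is correct and follows essentially the same approach as the paper: in one direction, specialize the assumed algorithm to $s=\Theta(\log n)$ and use the $\NL$-completeness of $\pwEx{\log n}$ from Theorem~\ref{thm:pwCompleteness}; in the other, pad a width-$s$ instance with isolated vertices to size $n'=\max(n,2^s)$ so that it becomes an instance of $\pwEx{\log n'}$, and apply the $\SC$ algorithm guaranteed by $\NL\subseteq\SC$. Your extra care about implicitly representing the padded instance is a valid technical point the paper glosses over; one minor remark is that in this lemma $s$ denotes the actual width of the given decomposition (a number, not a function of $n$), so the ``niceness of $s$'' considerations you mention in the last paragraph are not actually needed here.
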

\begin{proof}
If there was such an algorithm, then it would solve $\pwEx{\log n}$ in polynomial time and polylogarithmic space. However, from Theorem~\ref{thm:pwCompleteness} it follows that $\pwEx{\log n}$ is complete for $\NL$, so this would imply that $\NL \subseteq \SC$.

For the other direction, suppose $\NL \subseteq \SC$.
Then, $\pwExample$ on instances with path decompositions of width at most $\lg n$ can be solved in polynomial time and space $\Oh(\log^c n)$ for some constant $c$.
Thus, an instance of size $n$ and a path decomposition of width $s$ can be padded to size $n' = \max(n, 2^s)$ (by adding isolated vertices --- neither the answer nor the width changes) and solved by this algorithm in time $\poly(n')$ and space $\Oh(\log^c n')$.
This solves  $\pwExample$ in time at most $\max(\poly(n),\poly(2^s)) \leq 2^{\Oh(s)}\poly(n)$ and space $\max(\log^c n, s^c) \leq \poly(s,\log(n))$.
\end{proof}

\subsection{Statements stronger than Conjecture~\ref{conj:lcs}}
Contrary to the results of Cai and Juedes~\cite{CaiJ03}, in our context we are unable to prove the converse of Lemma~\ref{para-weak-sandwich}.
However, we can get arbitrarily close to it, in a sense.

\begin{lemma}\label{para-strong-sandwich}
If Conjecture~\ref{conj:lcs} fails, then for every arbitrarily slowly growing, unbounded, computable function $g$, \pwExample can be solved in time $2^{s\cdot g(s)} \poly(n)$ and space $2^{\oeff(s)} \poly(n)$ (for all values of pathwidth $s$).
\end{lemma}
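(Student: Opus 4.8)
The plan is to prove the contrapositive: assuming Conjecture~\ref{conj:lcs} fails, I will exhibit for every unbounded computable $g$ an algorithm for \pwExample{} with the stated resources. By Theorem~\ref{thm:lcs-completeness} the failure of Conjecture~\ref{conj:lcs} is equivalent to $\paraNTiSp{f\poly}{f\log}\subseteq\paraDTiSp{n^f}{f\poly}$, and since \pwExample{}$^{\log n}$ belongs to $\paraNTiSp{f\poly}{f\log}$ (shown in the proof of Theorem~\ref{thm:pwParaCompleteness}), this yields a deterministic algorithm $\mathcal A$ and a computable function $f$ such that $\mathcal A$ decides every instance of \pwExample{} whose given path decomposition has width at most $k\lg n$ in time $n^{f(k)}$ and space $f(k)\poly(n)$. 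Adjusting $f$ in the standard way \cite{CaiCDF95} we may take $f$ non-decreasing with $f(k)\ge 2$; if $f$ is bounded the conclusion is immediate, so assume $f$ unbounded, whence $f^{-1}$ is finite, non-decreasing and unbounded. We may likewise assume $g$ non-decreasing, and replacing it by $\min(g(s),\lfloor\log s\rfloor)$ (which only strengthens the conclusion) also that $g(s)\le\log s$; finally, since the target time bound weakens as $g$ grows, it suffices to obtain time $2^{\Oh(s\cdot g(s))}\poly(n)$, the hidden constant being absorbed, as usual, by re-running the argument with a suitably scaled-down copy of $g$ and treating bounded pathwidth separately.

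The core is a rescaling-by-padding argument in the spirit of Cai and Juedes \cite{CaiJ03}. Put $\phi(s):=\max\bigl(1,f^{-1}(g(s))\bigr)$; this is non-decreasing, computable, tends to infinity (both $g$ and $f^{-1}$ do), and satisfies $f(\phi(s))\le g(s)$ past some constant. Fix a constant $s^{*}$ so large that for all $s\ge s^{*}$ we have $\phi(s)\ge 2$, $f(\phi(s))\le g(s)$, and $g(s)$ exceeds every hidden constant occurring below. Given $(G,\mathcal P)$ with $n=|V(G)|$ and $s=\mathrm{width}(\mathcal P)$: if $s<s^{*}$ the pathwidth is bounded by a constant, so a direct dynamic program solves the instance in $\poly(n)$ time and space; if $s\ge s^{*}$, set $p:=\lceil s/\phi(s)\rceil$ and pass to the instance $(G',\mathcal P')$ obtained by appending $\max(0,2^{p}-n)$ isolated vertices as singleton bags, so that $n':=|V(G')|=\max(n,2^{p})$ and $\mathrm{width}(\mathcal P')=s$. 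We never materialise $G'$: we simulate $\mathcal A$ on it with the input accessed implicitly (isolated-vertex queries answered on the fly), which costs only a $\poly(n')$ time overhead. Since $\lg n'\ge p\ge s/\phi(s)$ we get $s\le k'\lg n'$ for $k':=\lceil s/\lg n'\rceil$, so $(G',\mathcal P')$ is a valid instance of \pwExample{}$^{\log n}$ with parameter $k'\le\phi(s)$, and we return $\mathcal A$'s answer, which is correct because padding changes neither $3$-colourability nor pathwidth.

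For the analysis, $k'\le\phi(s)$ together with monotonicity of $f$ and $f(\phi(s))\le g(s)$ gives that $\mathcal A$ runs in time $(n')^{f(k')}\le(n')^{g(s)}=2^{g(s)\lg n'}$ and space $f(k')\poly(n')\le g(s)\poly(n')$. If $\lg n>s$ then $2^{p}\le 2^{s}<n$, so no padding happens, $k'=1$, and both bounds are $\poly(n)$. Otherwise $\lg n\le s$, hence $\lg n'=\max(\lceil\lg n\rceil,p)=\Oh(s)$ (using $p\le s$), so the time is $2^{\Oh(s\cdot g(s))}$, and the $\poly(n')\le 2^{\Oh(s)}$ simulation overhead is absorbed into $2^{\Oh(s\cdot g(s))}$ since $g(s)\ge 1$ for $s\ge s^{*}$, yielding time $2^{\Oh(s\cdot g(s))}\poly(n)$. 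For the space, $\poly(n')\le\poly(n)\cdot 2^{\Oh(p)}=\poly(n)\cdot 2^{\Oh(s/\phi(s))}$, and the factor $g(s)\le\log s<\log n$ (using that an $n$-vertex graph has pathwidth smaller than $n$) is absorbed into $\poly(n)$; as $\phi$ is non-decreasing, unbounded and computable, $s/\phi(s)=\oeff(s)$, so the space is $2^{\oeff(s)}\poly(n)$. The auxiliary quantities $g(s)$, $\phi(s)$, $p$ are computable within these bounds under the standard conventions on constructible bounds recalled earlier \cite{CaiCDF95}.

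The one genuinely delicate point is calibrating the amount of padding: padding too little leaves $k'$ too large to force $f(k')\le g(s)$, while padding beyond $2^{\Oh(s/\phi(s))}$ would push the $\poly(n')$ term in $\mathcal A$'s space usage out of $2^{\oeff(s)}\poly(n)$. Padding to exactly $2^{\lceil s/\phi(s)\rceil}$ with $\phi=\max(1,f^{-1}\circ g)$ reconciles the two requirements, and the crux is precisely that this single function $\phi$ is at once an admissible witness for $\oeff(\cdot)$ and small enough to keep $f(\phi(s))\le g(s)$. Everything else — correctness of the padding, the implicit simulation of $\mathcal A$, the bookkeeping of constructible auxiliary functions, and the removal of the hidden time-constant via the quantification over $g$ — is routine.
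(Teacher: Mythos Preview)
Your proof is correct and takes essentially the same approach as the paper: both arguments pad the instance to size roughly $2^{s/f^{-1}(g(s))}$ so that the reparameterized parameter $k'\approx s/\lg n'$ drops to about $f^{-1}(g(s))$, forcing $f(k')\le g(s)$ and hence time $(n')^{g(s)}\le 2^{s\cdot g(s)}$ and space $\poly(n')\le 2^{\Oh(s/f^{-1}(g(s)))}=2^{\oeff(s)}$. The only cosmetic differences are that the paper splits into three cases (no padding when $\lg n<s\le f^{-1}(g(s))\cdot\lg n$) and obtains $2^{s\cdot g(s)}$ directly without the hidden constant, whereas you fold that middle case into the $\max$ in your definition of $n'$ and later remove the $\Oh$-constant by rescaling $g$; both are fine.
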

\begin{proof}
The assumption is equivalent to $\paraNTiSp{f \poly}{f\log} \subseteq \paraDTiSp{n^f}{f\poly}$, which in turn implies that \pwExample{} can be solved in time $n^{f(s/\lg n)}$ and space $f(s/\lg n) \cdot \poly(n)$ for some computable, increasing function $f$.
Let $g(s)$ be an arbitrarily slowly growing, unbounded, computable function.
Without loss of generality assume that $g$ is non-decreasing and $g(s)\leq s$.
Let $n'=2^{s / f^{-1}(g(s))}$.
Observe\footnote{A careful reader probably noticed that some floors/ceilings are formally necessary here. For the sake of readability, in this and other proofs in this section we ignore such straightforward details, as their introduction would only obfuscate the main ideas.} that
$f(s/\log n') = f(f^{-1}(g(s))) \leq g(s)$ and $f^{-1}(g(s))$ is an unbounded, computable function of $s$.

Consider an instance of \pwExample{} of size $n$ with a path decomposition of width $s$.
If $s \leq \lg n$, then the assumed algorithm runs in time $n^{f(s/\lg n)} \leq n^{f(1)} = \poly(n)$ and space $f(1) \cdot \poly(n) = \poly(n)$.
If $\lg n < s \leq f^{-1}(g(s)) \cdot \lg n$, then $f(s/\lg n)\leq g(s)$ and hence the assumed algorithm runs in time $n^{f(s/\lg n)} \leq n^{g(s)} = 2^{\lg n \cdot g(s)} \leq 2^{s \cdot g(s)}$ and space $f(s/\lg n) \cdot \poly(n) \leq g(s) \cdot \poly(n) = \poly(n)$ (the last equality follows from $g(s)\leq s$ and $s\leq n$).
Finally, if $s > f^{-1}(g(s)) \cdot \lg n$, then $n' > n$, hence we can pad the instance (by adding isolated vertices  --- neither the answer nor the width changes) to size $n'$. Applying the assumed algorithm to the padded instance solves the problem in time
 ${n'}^{f(s/\lg n')} \leq 2^{s \cdot \frac{g(s)}{f^{-1}(g(s))}} \leq 2^{s \cdot g(s)}$
and space 
$f(s/\lg n') \cdot {n'}^{\Oh(1)} \leq g(s) \cdot  2^{\Oh(s / f^{-1}(g(s)) )} = 2^{\oeff(s)}$.
\end{proof}


For a somewhat less natural, stronger variant of Conjecture~\ref{conj:lcs}, we can show a similar, but exact correspondence (note the quasi-polynomial factor on both sides).
The proof is very similar.

\begin{lemma}\label{para-strong-iff}
There is no algorithm for $\pwExample$ working in time $2^{\oeff(s^2)}n^{\Oh(\log n)}$ and space $2^{\oeff(s)}\poly(n)$ (for all values of pathwidth $s$) if and only if
$\paraNTiSp{f\poly}{f\log} \not\subseteq \paraDTiSp{n^{f+\log}}{f\poly}$.
\end{lemma}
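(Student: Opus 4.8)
The plan is to move everything into the parameterized setting via Theorem~\ref{thm:pwParaCompleteness}. Since $\pwExample{}^{\log n}$ is complete for $\paraNTiSp{f\poly}{f\log}$ under pl-reductions, and $\paraDTiSp{n^{f+\log}}{f\poly}$ is closed under pl-reductions, the condition $\paraNTiSp{f\poly}{f\log}\not\subseteq\paraDTiSp{n^{f+\log}}{f\poly}$ is equivalent to $\pwExample{}^{\log n}\notin\paraDTiSp{n^{f+\log}}{f\poly}$, which after unfolding the definition and substituting the width $s=k\cdot\lg n$ for the parameter $k$ reads: there is no computable $f$ such that $\pwExample$ is solvable deterministically in time $n^{f(s/\lg n)+\log n}$ and space $f(s/\lg n)\poly(n)$, where $s$ is the width of the given decomposition. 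So it suffices to show that the existence of such an $f$ is equivalent to the existence of an algorithm for $\pwExample$ running in time $2^{\oeff(s^2)}n^{\Oh(\log n)}$ and space $2^{\oeff(s)}\poly(n)$, and I would prove the two implications separately; the implication from the parameterized algorithm to the subexponential tradeoff mirrors the padding argument of Lemma~\ref{para-strong-sandwich}.

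For that direction, assume $\pwExample$ is solvable in time $n^{f(s/\lg n)+\log n}$ and space $f(s/\lg n)\poly(n)$ with $f$ computable and, without loss of generality, increasing with $f(k)\geq k$ (so $f^{-1}$ is non-decreasing, unbounded, computable, and $f^{-1}(m)\leq m$). On an instance of size $n$ and width $s$, run the assumed algorithm on the instance padded with isolated vertices up to size $n':=\max(n,2^{\lfloor s/f^{-1}(s)\rfloor})$ (padding changes neither the answer nor the pathwidth), and split into three cases. If $s\leq\lg n$, then $f(s/\lg n)\leq f(1)=\Oh(1)$, so the running time is $n^{\Oh(\log n)}$ and the space $\poly(n)$. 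If $\lg n<s$ but no padding occurs, then $k:=s/\lg n$ satisfies $1<k\leq f^{-1}(s)$, the running time equals $2^{f(k)\cdot s/k}\cdot n^{\lg n}$, and the key estimate is that $f(k)/k=\oeff(s)$ holds \emph{uniformly} for $k\in[1,f^{-1}(s)]$ --- split at $k_0:=f^{-1}(\lfloor\sqrt{s}\rfloor)$: for $k\leq k_0$ one has $f(k)\leq\sqrt{s}$, and for $k>k_0$ one has $f(k)/k<s/f^{-1}(\lfloor\sqrt{s}\rfloor)$ --- so the time is $2^{\oeff(s^2)}n^{\Oh(\log n)}$ while the space $f(k)\poly(n)\leq s\cdot\poly(n)=\poly(n)$ (using $s<n$). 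Finally, if padding occurs, then $\lg n'\approx s/f^{-1}(s)=\oeff(s)$, hence $s/\lg n'\approx f^{-1}(s)$, so the time is at most $2^{f(f^{-1}(s))\cdot\lg n'+\lg^2 n'}\leq 2^{\Oh(s^2/f^{-1}(s))}=2^{\oeff(s^2)}$ and the space at most $s\cdot 2^{\Oh(s/f^{-1}(s))}=2^{\oeff(s)}$. Taking the slowest of the unbounded computable functions appearing as $\oeff$-witnesses closes this direction.

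For the converse, assume $\pwExample$ is solvable in time $2^{s^2/g(s)}n^{c\log n}$ and space $2^{s/g(s)}\poly(n)$ for a fixed unbounded, non-decreasing computable $g\geq 1$ and constant $c$. Given an instance of $\pwExample{}^{\log n}$ of size $n$ with parameter $k$, its decomposition has width $s=k\lg n$; substituting and rewriting $2^X=n^{X/\lg n}$, the running time becomes $n^{k^2\lg n/g(k\lg n)+c\lg n}$ and the space $n^{k/g(k\lg n)}\poly(n)$. A two-regime estimate controls the $n$-dependence: if $g(k\lg n)\geq k^2$ then $k^2\lg n/g(k\lg n)\leq\lg n$; otherwise $k\lg n\leq g^{-1}(k^2)$, so $k^2\lg n/g(k\lg n)\leq k\cdot g^{-1}(k^2)$. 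Hence the time exponent is $\Oh(\log n)+\Oh(f(k))$ for the computable $f(k):=k\cdot g^{-1}(k^2)$, and the same argument bounds the space by $2^{g^{-1}(k)}\poly(n)$. Enlarging $f$ to dominate both, $\pwExample{}^{\log n}\in\paraDTiSp{n^{f+\log}}{f\poly}$, and by completeness together with closure under pl-reductions, $\paraNTiSp{f\poly}{f\log}\subseteq\paraDTiSp{n^{f+\log}}{f\poly}$. (Throughout we use that $g$, $g^{-1}$ and $f^{-1}$ can be computed within the required space bounds, as in~\cite{CaiCDF95}.)

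The main obstacle, and what separates this statement from Lemma~\ref{para-strong-sandwich}, is the interplay of the extra quasi-polynomial factors: the $n^{\log n}$ term hidden in $\paraDTiSp{n^{f+\log}}{f\poly}$ on one side, and the $n^{\Oh(\log n)}$ factor together with the $2^{\oeff(s^2)}$ (rather than $2^{s\cdot g(s)}$) time bound on the other. In the padding direction one cannot pad freely, because after padding the running time picks up a factor $2^{\lg^2 n'}$; the resolution is to pad to exactly $\lg n'\approx s/f^{-1}(s)$, which is $\oeff(s)$, so that $\lg^2 n'=\oeff(s^2)$ is still absorbed, while simultaneously $f(s/\lg n')\leq s$ keeps the $f$-part of the exponent at $s^2/f^{-1}(s)=\oeff(s^2)$ as well. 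The secondary subtlety is getting a single $\oeff(s)$ bound on $f(k)/k$ valid for every $k$ up to $f^{-1}(s)$ no matter how fast $f$ grows, which is what the split at $f^{-1}(\sqrt{s})$ achieves. As in Lemma~\ref{para-strong-sandwich}, I would suppress the routine floor/ceiling bookkeeping.
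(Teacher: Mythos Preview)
Your proof is correct and follows essentially the same strategy as the paper: reduce via Theorem~\ref{thm:pwParaCompleteness} to a statement about $\pwExample{}^{\log n}$, then use a two-regime split in one direction and a padding argument in the other.

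The only noteworthy difference is in the padding direction. You pad to $n'=2^{s/f^{-1}(s)}$, a threshold that depends on $s$ only; this forces you into three cases and into proving the uniform bound $f(k)/k=\oeff(s)$ for all $k\in[1,f^{-1}(s)]$ via the split at $f^{-1}(\sqrt{s})$. The paper instead pads to $n'=2^{s/f^{-1}(\lg n)}$, letting the threshold depend on $n$ as well. With this choice the case $n'<n$ is exactly the regime $f(s/\lg n)<\lg n$, so the assumed algorithm already runs in time $n^{\Oh(\log n)}$ and polynomial space without any padding, and the middle case disappears entirely. Your extra lemma about $f(k)/k$ is then unnecessary. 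Both arguments are valid; the paper's is shorter, while yours has the minor conceptual advantage that the padded size is a function of $s$ alone.
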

\begin{proof}
Suppose first there is an algorithm for $\pwExample$ working in time $2^{s^2/g(s)}n^{\Oh(\log n)}$ and space $2^{s/g(s)}\poly(n)$, for some unbounded, non-decreasing, computable function $g$.
Consider an instance of $\pwExample{}^{\log n}$ of size $n$, parameter $k$, and hence equipped with a path decomposition of the graph of width $k\cdot \lg n$.
If $k^2 < g(\lg n)$, then the algorithm solves the instance in time at most $2^{\frac{k^2\lg^2 n}{g(k \lg n)}}n^{\Oh(\log n)} \leq 2^{\Oh(\log^2 n)}$ and space $2^{\frac{k\lg n}{g(k\lg n)}} \leq \poly(n)$.
Otherwise, if $g(\lg n) \leq k^2$, then $n$ is bounded by a computable function of $k$, namely $2^{g^{-1}(k^2)}$.
Hence the algorithm solves the instance in time and space bounded by a computable function $f$ of $k$.
Therefore in any case, the same algorithm solves $\pwExample{}^{\log n}$ in time $f(k)+2^{\Oh(\log^2 n)}$ and space $f(k)+\poly(n)$, which implies $\paraNTiSp{f\poly}{f\log} \subseteq \paraDTiSp{n^{f+\log}}{f\poly}$.

For the converse, suppose now that $\paraNTiSp{f\poly}{f\log} \subseteq \paraDTiSp{n^{f+\log}}{f\poly}$.
Then, there is an algorithm for $\pwExample$ working in time $n^{f(s/\lg n) + \Oh(\log n)}$ and space $f(s/\lg n) \poly(n)$ on instances of size $n$ and with path decompositions of width $s$, for some computable, increasing function $f$.
Given such an instance of $\pwExample$, let $n'=2^{s/f^{-1}(\lg n)}$.
If $n' < n$, then $\frac{s}{\lg n} < f^{-1}(\lg n)$, hence $f(s/\lg n) < f(f^{-1}(\lg n)) \leq \lg n$ and the assumed algorithm works in time $n^{\Oh(\log n)}$ and space $\poly(n)$.
Otherwise, if $n' \geq n$, then we can pad the instance to size $n'$ (by adding isolated vertices --- neither the answer nor the width changes).
Then, since $\lg n'=s/f^{-1}(\lg n)\leq s/f^{-1}(\lg s)=\oeff(s)$,
the padded instance is solved in time $2^{\lg n' \cdot \left(\lg n + \Oh(\lg n')\right)} = 2^{\Oh(\lg^2 n')} = 2^{\oeff(s^2)}$ and space $\lg n \cdot \poly(n') = 2^{\Oh(\lg n')} = 2^{\oeff(s)}$.
\end{proof}



\subsection{A summary}
Theorem~\ref{thm:main} follows from Lemmas~\ref{para-weak-sandwich} and~\ref{para-strong-sandwich}. We summarize the relationships around Conjecture~\ref{conj:lcs} in Figure~\ref{fig:summary}. 
The weakest statement there is $\NL \not\subseteq \SC$, a widely explored hypothesis in complexity theory.
Since {\textsc{Directed $(s,t)$-Reachability}} (asking given a directed graph and two nodes $s,t$, whether is $t$ reachable from $s$) is an \NL-complete problem, this is also equivalent to the question of whether this problem can be decided in polynomial time and polylogarithmic space simultaneously.
However, even this weakest statement is not known to be implied by better known conjectures such as the Exponential Time Hypothesis.
It seems that the simultaneous requirement on bounding two complexity measures---time and space---has a nature that is independent of the usual time complexity considerations. Hence, new complexity assumptions may be needed to explore this paradigm, and we hope that Conjecture~\ref{conj:lcs} may serve as a transparent and robust example of such.

In a certain restricted computation model (allowing operations on graph nodes only, not on individual bits), unconditional tight lower bounds have been proved by Edmonds et~al.~\cite{EdmondsPA99}:
it is impossible to decide {\textsc{Directed $(s,t)$-Reachability}} in time $2^{o(\log^2 n)}$ and  space $\Oh(n^{1-\varepsilon})$ (for any $\varepsilon>0$), even if randomization is allowed.
Essentially all known techniques for solving {\textsc{Directed $(s,t)$-Reachability}} are known to be implementable in this model~\cite{LuZPC05} (including Depth- and Breadth-First Search, as well as the well-known theorems of Savitch, of Immerman and Szelepcs\'{e}nyi, and Reingold's breakthrough),
therefore this strongly suggests that no algorithm running in time $2^{\oeff(\log^2 n)}$ and space $n^{\oeff(1)}$ is possible, that is, $\NL \not\subseteq \DTiSp{2^{\oeff(\log^2 n)}}{n^{\oeff(1)}}$.

By Theorem~\ref{thm:pwCompleteness}, this is equivalent to saying that $\pwEx{\log}$ cannot be solved in these time and space bounds.
The strongest statement on Figure~\ref{fig:summary} is a rescaling of this, that is, it implies $\NL \not\subseteq \DTiSp{2^{\oeff(\log^2 n)}}{n^{\oeff(1)}}$ by a trivial padding argument, but the reverse implication is also probable in the sense that any proof of the latter would likely scale to prove the former.
However, it is still possible that an algorithm working in polynomial space refutes the stronger statement even though $\NL \not\subseteq \DTiSp{2^{\oeff(\log^2 n)}}{n^{\oeff(1)}}$.

\pagebreak
\section{Treedepth}\label{sec:treedepth}
\newcommand{\dum}{\blacklozenge}
\newcommand{\blsize}{\mathbf{b}}

\subsection{Characterization via NTMs with a small auxilliary stack}
In this section we prove a completeness theorem for small tree-depth computations, i.e., Theorem~\ref{thm:main-td}. Let $s: \mathbb{N} \to \mathbb{N}$ be a nice function.
Before we proceed, we discuss more precisely the model of machines used to define the class $\NAuxSA{\poly}{\log}{s}$. The machine has three tapes: a read-only input tape, a working tape of length $\Oh(\log n)$, and a stack tape of length $s(n)$. Each tape contains symbols from some fixed, finite alphabet $\Sigma$. Initially both the working tape and the stack tape are empty, i.e., filled with blank symbols. On each of the tapes there is a head, and the transitions of the machine depend on its state and the triple of symbols under the heads. The access restraints to each of the tapes are as follows: The input tape is a read-only tape. The working tape can be both read and written on by the machine. The stack tape can be read but not freely written on; instead, the transitions of the machine may contain instructions of the form $\texttt{push}\, \sigma$ or $\texttt{pop}$, where $\sigma$ is some non-blank symbol of $\Sigma$. In case of $\texttt{push}\, \sigma$, the first blank symbol of the tape is replaced by $\sigma$. In case of $\texttt{pop}$, the last non-blank symbol of the tape is replaced by a blank. Since $s$ is nice, $s(n)\leq \poly(n)$, so within the working tape the machine can keep track of the current height of the stack and the index on which the stack's head is positioned. The machine accepts if it can reach an accepting state through a sequence of transitions, and for a problem in $\NAuxSA{\poly}{\log}{s}$, there must always be an accepting run where the number of transitions is bounded by a polynomial in $n$.

In this section we show that, for any nice function $s(n)\geq \log^2 n$, \tdEx{s} is complete for $\NAuxSA{\poly}{\log}{s}$ and $\NAuxSA{\poly}{s/\log}{s}$ under logspace reductions. In particular, this implies that these two classes are equal. We start by showing containment, exemplifying how the resources are used.

\begin{lemma}\label{lem:tdContain}
For any nice function $s(n)$, $\tdEx{s}$ is in $\NAuxSA{\poly}{\log}{s}$.
\end{lemma}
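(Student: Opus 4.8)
The plan is to describe, given an instance $G$ of \exampleProblem together with a tree-depth decomposition $\T$ of depth at most $s(n)$, a non-deterministic machine with a small working tape and an auxiliary stack that decides 3-colorability by a depth-first traversal of $\T$, maintaining on the stack exactly the colors of the vertices currently on the root-to-leaf path. First I would observe that the read-only input tape contains $G$ and $\T$ in some natural encoding, and that logspace suffices to navigate $\T$: to compute the parent of a node, list its children, test adjacency in $G$, and so on. Since $s(n)\le \poly(n)$, the working tape (of length $\Oh(\log n)$) can store a constant number of node indices, the current stack height, and a few counters. The invariant maintained throughout the traversal is: if the machine is currently ``visiting'' a node $t$ of $\T$, then the stack contains, from bottom to top, one symbol (an element of $\{1,2,3\}$) for each ancestor of $t$ in $\T$, including $t$ itself, recording the guessed color of the corresponding vertex of $G$ (via the bijection $\mu$). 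The stack height is therefore always at most the depth of $\T$, which is at most $s(n)$, matching the height bound of $\NAuxSA{\poly}{\log}{s}$.

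The traversal itself is the standard DFS on a rooted forest, implemented so that only push/pop access to the stack is used. When the machine first arrives at a node $t$ (corresponding to vertex $v=\mu^{-1}(t)$), it non-deterministically guesses a color $c\in\{1,2,3\}$ for $v$ and performs $\texttt{push}\,c$. It then checks consistency with all already-colored neighbors of $v$: by the definition of a tree-depth decomposition, every neighbor $u$ of $v$ in $G$ satisfies that $\mu(u)$ is an ancestor or a descendant of $t$; the ancestors of $t$ are exactly the vertices whose colors are currently on the stack, so the machine walks down the stack (reading it like a tape — which an AuxSA, as opposed to an AuxPDA, is allowed to do — while simultaneously walking up the ancestor path in $\T$ on the input tape) and, for each ancestor $u$ that is adjacent to $v$ in $G$, verifies that the recorded color differs from $c$; if any check fails, this branch rejects. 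Descendant-neighbors need no check now, as they will be checked against their own ancestor $v$ when they are visited. After processing $t$, the machine descends into the children of $t$ one by one (recursing on each), and when all children are done — or immediately, if $t$ is a leaf — it performs a $\texttt{pop}$ to remove $c$ and returns to the parent of $t$, resuming with the next unvisited sibling. Handling the forest structure means iterating this over all roots of $\T$ in turn. The machine accepts iff every branch of the DFS completes without a consistency failure.

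Correctness is the routine part: an accepting run corresponds exactly to a choice of colors, one per vertex, such that every edge $uv$ — which lies on some root-to-leaf path of $\T$, say with $\mu(u)$ an ancestor of $\mu(v)$ — was checked when $v$ was visited and found to be properly colored; conversely, a proper 3-coloring gives the sequence of non-deterministic guesses that makes the run accept. The number of transitions is polynomial: the DFS visits each node of $\T$ a constant number of times (push, at most one stack-scan of length $\le s(n)\le \poly(n)$, pop), each stack-scan is $\poly(n)$, and navigating $\T$ from the input tape costs $\poly(n)$ per step, so the total is $\poly(n)$. Finally, the machine must also verify the promise parts of $\tdEx{s}$: that the given decomposition is a valid tree-depth decomposition of $G$ and that its depth is at most $s(n)$ — both checkable in logspace, as remarked in the preliminaries, and hence foldable into the machine. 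I do not expect a genuine obstacle here; the only point requiring a little care is making the ancestor-consistency check use solely push/pop for writing while exploiting the free read access of a stack, and making sure the working-tape bookkeeping (stack height, current and parent node pointers, child-iteration counters) genuinely fits in $\Oh(\log n)$ — which it does, since all quantities involved are bounded by $\poly(n)$.
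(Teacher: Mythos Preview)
Your proposal is correct and follows essentially the same approach as the paper: a non-deterministic DFS over the given tree-depth decomposition, pushing a guessed color when entering a node, checking that color against all ancestors adjacent in $G$ by reading down the stack, and popping on the way back up. The paper's proof is somewhat terser (it does not spell out the child-iteration bookkeeping or the promise checks as you do), but the algorithm and its analysis are the same.
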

\begin{proof}
The input consists of a graph $G$ and a mapping of its vertices into a rooted forest $\mathcal{T}$; we assume any natural encoding for which validity of the decomposition can be checked in logarithmic space. 
We now give a non-deterministic algorithm showing that \tdEx{s} belongs to $\NAuxSA{\poly}{\log}{s}$. The algorithm considers the trees of $\mathcal{T}$ one by one, and for each tree $T$ of $\mathcal{T}$ it explores $T$ by a depth-first search beginning from the root. Since the trees are rooted and for each vertex we store its parent and the list of its children, it suffices to maintain only the identifier of the current vertex during the search. When entering a node, the algorithm non-deterministically guesses its color (encoded using a constant number of bits) and pushes it onto the stack. When the depth-first search withdraws from a vertex to its parent, the algorithm pops its color from the stack. Thus, the stack always contains the list of guessed colors of vertices on the path from the current vertex to the root of its tree. To verify the correctness of the coloring, after guessing the color of some vertex $u$, we check that all its ancestors in $\mathcal{T}$ that are adjacent to it have different colors; this information can be retrieved from the stack. In this way, the color of each vertex is guessed exactly once, and for each edge of the tree we verify that its endpoints have different colors when considering the endpoint that is lower in $\mathcal{T}$. Therefore, the machine accepts if and only if the graph has a proper 3-coloring.
\end{proof}


Clearly when $s(n)\geq \log^2 n$, $\NAuxSA{\poly}{\log}{s} \subseteq \NAuxSA{\poly}{s/\log}{s}$.
The next step is to show how the  stack operations of the latter class' machines can be regularized. This idea originates in the approach of Akatov and Gottlob~\cite{AkatovG10}.
Following their ideas, we define a regular stack machine in the following way.
For any valid sequence $S$ of push and pop operations that starts and ends with an empty stack, define the corresponding push-pop tree $\tau(S)$ to be the ordered tree (a tree with an order imposed on the children of each node) in which a depth-first search would result in the sequence $S$; specifically,
\begin{itemize}
\item $\tau(\varepsilon)$ is a single root node,
\item $\tau(\texttt{push}\ S\ \texttt{pop})$ is a new root node with $\tau(S)$ attached as the only child subtree,
\item $\tau(S_1 S_2)$ is obtained from $\tau(S_1)$ and $\tau(S_2)$ by identifying their roots, and putting all the children of the root of $\tau(S_1)$ before all children of the root of $\tau(S_2)$.
\end{itemize}

We say that a language is in $\text{reg-}\NAuxSA{\poly}{s/\log}{s}$ if it is recognized by an NTM with $s(n)/\log(n)$ working space and an auxiliary stack of height $s(n)$ that has the following properties:
\begin{enumerate}[label=(\alph*)]
\item\label{r:blocks} The machine pushes and pops blocks of $\lceil s(n)/\lg(n) \rceil$ symbols at a time. More precisely, the machine uses a pre-specified block size $\blsize=\lceil s(n)/\lg(n) \rceil$. The reader may imagine that the $\texttt{push}$ operation causes a simultaneous push of a block of $\blsize$ symbols stored on the working tape; say, from the first $\blsize$ positions of the working tape. The $\texttt{pop}$ operation causes a simultaneous pop of a block of $\blsize$ symbols from the stack (i.e., replacing them with blank symbols on the stack tape). 
\item\label{r:conf} Whenever the machine decides to push or pop, it can only change its state. That is, the heads cannot move and the content of the working tape does not change. Moreover, the decision about using a push or pop transition is done solely based on the machine's state, that is, such transitions are available if and only if the machine's state belongs to some subset of states, independently of the symbols under the machine's heads.
\item\label{r:binary} If the machine accepts an input $\alpha$, then there is a run on $\alpha$ where the corresponding push-pop tree (where each operation of pushing/popping a block is considered atomic) is the full binary tree of depth exactly $c \lceil \lg n\rceil$, for some fixed integer $c$. In particular, at the moment of accepting the stack is empty.
\end{enumerate}
Obviously, the block pushes and pops described in restriction~\eqref{r:blocks} can be simulated by a standard machine in $\Oh(\blsize)$ steps, so their introduction does not give additional expressive power to the computation model. 

Restriction~\eqref{r:conf} is a technical adjustment that will be used to streamline future constructions. Intuitively, restriction~\eqref{r:blocks} is easy to achieve, because the machine has enough working space to simulate the top $\lceil s(n)/\lg(n) \rceil$ symbols from the stack on the working tape, and group pushes and pops into blocks of size $\blsize$. The most important restriction is~\eqref{r:binary}: the push-pop tree has a fixed shape of a full binary tree. This property will be essential when reducing an arbitrary problem from $\NAuxSA{\poly}{s/\log}{s}$ to \tdEx{s}, because the push-pop tree will form a ``skeleton'' of the graph output from the reduction. In order to achieve this property, we use the following technical result of Akatov and Gottlob~\cite{AkatovG10,akatov2010exploiting}, which was also used independently by Elberfeld et al.~\cite{ElberfeldJT10}.
The \emph{traversal ordering} of the nodes of an ordered tree is the linear ordering which places a parent before its children and, for two children $a,b$ of a node, $a$ occurring before $b$, places all descendants of $a$ before all descendants of $b$. In other words, the traversal ordering is the ordering in which the nodes are visited in a depth-first search started at the root.

\begin{lemma}[Lemma 3.3 of \cite{akatov2010exploiting}; Theorem 3.14 of \cite{ElberfeldJT10}]\label{lem:treeEmbedding}
Given an ordered tree $T$ with $n$ nodes and depth at most $\lg n$, one can in logarithmic space compute an embedding (an injection that preserves the ancestor relation and traversal ordering) into a full binary tree of depth $4 \lg n$.
\end{lemma}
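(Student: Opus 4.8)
\emph{Proof plan.} I identify the nodes of the full binary tree with binary strings (root $=\varepsilon$; going to the left child appends $0$, to the right child appends $1$), so that ancestry becomes the prefix relation and the traversal ordering becomes lexicographic order under the convention that a string precedes all its proper extensions. The task then becomes: assign to each node $v$ of $T$ a binary string $\phi(v)$ of length at most $4\lg n$ so that $\phi$ sends the ancestor relation of $T$ to the prefix relation and the traversal order of $T$ to lexicographic order (such a $\phi$ is automatically injective, by strict monotonicity). I would build $\phi$ by recursion along $T$: the root goes to $\varepsilon$; and if $v$ is already placed at the string $\phi(v)$ and has children $c_1,\dots,c_m$ whose subtrees have $a_1,\dots,a_m$ nodes (so $W:=\sum_i a_i = n_v - 1$, where $n_v$ is the number of nodes in the subtree rooted at $v$), then I pick a \emph{routing code}: a prefix-free family $\gamma_1,\dots,\gamma_m$ of nonempty binary strings, ordered as the leaves of a binary tree consistently with $1,\dots,m$, with $|\gamma_i|\le \lceil\lg(W/a_i)\rceil + 1$. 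Such an \emph{alphabetic} code is exactly what the classical Gilbert--Moore / Shannon--Fano--Elias construction produces, with $\gamma_i$ the first $\lceil\lg(W/a_i)\rceil+1$ bits of the binary expansion of $\big(\sum_{j<i}a_j + a_i/2\big)/W$. I then set $\phi(c_i):=\phi(v)\,\gamma_i$ and recurse inside each $T_{c_i}$, with $\phi(c_i)$ taking over the role of the root string.

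The depth bound is obtained by telescoping along a root-to-$v$ path $r=u_0,u_1,\dots,u_k=v$, where $k\le \lg n$ since $T$ has depth at most $\lg n$: writing $n_j$ for the number of nodes in the subtree at $u_j$ (so $n_0=n$ and $n_k\ge 1$), the routing code spent at step $j$ has length at most $\lceil\lg(n_{j-1}/n_j)\rceil+1\le \lg(n_{j-1}/n_j)+2$, so $|\phi(v)|\le \sum_{j=1}^k\big(\lg(n_{j-1}/n_j)+2\big)=\lg n+2k\le 3\lg n<4\lg n$ (tiny $n$ handled directly, and the target tree padded to depth exactly $4\lg n$). Correctness is then routine: a proper descendant always receives a proper string extension (each $\gamma_i$ is nonempty), so the ancestor relation maps to the prefix relation; and two $T$-incomparable nodes diverge at some branching into distinct $\gamma_i\ne\gamma_{i'}$ which, being prefix-free, are prefix-incomparable, so incomparable nodes map to prefix-incomparable strings. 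For the traversal order, at every branching the $\gamma_i$ are ordered by index as leaves of a binary tree, hence the $T$-preorder among $c_1,\dots,c_m$ \emph{together with their subtrees} is respected; since moreover $\phi(v)$ is a common prefix of all the $\phi(c_i)$, a straightforward induction shows that $\phi$ preserves the entire preorder of $T$.

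For the logarithmic-space bound, I would compute $\phi(v)$ without ever materializing the recursion stack. Given $v$, one recovers the root-to-$v$ path in $T$ (following parent pointers and reversing, within $\Oh(\log n)$ space) and, for each node on it, computes only the single routing codeword needed to descend towards $v$. Each codeword requires just a few subtree sizes, and a subtree size $n_u$ is obtained by counting the nodes $w$ for which $u$ is an ancestor of $w$ --- and ``$u$ is an ancestor of $w$'' is decided in $\Oh(\log n)$ space by walking upward from $w$ in $T$. The Gilbert--Moore codeword is then elementary arithmetic on $\Oh(\log n)$-bit numbers. Recomputing sizes on demand inflates the running time but keeps the space at $\Oh(\log n)$.

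The step I expect to be the real obstacle is pinning down the depth. A generic Kraft-optimal prefix code would give routing codewords of length $\lceil\lg(W/a_i)\rceil$, but it need not be realizable with the children in the prescribed left-to-right order, which would wreck the traversal ordering. One genuinely needs a code that is simultaneously near-optimal in length \emph{and} alphabetic, which is precisely the content of the Gilbert--Moore bound $\lceil\lg(1/p_i)\rceil+1$ on alphabetic codeword lengths; one then has to verify that the extra ``$+1$'' per level, accumulated over the at most $\lg n$ levels, still leaves the total strictly below $4\lg n$ --- which the telescoping above confirms.
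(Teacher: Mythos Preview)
The paper does not contain a proof of this lemma; it is quoted verbatim from external sources (Akatov's thesis and Elberfeld--Jakoby--Tantau) and used as a black box. So there is no ``paper's own proof'' to compare against.

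That said, your argument is correct and self-contained. Encoding the full binary tree by binary strings and routing each fan-out via a Gilbert--Moore (Shannon--Fano--Elias) alphabetic code is exactly the right tool: it gives a prefix-free, order-respecting code with lengths $\lceil\lg(W/a_i)\rceil+1$, and your telescoping bound $|\phi(v)|\le \lg n + 2k\le 3\lg n$ is tight enough to fit inside $4\lg n$. The verification that the map preserves ancestry (both directions) and traversal order is sound, and the logspace implementation---recomputing subtree sizes on demand via ancestor tests, then doing $\Oh(\log n)$-bit arithmetic for the cumulative-sum codeword---is standard. One small point worth making explicit: when you say the $\gamma_i$ are ``ordered as the leaves of a binary tree consistently with $1,\dots,m$,'' you are using that the dyadic intervals $[F_i - 2^{-\ell_i}, F_i + 2^{-\ell_i})$ are pairwise disjoint and ordered; this follows from $F_{i+1}-F_i \ge \max(2^{-\ell_i},2^{-\ell_{i+1}})$, but it is the one place where a reader might want a line of justification.

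For context, the cited proofs proceed somewhat differently (Elberfeld et al.\ go through balanced width-$3$ tree decompositions of the input tree, while Akatov's argument is closer in spirit to yours); your coding-theoretic route is arguably the cleanest way to get the explicit $4\lg n$ bound together with an obvious logspace implementation.
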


We are now ready to prove that every problem from $\NAuxSA{\poly}{s/\log}{s}$ can be recognized by a regularized machine, in a similar way as was the case in~\cite{AkatovG10}.

\begin{lemma}\label{lem:tdRegularize}
$\NAuxSA{\poly}{s/\log}{s} \subseteq \text{reg-}\NAuxSA{\poly}{s/\log}{s}$.
\end{lemma}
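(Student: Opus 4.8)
The plan is to take an arbitrary machine $M$ witnessing membership in $\NAuxSA{\poly}{s/\log}{s}$ and transform it, through three successive normalizations, into an equivalent regular machine $M'$. First I would achieve conditions~\eqref{r:blocks} and~\eqref{r:conf}: keep the top partial block of at most $\blsize=\lceil s(n)/\lg n\rceil$ stack symbols on the working tape (an $\Oh(s/\log n)$ overhead, which $M$ already affords), group $M$'s single-symbol pushes and pops into block operations of size $\blsize$ (push the buffer once it is full, pop a whole block into the buffer once it is empty, locating blocks by reading the stack), and split every transition that performs a stack operation into a pure-computation step entering a dedicated push-/pop-state, followed by the stack step itself, which by design moves no head, touches no working cell, and is the unique transition from that state. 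Call the result $M_1$: it still runs in polynomial time $n^k$ using $\Oh(s/\log n)$ space, and its push-pop tree $\tau$ (with block operations atomic) has at most $N\le n^k+1$ nodes and depth at most $s(n)/\blsize\le \lg n$.

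Next, fix the full binary tree $B$ of depth $c\lceil\lg n\rceil$ for a constant $c$ depending only on $k$. Padding $\tau$ with dummy children of its root so that it has between $n$ and $2n^k$ nodes (which does not change its depth), Lemma~\ref{lem:treeEmbedding} guarantees, for $c$ large enough, an embedding $\phi$ of $\tau$ into $B$ preserving the ancestor relation and the traversal ordering. The crucial point is that $M'$ never computes $\phi$; it only uses its existence. The machine $M'$ performs a complete depth-first traversal of $B$, pushing one block when it descends to a child and popping it when it ascends; this makes $M'$'s push-pop tree exactly $B$, so condition~\eqref{r:binary} holds, while the block size and transition shape from $M_1$ carry over to give~\eqref{r:blocks} and~\eqref{r:conf}. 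Simultaneously $M'$ simulates a run of $M_1$: on the way down it nondeterministically marks some nodes of $B$ as \emph{special}; the special nodes visited, in traversal order, play the role of the nodes of $\tau$. When the traversal first enters a non-root special node, $M'$ performs $M_1$'s push transition; when it leaves a special node for the last time, $M'$ performs $M_1$'s pop transition; while at a special node and between excursions to its special descendants, $M'$ simulates the pure-computation steps of $M_1$ associated with that node of $\tau$. The current $M_1$-configuration lives on $M'$'s working tape and simply flows through the traversal; the block $M'$ pushes at a special node stores the block $M_1$ pushes there plus a marker bit, so $M_1$'s read-only accesses to its stack are served by walking down $M'$'s stack and skipping the dummy blocks; at dummy nodes $M'$ pushes a marker block and does nothing for the simulation. $M'$ accepts iff it completes the traversal of $B$ \emph{and} $M_1$ reached an accepting configuration along the way (afterwards $M'$ keeps traversing $B$ with all remaining levels dummy).

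Correctness forward: an accepting run of $M$ yields $M_1$'s run, its tree $\tau$, and $\phi$; marking $\phi(\tau)$ special and weaving $M_1$'s run into the DFS of $B$ is an accepting run of $M'$. Conversely, collapsing dummy levels of any accepting run of $M'$ recovers an accepting run of $M_1$, hence of $M$. For the resource bounds: $B$ has $\poly(n)$ nodes, the simulated $M_1$-steps total $n^k$, and each block operation costs $\Oh(\blsize)$, so $M'$ runs in polynomial time; it uses $\Oh(s/\log n)$ working space (current $B$-node, current $M_1$-configuration, one block buffer, DFS bookkeeping) and a stack of height $c\lceil\lg n\rceil\cdot\blsize=\Oh(s(n))$ — all up to the constant factors hidden in the class notation. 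The main obstacle is conceptual and lies in this last construction: realizing that one need not (and, within $\Oh(s/\log n)$ space, cannot) compute or store the embedding $\phi$, but can instead let $M'$ guess the special levels online and let $M_1$'s configuration flow through a canonical DFS of the fixed tree $B$, with dummy pushes/pops filling the gaps dictated by an embedding whose mere existence is supplied by Lemma~\ref{lem:treeEmbedding}. The fiddly verification points are that an $M_1$-configuration fits into one block and that $M_1$'s arbitrary stack reads can be served by skipping dummy blocks of $M'$'s stack without breaking the time or space budget.
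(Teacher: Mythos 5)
Your proposal is correct and follows essentially the same route as the paper's proof: the same working-tape buffer to group stack operations into blocks of size $\lceil s(n)/\lg n\rceil$, the same splitting of transitions to isolate pure push/pop steps, and the same crucial idea of using Lemma~\ref{lem:treeEmbedding} only existentially, letting the machine nondeterministically insert blocks of dummy symbols (skipped during stack reads) so that some accepting run has the full binary tree as its push-pop tree. The only cosmetic difference is that you hard-wire a canonical DFS of the full binary tree with guessed ``special'' nodes, whereas the paper lets the machine track its position in the tree and freely guess when to push or pop dummy blocks; the correctness argument via the embedding is identical.
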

\begin{proof}
Consider a machine $M$ placing $L$ in $\NAuxSA{\poly}{s/\log}{s}$.
We modify $M$ to comply with restrictions~\eqref{r:blocks},~\eqref{r:conf},~and~\eqref{r:binary}.

First, to achieve restriction~\eqref{r:blocks}, we designate the first $\blsize=\lceil s(n)/\lg(n) \rceil$ of the working tape as a buffer for simulating the top of the stack. Whenever during the run $M$ has $p$ symbols on the stack, then after modification the top-most $p\mod \blsize$ symbols are stored in the buffer, while the remaining symbols are stored in $\lfloor p/\blsize\rfloor$ blocks on the stack. The operations on the stack are simulated in the buffer. Whenever the size of the buffer reaches full size $\blsize$, the modified machine invokes the block $\texttt{push}$ operation and clears the buffer. Whenever $M$ wants to pop a symbol but the buffer is empty, the modified machine copies the top $\blsize$ symbols from the stack to the buffer, invokes the block $\texttt{pop}$ operation, and then simulates the pop of $M$ in the buffer. It is straightforward to simulate the read access to the stack tape with a polynomial-time overhead in the running time.

To achieve restriction~\eqref{r:conf}, whenever the machine would like to push or pop, we split this transition into three. In the first transition, the machine only verifies the symbols under the heads, and enters a state ``ready to push/pop'' where it additionally remembers the target state and additional operations (moving the heads) to be performed after the stack operation. Then, in the second transition, it performs only the push or pop; note that this transition does not depend on the symbols under the heads, and the heads also do not move. Finally, in the third transition the machine performs the remembered head movements and reaches the target state.

Finally, we concentrate on restriction~\eqref{r:binary}. Let us add a dummy symbol $\dum$ to the alphabet. Suppose that on some input $\alpha$ of length $n$ the machine $M$ (after the modifications above) has an accepting run. By a simple modification, we may assume that $M$ always empties the stack before accepting. Since the maximum stack height is still $\Oh(s(n))$ and stack operations are done in blocks of $\blsize=\lceil s(n)/\lg n \rceil$ operations at a time, the corresponding push-pop tree $T$ (where blocks operations are considered as atomic) has depth at most $c_1\lg n$ and size at most $n^{c_2}$ for some integers $c_1,c_2$. By Lemma~\ref{lem:treeEmbedding}, there exists an integer $c$ for which $T$ can be always embedded into a tree $T_0$ that is a full binary tree of depth exactly $c\lceil \lg n\rceil$. We now modify $M$ so that it has also an accepting run whose push-pop tree is $T_0$.

We modify machine $M$ in the following manner. First, $M$ will keep track of the current position of the computation in the push-pop tree $T_0$, encoded as binary string of length at most $c\lceil \lg n\rceil$. We add the possibility for $M$ to guess non-deterministically, at any moment of the computation, to push a block of $\blsize$ dummy symbols without changing the machine's configuration, apart from updating the current position in $T_0$ (if the computation is in a leaf of $T_0$, then this transition is not allowed). Similarly, $M$ can, at any moment, pop the top-most block of the stack provided it consists only of dummy symbols, update the position in $T_0$, without changing the configuration otherwise. The read access to the stack tape is simulated by ignoring the dummy symbols, i.e., the head always continues browsing the tape until a non-dummy symbol is found. Since the dummy symbols are effectively ignored by the computation, it is easy to see that the modified machine accepts if and only if the original one accepts.

Take now an accepting run of $M$ on $\alpha$ and consider an embedding $\eta$ of its push-pop tree $T$ into $T_0$. Construct a run of modified $M$ on $\alpha$ by adding non-deterministic pushes and pops of blocks of dummy symbols $\dum$ for all nodes of $T_0$ that are not in the image of $\eta$. Thus, the push-pop tree of the modified run is exactly $T_0$. 
\end{proof}



Knowing that computations for NAuxSA can be conveniently regularized, we can describe the existence of such a computation by a CNF formula ``wrapped around'' the rigid shape of the full binary tree that encodes the push-pop tree of the run. 
This was also the idea of Akatov and Gottlob~\cite{AkatovG10}, but our reduction needs to introduce many more elements, in particular copies of the gadget of Lemma~\ref{lem:compGadget}, due to a less expressive target language.

\begin{lemma}\label{lem:tdHardness}
Suppose $L \in \text{reg-}\NAuxSA{\poly}{s/\log}{s}$.
Then $L\lred\text{td-CNF-SAT}[s]$.
\end{lemma}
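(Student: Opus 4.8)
The plan is to encode an accepting run of the regularized machine $M$ for $L$ into a CNF-SAT instance whose primal graph has tree-depth $\Oh(s)$, using the full binary tree $T_0$ of depth $c\lceil\lg n\rceil$ (the push-pop tree guaranteed by restriction~\eqref{r:binary}) as the rigid ``skeleton'' around which everything is wrapped. The key observation is that a run of $M$ decomposes into a collection of maximal \emph{push/pop-free fragments}: between two consecutive stack operations the machine behaves like an ordinary NTM on polynomially many steps, with the input tape $\alpha$ and the current stack contents as (two) read-only tapes. By restriction~\eqref{r:conf}, each push or pop is a single transition that changes only the state, so the run is naturally segmented by the nodes and edges of $T_0$: each node $x$ of $T_0$ corresponds to the push/pop-free fragment executed ``at'' $x$ (between arriving at $x$ and descending to a child, or between returning from a child and descending to the next, or between the last child and popping back to the parent). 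Since a binary tree of depth $c\lceil\lg n\rceil$ has $\Oh(\poly(n))$ nodes and the run is polynomial, there are polynomially many fragments, each of polynomial length.

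First I would, for every node $x$ of $T_0$, and for each of the at most three ``slots'' at $x$ (before first child, between the two children, after second child), create one copy of the computation gadget of Lemma~\ref{lem:compGadget} with parameters $t=\poly(n)$, working-space bound $s'=\Theta(\blsize)=\Theta(s/\log n)$, and $h=\blsize$ for the word $\bar w$ representing the block on top of the stack that is visible during that fragment. The named variables $\mathbf{u},\mathbf{v}$ of the gadget encode the machine configuration (state, head positions, $\Oh(s/\log n)$ symbols of working tape) at the endpoints of the fragment, and $\bar w$ is identified with the block of $\blsize$ symbols pushed at $x$ (this block is literally the top of the stack throughout the subtree rooted at $x$). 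I would then stitch these gadgets together with small ``glue'' clauses following $T_0$: the end configuration $\mathbf{v}$ of the fragment preceding a push at $x$'s child is linked (through the single push transition, which only changes the state and records the pushed block) to the start configuration $\mathbf{u}$ of the child's first fragment; symmetrically for pops; and the pushed block at a child node is identified with the working-tape buffer variables of the parent's configuration at the moment of the push. Finally I add clauses fixing the root's initial fragment to start in the initial configuration and the root's final fragment to end in an accepting configuration with empty stack. Satisfiability of the whole formula is then equivalent, by Lemma~\ref{lem:compGadget} applied fragment-by-fragment plus the trivial correctness of the glue transitions, to the existence of a regularized accepting run, i.e.\ to $\alpha\in L$. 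Constructibility in logspace follows because $T_0$ has an explicit indexing and each gadget is produced in logspace by Lemma~\ref{lem:compGadget}; then Lemma~\ref{lem:equivalenceOfProblems} converts CNF-SAT to \exampleProblem preserving tree-depth linearly.

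The tree-depth bound is where I would spend the most care, and it is the main obstacle. Each individual computation gadget has tree-depth $\Oh(s'\cdot\log(\poly(n))+h)=\Oh((s/\log n)\cdot\log n + s/\log n)=\Oh(s)$ by Lemma~\ref{lem:compGadget} — here the assumption $s(n)\geq\log^2 n$ is exactly what makes $\blsize=\lceil s/\lg n\rceil\geq\lg n$, so that $h$ does not dominate. The issue is assembling $\poly(n)$ such gadgets into one decomposition of total depth $\Oh(s)$, not $\Oh(s\cdot\operatorname{height}(T_0))=\Oh(s\log n)$. The plan is to build the tree-depth decomposition recursively along $T_0$: for a node $x$ with pushed block $\bar w^{(x)}$, place atop the decomposition of the subtree rooted at $x$ exactly the $\Oh(s/\log n)$ variables of $\bar w^{(x)}$ (the block living at $x$), \emph{and nothing more that is shared deeper}. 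The crucial point is that the only variables shared between the gadget at $x$ and the gadgets strictly below $x$ are the block $\bar w^{(x)}$ (which the deeper gadgets see as their read-only second tape): the local configuration variables of $x$'s fragments are not touched below. Hence along any root-to-leaf path in $T_0$ we accumulate the blocks $\bar w^{(x)}$ of the ancestors, i.e.\ $\Oh(\log n)$ blocks of $\Oh(s/\log n)$ variables each — totalling $\Oh(s)$ variables — placed above, and then one more gadget of tree-depth $\Oh(s)$ at the bottom, for a grand total of $\Oh(s)$. This is precisely the intuition that the stack of height $s$ contains $\Oh(\log n)$ blocks and each block, once pushed, never moves while deeper computation runs, so it can be hoisted to the top of the relevant subtree. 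I would formalize this by a straightforward induction on the subtrees of $T_0$, carrying the invariant ``a tree-depth decomposition of the sub-formula for the subtree rooted at $x$, of depth $\Oh(s)$, such that the block variables $\bar w^{(x)}$ and the interface configuration variables at $x$ all lie on the topmost $\Oh(s/\log n)$ levels''. The logspace bound on producing this decomposition follows because, as in Lemma~\ref{lem:compGadget}, we only ever need to know which node of $T_0$ and which gadget we are in, all computable on the fly.
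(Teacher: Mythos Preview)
Your overall architecture is the same as the paper's: wrap computation gadgets around the rigid full binary tree $T_0$, glue them along edges via the single push/pop transitions, and build the tree-depth decomposition by stacking the $\Oh(s/\log n)$ block and configuration variables of each node along the $\Oh(\log n)$-deep tree, then hanging each gadget underneath. That part is fine and matches the paper.

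There is, however, a genuine gap in your encoding. You set $h=\blsize$ in each computation gadget and say $\bar w$ represents ``the block on top of the stack that is visible during that fragment''. But in the NAuxSA model the machine's head can range over the \emph{entire} stack tape, not just the top block; that is precisely what distinguishes AuxSA from AuxPDA. During a push/pop-free fragment at a node $x$ of depth $d$, the stack holds $d$ blocks totalling $d\cdot\blsize$ symbols, all of them readable. With $h=\blsize$ your gadget gives the simulated machine access to only one block, so the formula you build is not equivalent to acceptance by $M$. (Your own tree-depth argument already betrays the tension: you carefully place \emph{all} ancestor blocks above the gadget at $x$, which would be pointless if the gadget only touched the single top block.)

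The fix is exactly what the paper does: at a node $x$ of depth $d$, take $h=d\cdot\blsize\le\Oh(s)$ and identify the gadget's $w$-variables with the concatenation of the $\texttt{stack}$ blocks along the root-to-$x$ path. The tree-depth bound from Lemma~\ref{lem:compGadget} then reads $\Oh\bigl((s/\log n)\cdot\log(\poly(n))+h\bigr)=\Oh(s)$, and your decomposition argument goes through unchanged because those $w$-variables are precisely the ancestor blocks you already placed above. So the repair is local, but as written the proposal does not correctly model the machine's read access to the stack.
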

\begin{proof}
Let $M$ be an appropriate machine recognizing $L$; by the assumption, $M$ satisfies restrictions~\eqref{r:blocks},~\eqref{r:conf},~and~\eqref{r:binary}. Consider an input word $\alpha$; let $n=|\alpha|$.
Let $T$ be the push-pop tree of $M$ on input $\alpha$. By definition $T$ is the full binary tree of depth exactly $c \lceil\lg n\rceil$ for some fixed integer $c$ that depends on $L$ only. We assume that the machine works in time at most $t(n)$ and uses work tape of size at most $s(n)/\log n$.

Before we proceed to the formal description, let us elaborate on the intuition; see Figure~\ref{fig:computation} for a visualization. An Euler tour of the nodes and edges of $T$ corresponds to subsequent phases of $M$'s execution on $\alpha$. We think of the computation as starting at the root node, moving down an edge whenever a push is made and moving up an edge whenever a pop is made. Since $T$ is fixed, the idea is to construct a gadget for each node of $T$ and to wire the gadgets so that they encode the full computation of $M$. Each node gadget will contain three copies of the computation gadget of Lemma~\ref{lem:compGadget}. These copies respectively encode the parts of the computation before going into the first subtree, between withdrawing from them first subtree and proceeding to the second subtree, and after withdrawing from the second subtree. Each part of the computation depends on symbols pushed onto the stack on the path to the root, but is independent of the guesses in different branches. This, together with the bound of Lemma~\ref{lem:compGadget} on the tree-depth of the computation gadget, will give rise to a natural tree-depth decomposition of depth $\Oh(s(n))$ of the obtained graph.

\begin{figure}[htbp!]
                \centering
                \includegraphics[width=\textwidth]{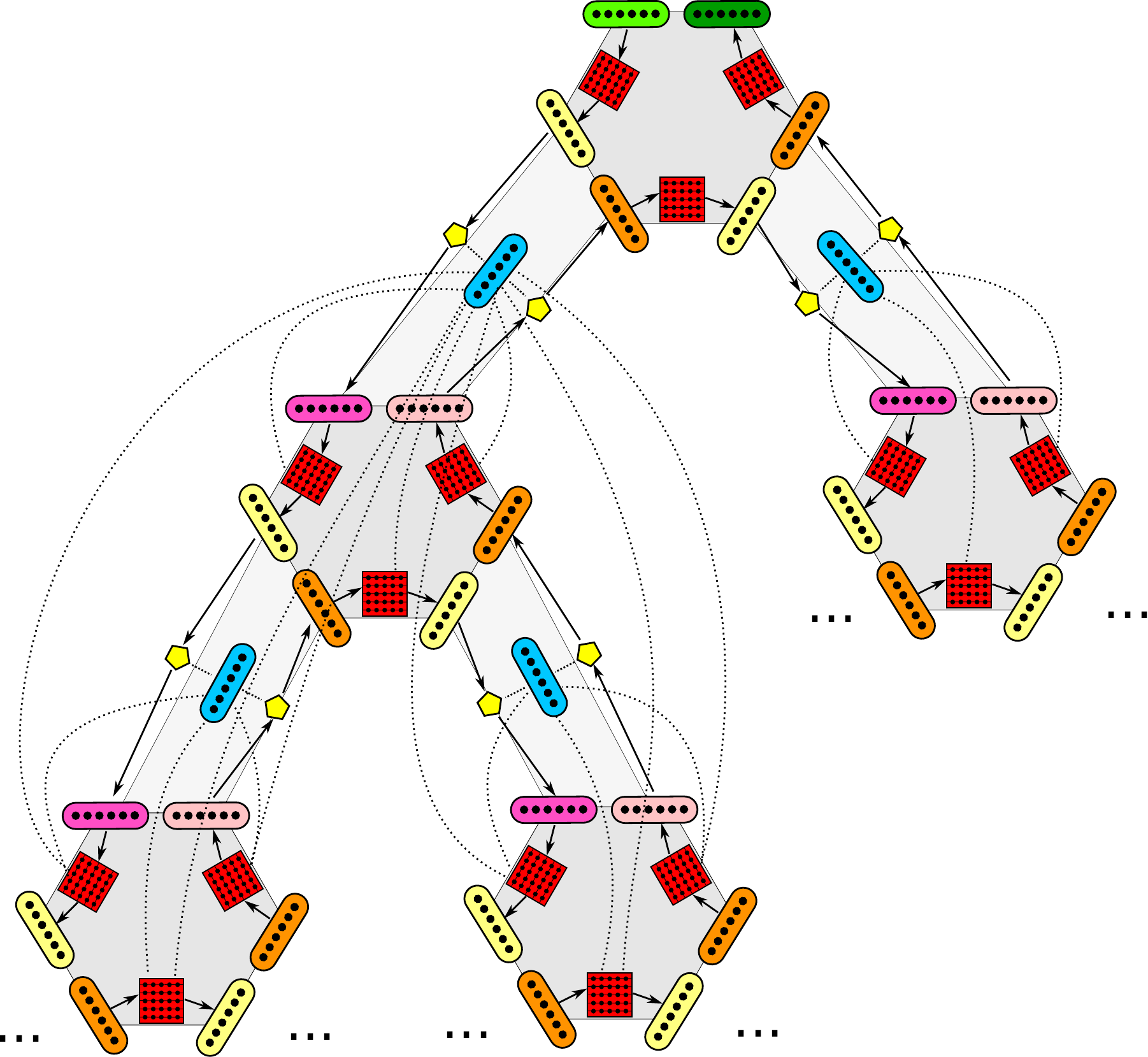}
\caption{The construction of Lemma~\ref{lem:tdHardness}. Blocks $\texttt{stack}(q)$ are depicted in light blue, blocks $\texttt{before-push}(e)$ are depicted in yellow, blocks $\texttt{after-push}(e)$ are depicted in violet, blocks $\texttt{before-pop}(e)$ are depicted in pink, blocks $\texttt{after-pop}(e)$ are depicted in orange, block $\texttt{init}$ is depicted in light green, and block $\texttt{final}$ is depicted in dark green.
Copies of the computation gadget are shown as red squares, while clauses validating push and pop operations are depicted as yellow pentagons. }\label{fig:computation}
\end{figure}

We proceed to the formal description; the reader is advised to look at Figure~\ref{fig:computation} while reading. For each edge $e$ of tree $T$ we create a block $\texttt{stack}(e)$ of $\Oh(\blsize)$ variables, describing the values pushed on the stack when accessing the lower endpoint of $e$ from the upper during the computation.
Moreover, we create four blocks of $\Oh(\frac{s(n)}{\log n})$ variables to describe the configurations immediately before and after the corresponding push and pop operations concerning block $\texttt{stack}(e)$: $\texttt{before-push}(e)$, $\texttt{after-push}(e)$, $\texttt{before-pop}(e)$ and $\texttt{after-pop}(e)$, respectively. Each of these blocks contains the information about (a) the full content of the working tape, (b) the positions of heads on all the three tapes (encoded as indices in binary), and (c) the machine's state. Similarly, we create also two blocks of $\Oh(\frac{s(n)}{\log n})$ variables describing the initial and final configurations: $\texttt{init}, \texttt{final}$. We enforce them to be the initial and accepting configurations, respectively. For $\texttt{init}$ this only requires introduction of a number of clauses of size $1$ that precisely describe the initial configuration. For $\texttt{final}$, we only need to verify that the final state is accepting. Since the description of the state uses a constant number of variables, this can be easily done by imposing a constant number of CNF clauses on them.

We now wire these groups of variables in order to simulate the machine's execution.

First, for each edge $e$ of $T$, we connect $\texttt{before-push}(e)$ with $\texttt{after-push}(e)$ and $\texttt{before-pop}(e)$ with $\texttt{after-pop}(e)$ by clauses that enforce that valid push/pop transitions are being used at these points. By restriction~\eqref{r:conf}, during these transitions only the machine's state can change. Hence, we can enforce that in blocks $\texttt{before-push}(e)$ and $\texttt{after-push}(e)$ all the information apart from the machine's state is exactly the same, and the same holds also for $\texttt{before-pop}(e)$ and $\texttt{after-pop}(e)$; this can be easily done using clauses of size $2$. Then, we need to verify that the performed transition was indeed available. Since the availability of a push/pop transition depends solely on the machine's state (restriction~\eqref{r:conf}), we can verify that the original and target state can be connected by a push/pop transition using a constant-size family of CNF clauses on the variables describing these states. Finally, we need to make sure that the values memorized in block $\texttt{stack}(e)$ are indeed the ones that the machine intended to push onto the stack. Therefore, in the connection between $\texttt{before-push}(e)$ and $\texttt{after-push}(e)$ we also verify, again using clauses of size $2$, that values stored in $\texttt{stack}(e)$ are exactly the same as values on the $\blsize$ first positions of the working tape. Observe that no such check is needed between $\texttt{before-pop}(e)$ and $\texttt{after-pop}(e)$.


Then, we connect (in a manner described later) pairs of configurations that correspond to the beginning and end of a computation without stack operations.
That is, for the root note $q$ with children $l,r$ we connect:
$\texttt{init}$ with $\texttt{before-push}(ql)$,
$\texttt{after-pop}(ql)$ with $\texttt{before-push}(qr)$
and $\texttt{after-pop}(qr)$ with $\texttt{final}$.
For every internal, non-root node $q$ with parent $p$ and children $l,r$, we connect: $\texttt{after-push}(pq)$ with $\texttt{before-push}(ql)$, 
$\texttt{after-pop}(ql)$ with $\texttt{before-push}(qr)$
and $\texttt{after-pop}(qr)$ with $\texttt{before-pop}(pq)$.
For every leaf node $q$ with parent $p$ we connect 
$\texttt{after-push}(pq)$ with $\texttt{before-pop}(pq)$.

Each connection between two blocks is made by creating a new copy of the computation gadget of Lemma~\ref{lem:compGadget}, where:
\begin{itemize}
\item $t$ from the statement of Lemma~\ref{lem:compGadget} is simply $t(n)$.
\item $s$ from the statement of Lemma~\ref{lem:compGadget} is $\frac{s(n)}{\lg n}$.
\item $h$ is equal to the stack height at the corresponding moment. That is, this height is exactly $\blsize$ times the number of edges on the path from the root to the current node $q$ of the push-pop tree $T$ (note that $h\leq \blsize \cdot c\lceil \lg n\rceil\leq 4c \cdot s(n)$).
\item The input word is $\alpha$.
\end{itemize}
Variables $u_1,\dots,u_{s'}$ and $v_1,\dots,v_{s'}$ of the gadget are identified with the two variable blocks we connect.
Variables $w_1,\dots,w_{h'}$ of the gadget are identified with consecutive variables from $\texttt{stack}(e)$ for all edges $e$ on the path from root to the current node $q$. Note that in this manner, each computation gadget has the input string $\alpha$ encoded within, so that the computation simulated by the gadget has read access to $\alpha$.

This concludes the construction of the formula.
By Lemma~\ref{lem:compGadget} and restriction~\eqref{r:binary}, it is clear that the formula has a satisfying assignment iff there is an accepting run of $M$.
It remains to construct a tree-depth decomposition of the formula's primal graph $G$ that has depth $\Oh(s(n))$.

Let $R$ be the subset of variables consisting of all the variables contained in the named blocks (i.e. $\texttt{init}$, $\texttt{stack}(e)$, $\texttt{before-push}(e)$, etc.).
First, we create a mapping $\eta$ that maps variables of $R$ to the nodes of $T$.
We begin by mapping the variables of $\texttt{init}$ and $\texttt{final}$ blocks to the root of $T$.
For every non-root node $q$ of $T$ with parent $p$, we map the variables of $\texttt{stack}(pq)$ to $q$.
Additionally, if $q$ has a parent $p$ or children $l,r$, we map the variables of $\texttt{after-push}(pq)$, $\texttt{before-push}(ql)$, $\texttt{after-pop}(ql)$, $\texttt{before-push}(qr)$, $\texttt{after-pop}(qr)$ and $\texttt{before-pop}(pq)$ to $q$.
To this point, $\eta$ maps $\Oh(\frac{s(n)}{\log n})$ variables to each node of $T$.
Create a tree-depth decomposition of $G[R]$ as follows: start with $T$, end replace each node $q$ of $T$ with a path consisting of variables $\eta^{-1}(q)$, ordered arbitrarily. These paths are organized in a tree in the same way as the original nodes: the last vertex of a path corresponding to a node $q$ becomes the parent of the first node of each path corresponding to a child of $q$. 
Observe that whenever two variables of $R$ appear in the same clause, then the nodes to which their blocks are mapped by $\eta$ are either equal, or they are in the ancestor-descendant relation. Therefore, it is easy to see that indeed we have constructed a valid tree-depth decomposition of $G[R]$. Since $T$ has depth $\Oh(\log n)$ and the pre-image of each node of $T$ under $\eta$ has size $\Oh(\frac{s(n)}{\log n})$, the depth of the decomposition is $\Oh(s(n))$.

It remains to consider the variables and clauses created in computation gadgets for each node $q$.
By Lemma~\ref{lem:compGadget}, every such gadget has a tree-depth decomposition of depth $\Oh(\frac{s(n)}{\log n}\cdot \log n + h) = \Oh(s(n))$, where $h$ is the current stack depth.
Observe that, among the variables of $R$, the clauses of each gadget connect only variables from blocks mapped to $q$ by $\eta$, and variables from the $\texttt{stack}$ blocks corresponding to the edges on the path from $q$ to the root of $T$. Hence, we can take the tree-depth decomposition of the gadget of depth $\Oh(s(n))$, remove from it all the variables contained in $R$, and attach the resulting decomposition as a new subtree below the deepest vertex of $\eta^{-1}(q)$. By performing this operation for each computation gadget, we obtain a tree-depth decomposition of the whole prime graph $G$ that has depth $\Oh(s(n)+s(n)) = \Oh(s(n))$.
It is straightforward to verify, using Lemma~\ref{lem:compGadget}, that all the described constructions can be performed in logspace.
Therefore $L \lred \text{td-CNF-SAT}[c \cdot s(n)]$ for some constant $c$ depending on the machine $M$; since we assume $s(n)$ to be a nice function, $\text{td-CNF-SAT}[c \cdot s] \lred \text{td-CNF-SAT}[s]$.
\end{proof}

Lemmas~\ref{lem:tdRegularize} and~\ref{lem:tdHardness} show that $\text{td-CNF-SAT}[s]$ is hard for $\NAuxSA{\poly}{s/\log}{s}$ under logspace reductions, and
by Lemma~\ref{lem:equivalenceOfProblems}, so is $\tdEx{s}$.
Lemmas~\ref{lem:tdContain}, \ref{lem:tdRegularize} and the fact that the closure of $\NAuxSAplain{\poly}{\log}{s}$ under logspace reductions is $\NAuxSAplain{\poly}{\log}{s(\poly)}$,
give the following chain of containments ($[A]^L$ is the class of problems reducible to $A$ in logspace):
$$ [\tdEx{s}]^L \subseteq \NAuxSA{\poly}{\log}{s(\poly)} \subseteq \NAuxSA{\poly}{s(\poly)/\log}{s(\poly)} \subseteq$$
$$\subseteq \text{reg-}\NAuxSA{\poly}{s(\poly)/\log}{s(\poly)} \subseteq [\tdEx{s(\poly)}]^L \subseteq [\tdEx{s}]^L$$
Therefore, all containments must be equalities, which concludes the proof of Theorems~\ref{thm:main-td} and \ref{thm:space-increase}.
%
%
%
Note that for an unbounded (or polynomial) stack, Theorem~\ref{thm:space-increase} implies that space is unbounded too; in other words, \NL (that is, non-deterministic logspace) machines augmented with an unbounded auxiliary stack have the same power as \cfont{NP}, an observation already made by Vinay and Chandru~\cite{VinayC90}.

By Theorem~\ref{thm:main-td}, to prove a determinization result for $\NAuxSAplain{\poly}{\log}{s(\poly)}$ we only need such a result for \tdEx{s}.

\begin{lemma}\label{lem:deter-3col}
$\tdEx{s}$ can be solved in time $3^s\cdot \poly(n)$ and space $\Oh(s+\log n)$.
\end{lemma}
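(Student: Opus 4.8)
\emph{Proof plan.}
The plan is to run a recursive backtracking search for a proper $3$-coloring, guided by the given tree-depth decomposition, and then to check that it fits into $3^s\poly(n)$ time and $\Oh(s+\log n)$ space. First I would verify in logspace that the input decomposition $(\mathcal{T},\mu)$ is valid and has depth at most $s(n)$, rejecting otherwise (this check is part of the definition of $\tdEx{s}$). Since nodes in different trees of the forest $\mathcal{T}$ are incomparable, the corresponding vertices are non-adjacent in $G$, so it suffices to test $3$-colorability of each component separately; writing $G_v$ for the subgraph of $G$ induced by the vertices that $\mu^{-1}$ maps into the subtree of $\mathcal{T}$ rooted at $v$, the answer is ``yes'' iff $G_r$ is $3$-colorable for every root $r$.

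The structural fact driving the recursion is: for a node $v$, the only vertices \emph{outside} $V(G_v)$ that are adjacent in $G$ to a vertex of $V(G_v)$ are those mapped to proper ancestors of $v$. Indeed, an edge of $G$ joins two $\mathcal{T}$-comparable nodes; if one endpoint lies in the subtree of $v$ and the other does not, the latter cannot be a descendant of the former, hence it is a proper ancestor of $v$. It follows that, for a proper $3$-coloring $\pi$ of the vertices mapped to the root-to-parent($v$) path, $\pi$ extends to a proper $3$-coloring of the subgraph of $G$ on those vertices together with $V(G_v)$ if and only if there is a color $c\in\{1,2,3\}$ differing from $\pi$ on every $G$-neighbor of $v$ on that path such that, for \emph{every} child $w$ of $v$, the analogous statement holds for $w$ and $\pi\cup\{v\mapsto c\}$; the nontrivial direction uses that the subtrees hanging off distinct children of $v$ share constraints only through $v$ and the ancestors of $v$, whose colors are fixed by $\pi\cup\{v\mapsto c\}$. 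This yields a recursive procedure $\textsc{Colorable}(v,\pi)$ --- loop over admissible colors $c$ of $v$, and return true on the first $c$ for which $\textsc{Colorable}(w,\pi\cup\{v\mapsto c\})$ holds for all children $w$ --- whose correctness follows by induction on $|V(G_v)|$, with the overall answer being the conjunction of $\textsc{Colorable}(r,\emptyset)$ over all roots $r$. For the running time I would note that $\textsc{Colorable}(v,\cdot)$ is invoked at most once for each coloring of the root-to-parent($v$) path, i.e.\ at most $3^{\operatorname{depth}(v)}\le 3^s$ times, and between its recursive calls it does only $\poly(n)$ work (listing children, walking up at most $s$ ancestors to check adjacency and read their colors, etc.); summing $3^{\operatorname{depth}(v)}$ over the $n$ nodes gives at most $n\cdot 3^s$ invocations and total time $3^s\poly(n)$. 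All elementary decomposition operations used here (parent, child list, next sibling, list of roots, adjacency in $G$) are computable in logspace from any natural encoding.

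The step I expect to require the most care is the space analysis: a naive implementation of the recursion would keep, in each of the up to $s$ active frames, both the color chosen for $v$ and a pointer into $v$'s child list, using $\Theta(s\log n)$ space instead of the claimed $\Oh(s+\log n)$. The fix is to let a frame store only the color of $v$ ($\Oh(1)$ bits), so that the entire call stack is exactly the list of colors along the current root-to-current-node path (an array of at most $s$ constant-size entries, $\Oh(s)$ bits), together with the identifier of the current node ($\Oh(\log n)$ bits) and $\Oh(1)$ bits of control state. When a recursive call on a child $w$ returns, the next child of $v$ to be processed is recomputed from $w$ (the node being returned from) via $v$'s child list, rather than being remembered; and when testing a candidate color of $v$, the relevant ancestors are visited by following parent pointers while maintaining only the current ancestor's identifier and its depth ($\Oh(\log n)$ bits, using $s\le\poly(n)$), their colors being read off the color array. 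With this organization every piece of state is $\Oh(s+\log n)$, which completes the argument.
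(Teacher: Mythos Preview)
Your proposal is correct and follows essentially the same approach as the paper: a recursive backtracking search over the tree-depth decomposition, with the same recursive characterization (your $\textsc{Colorable}(v,\pi)$ is the paper's $f(v,\phi)$), the same $n\cdot 3^s$ bound on the number of calls, and the same space trick of keeping only the color in each stack frame while recomputing the vertex identifier upon return. If anything, your space analysis is spelled out in more detail than the paper's, which simply notes that ``there is no need to memorize the identifier of the vertex, because it can be recomputed when returning from a subcall.''
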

\begin{proof}
Let $G$ be the input instance of \tdEx{s}, and let $(\mathcal{T},\mu)$ be the given tree-depth decomposition of $G$. By abuse of notation, we identify the vertices of $G$ with their images under $\mu$. For every $u\in V(G)$, let $\tail(u)$ be the set of vertices on the path from $u$ to the root of its tree in $\mathcal{T}$, excluding $u$, and let $\tree[u]$ be the set of vertices contained in the subtree of $\mathcal{T}$ rooted at $u$, including $u$. For every $u\in V(G)$ and every proper coloring $\phi$ of $G[\tail(u)]$ into $3$ colors, let $f(u,\phi)$ be the Boolean value denoting whether $\phi$ can be extended to a proper $3$-coloring of $G[\tree[u]\cup \tail(u)]$. Then clearly $f(u,\phi)$ is true if and only if it is possible to extend $\phi$ to $\phi'$ by assigning $u$ one of the three colors in such a manner that $\phi'$ remains a proper $3$-coloring of $G[\tail(u)\cup \{u\}]$, and $f(v,\phi')$ is true for every child $v$ of $u$. Whether $G$ has a proper $3$-coloring is equivalent to the conjunction of values $f(r,\emptyset)$ over the roots $r$ of trees in $\mathcal{T}$.

We give a recursive procedure for computing values $f(u,\phi)$; the whole problem then reduces to computing $f(r,\emptyset)$ for every root $r$ of a tree in $\mathcal{T}$. This recursive procedure simply browses through all three possible extensions $\phi'$ of $\phi$ to a proper $3$-coloring of $G[\tail(u)\cup \{u\}]$, and calls itself recursively to compute $f(v,\phi')$ for all children $v$ of $u$; in particular, no memoization of computed values is done. Note that the recursion tree stops in the leaves of $\mathcal{T}$. The correctness of the algorithm follows directly from the discussion of the previous paragraph. As far as the space usage is concerned, at each point the algorithm maintains identifier of the current vertex $u$, of logarithmic length, and a stack of $\Oh(s)$ calls to the procedure computing $f(\cdot,\cdot)$. The data stored for each call requires constant space; note that there is no need to memorize the identifier of the vertex, because it can be recomputed when returning from a subcall. Hence, the space complexity of the algorithm is $\Oh(s+\log n)$. To analyze the running time, observe that for each pair $(u,\phi)$, where $\phi$ is a proper $3$-coloring of $G[\tail(u)]$, throughout the whole algorithm there will be at most one call to $f(u,\phi)$; this is because whenever recursing, we are considering an extension of the current coloring. Thus, the whole recursion tree will have at most $n\cdot 3^s$ nodes. Since the computations at each node are done in polynomial time, it follows that the running time of the algorithm is $3^s\cdot \poly(n)$.
\end{proof}

Therefore, for any nice $s(n)\geq \log n$, we have $\tdEx{s}\in \DSpace{s}$. Theorem~\ref{thm:main-td}, Lemma~\ref{lem:deter-3col}, and the observation that $\DSpace{s(\poly)}$ is closed under logspace reductions, yield Theorem~\ref{thm:determinization}.


\subsection{Characterization via alternating machines} 
An \emph{alternating Turing machine} (ATM) is a Turing machine with a partition of states into \emph{existential} or \emph{universal} states. 
For an ATM and an input word, an \emph{accepting tree} is a finite tree $T$ labeled with machine configurations, such that: the root is labeled with the initial configuration, every node with a configuration in an existential state in $T$ has one child labeled with a next configuration (one reachable in one step according to the machine's transition rules), every node with a configuration in a universal state in $T$ has all possible next configurations as children, and all leaves are accepting configurations.
An ATM accepts an input word in time $t$, space $s$, and \emph{treesize} $z$, if there is an accepting tree with root-to-leaf distances at most $t$, configurations using at most $s$ space, and at most $z$ tree nodes.

Similarly as ATMs proved to be a useful computational model, giving a new, unified view on various complexity issues, the notion of treesize introduced by Ruzzo~\cite{Ruzzo80} allowed to see various classes under a common light and simplify a few containment proofs.
In particular, Ruzzo showed that $\NAuxPDA{\poly}{s} = \ASpSz{s}{\poly}$.
We show that bounding the time (as opposed to space) of a polynomial treesize ATM, leads to the classes corresponding to small tree-depth, as opposed to small treewidth.

\begin{theorem}\label{thm:alternation}
Let $s(n)\geq \log^2(n)$ be a nice function. Then
$$\NAuxSA{\poly}{\log}{s(\poly)} = \ATiSz{s(\poly)}{\poly}.$$
\end{theorem}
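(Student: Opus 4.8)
The plan is to prove the two inclusions separately, using the completeness of $\tdEx{s}$ for $\NAuxSA{\poly}{\log}{s(\poly)}$ established in Theorem~\ref{thm:main-td} together with Lemma~\ref{lem:equivalenceOfProblems}, so that it suffices to relate $\ATiSz{s(\poly)}{\poly}$ to the single complete problem $\tdEx{s}$.

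\textbf{From $\tdEx{s}$ to an alternating machine.} First I would show $\tdEx{s}\in\ATiSz{s(\poly)}{\poly}$, which, since $\ATiSz{s(\poly)}{\poly}$ is closed under logspace reductions, yields $\NAuxSA{\poly}{\log}{s(\poly)}\subseteq\ATiSz{s(\poly)}{\poly}$. The idea is to alternate the recursive procedure of Lemma~\ref{lem:deter-3col}: on input $G$ with a tree-depth decomposition $(\mathcal T,\mu)$, the machine recursively computes the predicates $f(u,\phi)$, but instead of iterating over the (constantly many) color extensions and over the children deterministically, it uses an existential branch to guess the extended coloring $\phi'$ of $G[\tail(u)\cup\{u\}]$ and a universal branch to split across all children $v$ of $u$, with the leaf tests verifying properness directly. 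The configuration at any point only needs to store the current vertex identifier ($\Oh(\log n)$ bits) and the current coloring of $\tail(u)$, which has at most $s$ entries of constantly many bits each, so space is $\Oh(s+\log n)=\Oh(s)$; the time along any root-to-leaf path is $\Oh(s\cdot\poly(n))=\poly(n)$ since the decomposition has depth at most $s$ and each level does polynomial bookkeeping; finally, treesize is polynomial because the universal branching at a node $u$ has exactly $\deg_{\mathcal T}(u)$ children and the existential branching has constant width, so the accepting tree has $\Oh(n)$ branching nodes, each expanded into a constant-size gadget, hence $\poly(n)$ nodes in total.

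\textbf{From an alternating machine to $\tdEx{s}$.} For the reverse inclusion I would take a language $L\in\ATiSz{s(\poly)}{\poly}$, witnessed by an ATM $M$ with an accepting tree of treesize $z=\poly(n)$, space $s(\poly(n))$, and time $t=\poly(n)$ along each path, and simulate $M$ by a regularized NAuxSA machine (in the sense of restrictions~\eqref{r:blocks}--\eqref{r:binary}) or, more directly, build a logspace reduction to $\tdEx{s}$. The cleanest route seems to be: an accepting tree of $M$ of polynomial treesize can be re-balanced (by inserting dummy configurations, exactly as in the push-pop tree regularization of Lemma~\ref{lem:tdRegularize} via Lemma~\ref{lem:treeEmbedding}) into a full binary tree of depth $\Oh(\log n)$; a depth-first traversal of this tree corresponds to a stack that holds, at any moment, the sequence of configurations on the path from the root, which has total size $\Oh(\log n)\cdot s(\poly(n))$ — but here is exactly where we exploit Theorem~\ref{thm:space-increase}: we may afford working space $s(\poly)/\log$, so we store only the single current configuration (size $s(\poly)$, hence $s(\poly)/\log$ ``blocks'' of the regularized machine) in working memory, pushing a block onto the stack each time we descend, popping when we ascend. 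The nondeterministic choices of the regularized machine encode the existential transitions of $M$, and at a universal configuration the machine pushes the (constantly many) successor configurations and visits them one by one via the stack discipline, returning to verify that all succeeded — this is where the stack, rather than a push-down, is essential, since the machine must re-read pushed configurations. The fragments of computation between stack operations (copying configurations, checking that a child configuration is a legal successor, checking that a leaf is accepting) are all polynomial-time, logspace-per-step computations, which is precisely what the computation gadget of Lemma~\ref{lem:compGadget} encodes; wrapping these gadgets around the full binary push-pop tree as in Lemma~\ref{lem:tdHardness} produces a CNF formula of tree-depth $\Oh(s(\poly))$, and Lemma~\ref{lem:equivalenceOfProblems} converts it to an instance of $\tdEx{s}$.

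\textbf{Main obstacle.} The routine parts are the space/time/treesize accounting; the delicate point is matching the resource bounds on the nose, in particular getting the working space down to $\Oh(\log n)$ (as required by the class $\NAuxSA{\poly}{\log}{s(\poly)}$) rather than $\Oh(s)$. A naive simulation of $M$ would want to keep a whole configuration of size $s$ in working memory, which is too much; the resolution is to observe that the top block of the stack already holds the current configuration, so the working tape only needs a pointer into it plus the bookkeeping of the stack height and the position in the fixed binary tree, all of which is $\Oh(\log n)$ — and then Theorem~\ref{thm:space-increase} (itself proved via Theorem~\ref{thm:main-td}) is invoked to pass freely between the $\log$ and the $s/\log$ working-space models during the regularization step. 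The second subtlety is ensuring that re-balancing the accepting tree preserves the alternation structure (an existential node must still have exactly one ``real'' child and a universal node all of them); inserting dummy existential nodes that do nothing but adjust the position in the target binary tree, exactly as the dummy $\dum$-pushes in Lemma~\ref{lem:tdRegularize}, handles this without changing acceptance.
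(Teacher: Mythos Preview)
Your $(\subseteq)$ direction has the same shape as the paper's, but your time accounting is off. You claim the alternating machine runs in time $\Oh(s\cdot\poly(n))=\poly(n)$ along each root-to-leaf path, yet the target class is $\ATiSz{s(\poly)}{\poly}$, so the time bound must be $\Oh(s(\poly(n)))$, not $\poly(n)$. For $s(n)=\log^2 n$ this means time $\Oh(\log^2 n)$, not polynomial. The paper's argument is that the running time ``follows directly from the bound on depth $s$'': at each of the $s$ levels the machine does only a constant amount of work (existentially pick a colour, universally pick a child, universally pick an ancestor to check against), so the path length is $\Oh(s)$. Your inserting a $\poly(n)$ bookkeeping factor per level destroys this.

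The $(\supseteq)$ direction has a more serious gap, and your route through regularisation and Lemma~\ref{lem:tdHardness} is both unnecessary and unsound as written. Two concrete problems. First, you invoke Lemma~\ref{lem:treeEmbedding} to rebalance the accepting tree of the ATM to depth $\Oh(\log n)$, but that lemma requires the input tree to already have depth at most $\lg(\text{size})$; the accepting tree has treesize $\poly(n)$ but depth up to the time bound $s(\poly(n))$, which can be far larger than $\log n$, so the lemma does not apply. Second, even granting rebalancing, you push \emph{full} ATM configurations of size $\Theta(s)$ onto the stack; along a path of length $\Theta(\log n)$ this gives stack height $\Theta(s\log n)$, and without rebalancing (path length $\Theta(s)$) it gives $\Theta(s^2)$ --- both exceed the allowed $\Theta(s)$. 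Invoking Theorem~\ref{thm:space-increase} does not help here: it lets you enlarge the working tape to $s/\log$, but a single configuration already has size $s$, so it still does not fit, and the stack height bound is unchanged.

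The paper's fix is a single idea that sidesteps all of this: do not push configurations, push only the \emph{log of changes} --- the new state, the head movements, and the one symbol written --- which is $\Oh(1)$ bits per ATM step. Since the ATM runs in time $s(\poly)$ along each path, the stack never exceeds height $\Oh(s(\poly))$; the current state, head positions, and worktape symbol can be reconstructed in logspace by scanning the stack (here the read-anywhere stack, as opposed to a push-down, is essential). Existential steps are simulated by nondeterminism, universal steps by Ruzzo's backtracking over the stack, and the total running time is the treesize, which is polynomial. No rebalancing and no detour through $\tdEx{s}$ are needed.
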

\begin{proof}
For one containment $(\subseteq)$ we show that $\tdEx{s}$, which is hard for the former class by Theorem~\ref{thm:main-td}, is contained in the latter.
Indeed, a straightforward algorithm traverses a given tree-depth decomposition of the input graph top-down by existentially guessing a color of each encountered vertex, checking its compatibility with previous guesses, and universally guessing which subtree to proceed into.
The bounds on time and treesize of an alternating Turing machine executing this algorithm follow directly from the bounds on the depth ($s$) and size (polynomial) of the decomposition's tree.

For the other containment, we simulate an alternating Turing machine with a NAuxSA machine.
This is done exactly as in Ruzzo's simulation with NAuxPDA machines (Theorem 1 in~\cite{Ruzzo80}), 
except that all the configurations on the current path are remembered on the stack by only writing the difference ({\em{log of changes}}) from the previous configuration: the constant-size description consists of the new state, the direction of movement for each head and the symbol written on the worktape.
Observe that in this manner, the machine can within logarithmic working space retrieve all the information needed to verify availability of a transition:
\begin{itemize}
\item The current simulated state is on the top of the stack;
\item The current head positions can be recalculating by browsing through the stack and applying consecutive moves;
\item The symbol under the worktape head can be recalculated by finding  on the stack the latest symbol written on its current position.
\end{itemize}
Consequently, deterministic steps of the machine can be simulated by verifying the applicability of the transition, and pushing onto the stack the constant-size log of changes leading to the new configuration. Similarly, existential guesses are simulated with the machine's own non-determinism.
Universal guesses are simulated by choosing one possibility and then backtracking (popping the stack) to choose the next possibility, as described by Ruzzo.
Such backtracking corresponds exactly to traversing an accepting tree of the ATM, hence the running time is polynomial.
At most $s(\poly(n))$ steps are described on the stack at any time, each requiring a constant number of symbols due to keeping only the log of changes.
\end{proof}

\section{{\normalfont\scshape{Dominating Set}} on graphs of small treedepth}\label{sec:domset}
In this section we show how to solve \textsc{Dominating Set}, or even count the number of dominating sets of any cardinality, in time $3^s \cdot \poly(n)$ and space $\Oh(s \cdot \log n)$, 
given a tree-depth decomposition of depth $s$.
Recall that for a graph $G$, a set $S\subseteq V(G)$ is a \emph{dominating set} if every vertex of $G$ either is in $S$, or is adjacent to some vertex of $S$.
We first describe an algorithm working in space $\poly(n)$.

\begin{lemma}\label{lem:ds}
There exists an algorithm that, given a graph $G$ on $n$ vertices and its tree-depth decomposition of depth $s$, runs in time $3^s\cdot \poly(n)$ and space $\poly(n)$, and outputs a sequence $(q_i)_{0\leq i\leq n}$, where $q_i$ is the number of dominating sets of $G$ of cardinality $i$.
\end{lemma}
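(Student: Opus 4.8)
## Proof proposal for Lemma~\ref{lem:ds}

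The plan is to run the natural Divide\&Conquer recursion on the tree-depth decomposition, but carrying along \emph{generating functions in a formal variable $x$ that counts the size of the dominating set}, and to track not only which vertices are already in the partial solution but also which vertices on the current root-to-leaf path are already dominated. Concretely, fix a tree-depth decomposition $(\mathcal{T},\mu)$ and, as in Lemma~\ref{lem:deter-3col}, for $u\in V(G)$ let $\tail(u)$ be the set of ancestors of $u$ (excluding $u$) and $\tree[u]$ the subtree rooted at $u$. The recursion will be indexed by a node $u$ together with a pair $(A,D)$ where $A\subseteq \tail(u)$ is the guessed set of path-ancestors that are placed into the dominating set, and $D\subseteq \tail(u)$ records which path-ancestors have already been dominated by a vertex of $A$ or by a vertex lying inside a subtree processed so far. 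For each such state we compute a polynomial $f(u,A,D)\in\mathbb{Z}[x]$ whose coefficient of $x^j$ counts the number of ways to choose a set $S\subseteq \tree[u]$ with $|S|=j$ such that, together with $A$, every vertex of $\tree[u]$ is dominated, and the set of ancestors in $\tail(u)$ that get dominated by $A\cup S$ is exactly $D\setminus(\text{those already dominated before }u)$ — in other words, $D$ is passed down as "already satisfied" and the subtree is only responsible for finishing the job on $\tree[u]$ while optionally helping dominate ancestors.

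The recursion at a node $u$ branches on the choice $u\in S$ or $u\notin S$ (a factor of two, contributing to the $3^s$ via the three states "in $S$", "out of $S$ and already dominated", "out of $S$ and not yet dominated" — exactly the three colors of the \textsc{3Coloring} analogue), updates $A$, updates which ancestors and which of $u$'s own descendants-to-come get newly dominated, and then \emph{multiplies} together the polynomials returned by the children, because the subtrees hanging off $u$ are vertex-disjoint and, by the tree-depth property, share no edges except through $\tail(u)\cup\{u\}$. One subtle point handled here is the ordering of children: when a vertex $v$ deep in one subtree is placed into $S$, it may dominate an ancestor $w\in\tail(u)$; to keep the states consistent we fix an arbitrary order on the children of each node and let each child receive in its $D$-component the union of ancestors dominated so far (by $A$, by $u$, and by previously processed sibling subtrees), so the contributions compose correctly without double counting. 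The final answer is the product over roots $r$ of $f(r,\emptyset,\emptyset)$, and reading off coefficients gives $(q_i)_{0\le i\le n}$. Each polynomial has degree at most $n$, so it fits in $\poly(n)$ space, and all arithmetic on coefficients is on integers bounded by $2^n$, hence polynomial-size; this is where the $\poly(n)$ space bound of the lemma comes from, in contrast to the $\Oh(s+\log n)$ of Lemma~\ref{lem:deter-3col}, which did not need to store these polynomials.

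For the running time, the argument mirrors Lemma~\ref{lem:deter-3col}: I would observe that throughout the whole recursion there is at most one call to $f(u,A,D)$ for each realizable triple, because recursing always refines the current partial assignment of states to ancestors. Since the state of each ancestor lies in a set of size $3$ and there are at most $s$ ancestors, the recursion tree has at most $n\cdot 3^{s}$ nodes; at each node we perform a constant number of polynomial multiplications and additions, each costing $\poly(n)$ time. Hence the total running time is $3^{s}\cdot\poly(n)$. The step I expect to require the most care is the bookkeeping of the "already dominated ancestors" component across sibling subtrees and the precise statement of what $f(u,A,D)$ counts, so that the product over children is literally correct and no dominated-vs-not-dominated inconsistency arises; once that invariant is pinned down, correctness is a routine induction on $\mathcal{T}$ and the complexity bounds follow exactly as above. (The promised improvement to $\Oh(s\cdot\log n)$ space via Fourier transforms and the Chinese remainder theorem is then carried out in the subsequent development, replacing the explicit polynomial by its evaluations at roots of unity modulo small primes, evaluated one at a time along a recursion of depth $\Oh(s)$.)
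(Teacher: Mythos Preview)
Your proposal has a genuine gap exactly at the point you yourself flag as delicate: the combination of children. With your state $(A,D)$, where $D$ records which ancestors are \emph{already dominated}, the contributions of sibling subtrees are \emph{not} independent, so they cannot simply be multiplied. Concretely, an ancestor $w\in\tail(u)$ that is not in $A$ is dominated in the end iff \emph{some} subtree below $u$ contains a neighbour of $w$; when you pass to child $v_{j+1}$ the set $D$ ``updated by previously processed sibling subtrees'', that update depends on the particular choice of $S_1,\dots,S_j$, and the polynomial you have just computed for those siblings has already summed over all such choices. You therefore cannot recover a single value of $D$ to hand to $v_{j+1}$. Your two sentences ``multiplies together the polynomials returned by the children'' and ``let each child receive in its $D$-component the union of ancestors dominated so far \dots by previously processed sibling subtrees'' are mutually inconsistent: the first requires all children to receive the \emph{same} state, the second requires the state to vary with subtree-internal choices. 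What you would actually need here is a cover product (union convolution) over the ``dominated'' coordinates, not a pointwise product, and without a further idea this spoils both the recursion and the $3^s$ bound.

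The paper resolves this by applying a M\"obius transform up front: instead of a state ``already dominated'', each ancestor carries one of three labels $\{\Tt,\Aa,\Ff\}$ (Taken, Allowed, \emph{Forbidden}), and $f(u,\phi)$ counts sets $X\subseteq\tree[u]$ such that $X\cup\phi^{-1}(\Tt)$ dominates $\tree[u]$ and dominates \emph{no} vertex of $\phi^{-1}(\Ff)$. Because ``$X_1$ avoids dominating $w$'' and ``$X_2$ avoids dominating $w$'' conjoin to ``$X_1\cup X_2$ avoids dominating $w$'', the child contributions are now genuinely independent and $g(u,\psi)=\prod_{v\in\chld(u)} f(v,\psi)$ holds with the \emph{same} $\psi$ for every child. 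The domination requirement on $u$ itself is then recovered by the inclusion--exclusion identity $f(u,\phi)=g(u,\phi[u\to\Aa])-g(u,\phi[u\to\Ff])+x\cdot g(u,\phi[u\to\Tt])$. This is the missing idea; once you have it, your intuition about ``one call per realizable triple'' and the $n\cdot 3^s$ bound on the recursion tree goes through verbatim, since recursing only ever extends $\phi$ by one symbol.
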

\begin{proof}
We will work in the ring of polynomials $\mathbb{Z}[x]$, where $x$ is a formal variable. The algorithm will compute polynomial $P(x)=\sum_{i=0}^n q_i x^i$, whose coefficients constitute the output.

Let $(\mathcal{T},\mu)$ be the given tree-depth decomposition of $G$; recall that $\mathcal{T}$ is a rooted forest of depth at most $s$. By abuse of notation, we identify the vertices of $G$ with their images under $\mu$. Let us introduce some notation relating to $\mathcal{T}$. For $u\in V(G)$, by $\tail[u]$ we denote the set of vertices on the path in $\mathcal{T}$ from $u$ to the root of its tree in $\mathcal{T}$. By $\tree[u]$ we denote the set of all the vertices contained in the subtree of $\mathcal{T}$ rooted at $u$, including $u$. Define $\tail(u)=\tail[u]\setminus \{u\}$ and $\tree(u)=\tree[u]\setminus \{u\}$. By $\chld(u)$ we denote the set of children of $u$ in $\mathcal{T}$. For a function $h$, an argument $e$ outside the domain of $h$, and a value $\alpha$, by $h[e\to \alpha]$ we denote the function $h$ extended by adding $e$ to the domain and mapping it to $\alpha$.

Let $\Sigma=\{\Aa,\Ff,\Tt\}$, where $\Aa$, $\Ff$, $\Tt$ are some symbols (the reader should think of them as Allowed, Forbidden, and Taken, respectively). For any vertex $u$ of $G$ and function $\phi\colon \tail(u)\to \Sigma$, define $f(u,\phi)\in \mathbb{Z}[x]$ as $\sum_{i=0}^n a_i x^i$, where $a_i$ is the number of $i$-element subsets $X\subseteq \tree[u]$ such that $X\cup \phi^{-1}(\Tt)$ dominates $\tree[u]$ and no vertex of $\phi^{-1}(\Ff)$ in $G$. 
Similarly, for a function $\psi\colon \tail[u]\to \Sigma$, define $g(u,\psi)\in \mathbb{Z}[x]$ as $\sum_{i=0}^n b_i x^i$, where $b_i$ is the number of $i$-element subsets $X\subseteq \tree(u)$ such that $X\cup \psi^{-1}(\Tt)$ dominates $\tree(u)$ and no vertex of $\phi^{-1}(\Ff)$ in $G$. Note that if there are two adjacent vertices $v,v'\in \tail(u)$ with $\phi(v)=\Tt$ and $\phi(v')=\Ff$, then no set $X$ can satisfy the requirements above and hence $f(u,\phi)=0$. Similarly for $g$ and $\psi$.

We remark that values $f$ and $g$ are exactly what one would obtain by applying the M\"obius transform to the standard definition of states for dynamic programming for {\textsc{Dominating Set}} (that is, we count sets that dominate any subset of $\phi^{-1}(\{\Aa,\Tt\})$ in the tail, instead of exactly specifying which vertices are to be dominated). This transform translates subset convolutions used in the standard dynamic programming to pointwise products, which is the crucial idea behind the proof. Since the algorithm is not complicated, we prefer to present it directly after applying the transform.

We now give recursive equations on the values of $f(\cdot,\cdot)$ and $g(\cdot,\cdot)$. First, observe that for each $v\in \chld(u)$ we have $\tail(v)=\tail[u]$. Then, it is easy to verify that the following equation holds for each $u\in V(G)$ with $\chld(u)\neq \emptyset$ and each $\psi\colon \tail[u]\to \Sigma$:
\begin{equation}\label{e:g}
g(u,\psi)=\prod_{v\in \chld(u)} f(v,\psi).
\end{equation}
Indeed, every set $X\subseteq \tree(u)$ that contributes to some coefficient of $g(u,\psi)$ can be partitioned into $\{X\cap \tree[v]\, \colon\, v\in \chld(u)\}$. Each set $X\cap \tree[v]$ contributes to the coefficient by $x^{|X\cap \tree[v]|}$ of $f(v,\psi)$, and hence when computing the product the formal variable $x$ correctly keeps track of the cardinality. When $\chld(u)=\emptyset$, then $\tree(u)=\emptyset$ and we can compute $g(u,\psi)$ directly from the definition:
\begin{equation}\label{e:base}
g(u,\psi)=\begin{cases}1\qquad \textrm{if there is no edge between $\psi^{-1}(\Tt)$ and $\psi^{-1}(\Ff)$,}\\ 0\qquad \textrm{otherwise.}\end{cases}
\end{equation}

We now proceed to setting up the equation for $f(\cdot,\cdot)$. Take any $u\in V(G)$ and $\phi\colon \tail(u)\to \Sigma$. Then, it is easy to verify that the following equation holds:
\begin{equation}\label{e:f}
f(u,\phi)=g(u,\phi[u\to \Aa])-g(u,\phi[u\to \Ff])+x\cdot g(u,\phi[u\to \Tt]).
\end{equation}
The term $x\cdot g(u,\phi[u\to \Tt])$ counts the contribution from sets $X$ that contain $u$. The term $g(u,\phi[u\to \Aa])$ counts the contribution from all sets $X$ that do not contain $u$, regardless of whether they dominate $u$ or not, whereas by subtracting the term $g(u,\phi[u\to \Ff])$ we remove the contribution from sets that do not contain or dominate $u$. Observe that if $u$ has a neighbor in $\phi^{-1}(\Tt)$, i.e., it is already dominated by $\phi^{-1}(\Tt)$, then the subtracted term $g(u,\phi[u\to \Ff])$ will be a zero polynomial. This corresponds to the fact that in this case we do not need to care about domination of $u$ by $X$. Similarly, if $u$ has a neighbor in $\phi^{-1}(\Ff)$, then $g(u,\phi[u\to \Tt])$ will be a zero polynomial. This corresponds to the fact that in this case it is not allowed to take $u$ to $X$.

Finally, observe that
\begin{equation}\label{e:whole}
P=\prod_{u\in \roots} f(u,\emptyset),
\end{equation}
where $\roots$ is the set of roots of the trees in forest $\mathcal{T}$.

We now give the algorithm that computes $P$. The algorithm uses two mutually recursive functions that compute the values of $f(\cdot,\cdot)$ and $g(\cdot,\cdot)$, respectively. The polynomial $P$ is computed using equation~\eqref{e:whole} by a sequence of calls to the procedure computing $f(\cdot,\cdot)$. The procedure computing $f(\cdot,\cdot)$ applies equation~\eqref{e:f} and calls $g(\cdot,\cdot)$ recursively. Similarly, the procedure computing $g(\cdot,\cdot)$ applies equation~\eqref{e:g} and calls $f(\cdot,\cdot)$ recursively, or uses the base case~\eqref{e:base}. In particular, no memoization of computed values is performed.

The correctness of the algorithm follows from equations~\eqref{e:g}--\eqref{e:whole}. Note that at each moment, the space used by the algorithm is composed of a stack of at most $2s+1$ recursive calls to $f(\cdot,\cdot)$, $g(\cdot,\cdot)$, and the main procedure computing $P$, and for each of these calls we can store just the partial result of computation being one polynomial from $\mathbb{Z}[x]$ (in case of equations~\eqref{e:g} and~\eqref{e:whole}, this will be the product calculated for a prefix). These polynomials have degrees bounded by $n$ and their coefficients have values between $0$ and $2^n$, hence the total space usage of the algorithm is $\Oh(sn^2)$. Finally, to estimate the running time observe that for every pair $(u,\phi)$, where $u\in V(G)$ and $\phi\colon \tail(u)\to \Sigma$, throughout the whole computation there will be at most one call to computing $f(u,\phi)$. This is because when recursing, the new function $\phi$ is always an extension of the previous one. Similarly, for every pair $(u,\psi)$, where $u\in V(G)$ and $\psi\colon \tail[u]\to \Sigma$, throughout the whole computation there will be at most one call to computing $g(u,\psi)$. The total number of such pairs $(u,\phi)$ and $(u,\psi)$ is at most $2n\cdot 3^s$, which implies that the whole recursion tree has at most this many nodes. Since the work done at each node is polynomial in $n$, we conclude that the algorithm runs in time $3^s\cdot \poly(n)$.
\end{proof}

\newcommand{\F}{\mathbb{F}}

We now show how to improve the space usage of the above algorithm to $\Oh(s \cdot \log n)$.
Given a graph $G$ on $n$ vertices and its tree-depth decomposition of depth $s$, consider again the polynomial $P(x)=\sum_{i=0}^n q_i x^i$ over a single variable $x$, 
with integer coefficients $q_i$ equal to the number of dominating sets of $G$ of cardinality exactly $i$ (in particular, $0 \leq q_i \leq 2^n$).
We use the fact that given a prime number $p\leq 2n+2$ and some element $a$ of the Galois field $\F_p$, the value of $(P(a) \bmod p)$ can be computed in time $\Oh(3^s \cdot \poly(n))$ and space $\Oh(s \cdot \log n)$.
Indeed, one can recursively compute all $\Oh(3^s \cdot n)$ values of $(f(\cdot,\cdot)(a) \bmod p)$ and $(g(\cdot,\cdot)(a) \bmod p)$, as described in the proof of Lemma~\ref{lem:ds}.
On each of the $\Oh(s)$ recursion levels, we need to maintain only one number in $\F_p$ (describing a partial sum or product), and each value can be added or multiplied 
(to the partial sum or product that requires it) in time $\Oh(\poly(n))$ and space $\Oh(\log n)$.


With such a procedure in hand, the following theorem describes how to recover the exact coefficients of $P(x)$ using interpolation. 
This is done using the Chinese remainder theorem, and applying a number-theoretic transform, that is, the discrete Fourier transform specialized to the field of integers mod $p$, 
for prime~$p$. Effectively, this technique boils down to evaluating the polynomial in many points and computing a weighted sum of the results.
We need the following simple corollary of the prime number theorem. In fact, it can be proved for $n_0 = 21$ using explicit bounds given by~\cite[Theorem 4]{rosser1962} and hand computation for small enough $n$.

\begin{fact}\label{fact:prime}
	There is an $n_0 \in \mathbb{N}$ such that 
	for all $n\geq n_0$, the product of primes strictly between $n$ and $2n$ is larger than $2^n$.
\end{fact}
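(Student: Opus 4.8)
The plan is to derive the statement from effective versions of the prime number theorem, specifically the explicit bounds of Rosser and Schoenfeld~\cite{rosser1962} on Chebyshev's $\theta$-function, together with a finite hand-check for small $n$. Recall that $\theta(x) = \sum_{p \leq x} \ln p$, so the natural logarithm of the product of primes strictly between $n$ and $2n$ equals $\theta(2n) - \theta(n)$ (the endpoints $n$ and $2n$ contribute nothing: $2n$ is not prime for $n \geq 1$, and if $n$ happens to be prime we are only throwing away one term, which is harmless and in fact already excluded by the strict inequality). Thus the claim is equivalent to showing that $\theta(2n) - \theta(n) > n\ln 2$ for all sufficiently large $n$.

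First I would invoke the bounds from~\cite[Theorem 4]{rosser1962}, which give, for $x$ above an explicit threshold, inequalities of the form $x\bigl(1 - \tfrac{c}{\ln x}\bigr) < \theta(x) < x\bigl(1 + \tfrac{c}{\ln x}\bigr)$ for a small explicit constant $c$. Applying the lower bound at $2n$ and the upper bound at $n$, one obtains
\begin{equation*}
\theta(2n) - \theta(n) > 2n - n - \frac{2nc}{\ln 2n} - \frac{nc}{\ln n} = n\Bigl(1 - \frac{2c}{\ln 2n} - \frac{c}{\ln n}\Bigr).
\end{equation*}
Since $1 > \ln 2 \approx 0.693$ with a comfortable gap, and the correction terms $\tfrac{2c}{\ln 2n} + \tfrac{c}{\ln n}$ tend to $0$ as $n \to \infty$, there is an explicit $n_1$ beyond which the right-hand side exceeds $n\ln 2$, giving the desired bound for all $n \geq n_1$.

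It remains to cover the finitely many $n$ below $n_1$. For these I would simply check the statement directly by computing, for each such $n$, the product of primes strictly between $n$ and $2n$ and comparing it with $2^n$. By Bertrand's postulate there is always at least one such prime, and in practice the product grows fast enough that the inequality holds from a very small value of $n$ onward; the remark in the paper that $n_0 = 21$ suffices is exactly the outcome of this computation. The only mild obstacle is bookkeeping: one must pick the constant $c$ and threshold from~\cite{rosser1962} carefully so that $n_1$ is small enough that the residual hand-computation is genuinely finite and short, but this is routine once the explicit estimates are in hand. Setting $n_0 = \max(n_1, 21)$ (or just $n_0 = 21$ after verifying the small cases) completes the proof.
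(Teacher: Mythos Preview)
Your proposal is correct and follows exactly the route the paper indicates: invoke the explicit $\theta$-bounds of \cite[Theorem~4]{rosser1962} to get $\theta(2n)-\theta(n)>n\ln 2$ for all $n$ past an explicit threshold, then finish the remaining small cases by direct computation down to $n_0=21$. The only minor slip is the parenthetical remark that ``$2n$ is not prime for $n\geq 1$'' (false at $n=1$) and the comment about throwing away a term when $n$ is prime (in fact $\theta(2n)-\theta(n)$ already excludes $p=n$); neither affects the argument, since you only need $n\geq 2$ there.
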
	

\begin{theorem}
	Let $P(x)=\sum_{i=0}^n q_i x^i$ be a polynomial over one variable $x$, of degree at most $n$ and with integer coefficients satisfying $0 \leq q_i \leq 2^n$, for $i=0,\dots,n$.
	Suppose that given a prime number $p\leq 2n+2$ and $a\in \F_p$, the value of $(P(a) \bmod p)$ can be computed in $T$ time and $S$ space.
	Then given $k\in\{0,\dots,n\}$, the value $q_k$ can be computed in $\Oh(T\cdot \poly(n))$ time and $\Oh(S + \log n)$ space. 
\end{theorem}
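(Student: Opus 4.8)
The plan is to recover, for each prime $p$ in a suitable set $\mathcal{P}$, the residue $q_k \bmod p$, and then reconstruct $q_k$ itself by the Chinese remainder theorem. Take $\mathcal{P}$ to be the set of primes $p$ with $n+1 < p < 2n+2$. Applying Fact~\ref{fact:prime} with $n+1$ in place of $n$, the product $\prod_{p\in\mathcal{P}} p$ exceeds $2^{n+1} > q_k$, so the residues $(q_k \bmod p)_{p\in\mathcal{P}}$ determine $q_k$ uniquely (the finitely many small values of $n$ left uncovered are handled by a direct computation). Every $p\in\mathcal{P}$ satisfies $n+2 \leq p \leq 2n+1 \leq 2n+2$, so the assumed procedure for evaluating $P$ modulo $p$ applies, and moreover $p-1 > n \geq \deg P$, which will make the interpolation below exact.

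The crucial step is to compute a single residue $q_k \bmod p$ within $\Oh(S+\log n)$ space, by means of a number-theoretic transform. Over $\F_p$, for every exponent $e\geq 0$ we have $\sum_{a\in\F_p,\, a\neq 0} a^e \equiv -1 \pmod p$ when $(p-1)\mid e$, and $\equiv 0 \pmod p$ otherwise. Substituting $P(a)=\sum_{i=0}^n q_i a^i$ and using $0 \leq i,k \leq n < p-1$ (so that $(p-1)\mid(i-k)$ forces $i=k$), one obtains the identity $q_k \equiv -\sum_{a=1}^{p-1} P(a)\cdot a^{\,p-1-k} \pmod p$; note $1 \leq p-1-k \leq p-1$, so only an ordinary power of $a$ appears and no modular inverse is needed. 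The right-hand side is evaluated by a single loop over $a = 1,\dots,p-1$: in iteration $a$ we invoke the assumed procedure to obtain $P(a) \bmod p$ (in time $T$ and space $S$), compute $a^{\,p-1-k}\bmod p$ by fast modular exponentiation (in $\poly(\log n)$ time and $\Oh(\log n)$ space), multiply, and add into an accumulator kept modulo $p$. The accumulator, the loop counter and the exponentiation scratch space each occupy $\Oh(\log n)$ bits and are reused across iterations, so one residue is produced in $\Oh\big(n\cdot(T+\poly(\log n))\big) \subseteq \Oh(T\cdot\poly(n))$ time and $\Oh(S+\log n)$ space.

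Finally we reconstruct $q_k$. Iterating over the $\Oh(n/\log n)$ primes of $\mathcal{P}$ and recomputing each residue $r_p := q_k \bmod p$ on demand (keeping the per-prime working space at $\Oh(S+\log n)$), we recover $q_k$ as the unique element of $[0,\prod_{p\in\mathcal{P}}p)$ congruent to $r_p$ modulo every $p$, via the Chinese remainder theorem (e.g.\ Garner's mixed-radix algorithm). This last stage is a standard integer computation on $\Oh(n)$-bit numbers (no larger than $q_k$ itself), carried out in $\poly(n)$ time and, written on the output tape, within the space already allotted to the answer. Altogether the algorithm runs in $\Oh(T\cdot\poly(n))$ time and $\Oh(S+\log n)$ working space.

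The main obstacle is keeping the interpolation step inside the space budget. A textbook Lagrange interpolation through $n+1$ evaluation points would need to retain either all evaluated values, or the elementary symmetric polynomials of the node set — $\Theta(n\log n)$ space, far more than $S$ when $S$ is small (for instance polylogarithmic, as in the application to \textsc{Dominating Set}). The number-theoretic-transform identity removes this difficulty by collapsing the whole interpolation into one running sum, so nothing beyond a single $\F_p$-accumulator survives across the $n$ evaluations. The only other delicate point is the choice of $\mathcal{P}$: the primes must simultaneously be small enough for the evaluation procedure to be available ($p \leq 2n+2$), large enough to preclude aliasing in the transform ($p-1 > n$), and numerous enough for the Chinese remaindering ($\prod_{p\in\mathcal{P}} p > 2^n$) — precisely the combination supplied by Fact~\ref{fact:prime}.
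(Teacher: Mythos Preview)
Your approach is essentially the paper's: recover $q_k \bmod p$ for primes $n+1<p<2n+2$ via a number-theoretic transform, then reconstruct $q_k$ by CRT. Your interpolation identity $q_k \equiv -\sum_{a\in\F_p^*} P(a)\,a^{p-1-k}$ is the same sum the paper writes as $-\sum_{i=0}^{p-2} P(\alpha^i)\alpha^{-ik}$ after the substitution $a=\alpha^i$; your formulation is in fact slightly cleaner, since it avoids finding a primitive element and avoids inverses.

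There is, however, a genuine gap in the final reconstruction. You invoke Garner's mixed-radix algorithm and assert that the $\Oh(n)$-bit intermediate values can be ``written on the output tape, within the space already allotted to the answer.'' In the standard model the output tape is write-only; Garner's algorithm as usually stated maintains an $\Oh(n)$-bit partial sum that must be \emph{read} at each step, which would blow the $\Oh(S+\log n)$ working-space budget when $S$ is small. This is exactly the point where the paper appeals to the non-trivial result of Chiu, Davida, and Litow that Chinese remaindering can be carried out in logspace, composed with on-the-fly recomputation of the residues. You should cite this result (or supply an explicit bit-by-bit reconstruction argument) rather than wave it away; otherwise the space bound is not established.
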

\begin{proof}
	We first show how to compute $q_k \bmod p$, given $k$ and a prime $p$ with $n+1 < p < 2n+2$.
	Let $\alpha$ be a primitive element of the field $\F_p$, that is, a generator of the multiplicative group $\F_p^*$ (in other words,  $\F_p^* = \{\alpha^0, \alpha^1, \dots, \alpha^{p-2}\}$).
	Such an element can be found in polynomial time by trying all elements of $\F_p^*$ and testing whether $\alpha^1,\dots,\alpha^{p-2} \neq 1$.
	Compute in $\F_p$ the value
	$$q'_k = -\sum_{i=0}^{p-2} P(\alpha^i) \cdot \alpha^{-ik}.$$
	This takes $\Oh(T \cdot n)$ time and $\Oh(S + \log n)$ space.
	
	We claim $q'_k = q_k \bmod p$.
	To show this, first notice that for $a\in \F_p \setminus \{0\}$, we have
	$$(a-1) \cdot \sum_{i=0}^{p-2} a^i\ =\ a^{p-1} - 1\ =\ 0 \pmod{p}.$$
	Hence, if $a\neq 1$ then $\sum_{i=0}^{p-2} a^i = 0 \pmod{p}$, while for $a=1$, we have
	$\sum_{i=0}^{p-2} a^i = p-1 = -1 \pmod{p}$.
	Thus, for any $k\in\Z$ we have the following:
	$$\sum_{i=0}^{p-2} (\alpha^k)^i\ =\ [k \equiv 0\!\mod (p-1)] \cdot (-1) \pmod{p}.$$
	Here, $[ \cdot ]$ denotes Iverson's notation: the value is $0$ or $1$ depending whether the predicate in the brackets is false or true, respectively.
	Since $n < p-1$, for integers $j,k\in \{0,\dots,n\}$ we have $j-k \equiv 0\!\mod (p-1)$ if~and~only~if $j=k$.
	Therefore, as claimed, we conclude that
	\begin{align*}
	q'_k &=\quad -\sum_{i=0}^{p-2}\left( \sum_{j=0}^n q_j\cdot (\alpha^i)^j\right) \cdot \alpha^{-ik} \quad=\quad
	     -\sum_{j=0}^n q_j \sum_{i=0}^{p-2} (\alpha^{j-k})^i \quad=\\
	     &=\quad -\sum_{j=0}^n q_j \cdot [j=k] \cdot (-1) \quad=\quad
	     q_k\pmod{p}.
	\end{align*}
	
	Therefore, for every integer $p$ strictly between $n+1$ and $2(n+1)$, we can check whether it is prime (by brute-force) and compute $q_k \bmod p$ in $\Oh(T \cdot n)$ time and $\Oh(S + \log n)$ space.
	From Fact~\ref{fact:prime} and the Chinese remainder theorem, it follows that $q_k$ is non-zero if and only if $(q_k\bmod p)$ turns out to be non-zero for at least one $p$ (we assume $n+1\geq n_0$, as otherwise $q_k$ can be computed by brute-force).
	Moreover, given a list of pairs $(p, q_k\bmod p)$ for the primes $p$ between $n+1$ and $2(n+1)$,
	the exact value of $q_k$ can be recovered with a logspace (and hence polynomial time) algorithm by Chiu et al.~\cite[Theorem 3.3]{ChiuDL01}, which is an effective version of the Chinese remainder theorem.
	Note that we do not need to simultaneously store all values $q_k \bmod p$ for different $p$: we use the compositionality of logspace algorithms instead, that is, values output by our algorithm are recomputed on the fly as needed by the remaindering algorithm, multiplying the running times and adding the space bounds of the two algorithms.
\end{proof}

\begin{corollary}
	There exists an algorithm that, given a graph $G$ on $n$ vertices and its tree-depth decomposition of depth $s$, runs in time $3^s\cdot \poly(n)$ and space $\Oh(s \cdot \log n)$, and outputs a sequence $(q_i)_{0\leq i\leq n}$, where $q_i$ is the number of dominating sets of $G$ of cardinality $i$.
\end{corollary}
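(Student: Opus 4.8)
The plan is to combine the recursion of Lemma~\ref{lem:ds} with the interpolation machinery of the preceding theorem. First I would observe that the bottleneck in the algorithm of Lemma~\ref{lem:ds} is storing polynomials from $\mathbb{Z}[x]$ of degree up to $n$ on each of the $\Oh(s)$ recursion levels; this costs $\Oh(sn^2)$ space. To avoid this, fix a prime $p$ with $n+1<p<2(n+1)$ and an element $a\in\F_p$, and run the \emph{same} mutually recursive procedures for $f(\cdot,\cdot)$ and $g(\cdot,\cdot)$ from the proof of Lemma~\ref{lem:ds}, but carrying values in $\F_p$ rather than in $\mathbb{Z}[x]$: replace every occurrence of $x$ by $a$ and reduce modulo $p$ throughout. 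The key point is that the recursive equations~\eqref{e:g},~\eqref{e:f},~\eqref{e:base} and~\eqref{e:whole} only ever combine already-computed values by additions and multiplications (equations~\eqref{e:g} and~\eqref{e:whole} being iterated products over $\chld(u)$ respectively $\roots$), so at each recursion level it suffices to maintain a single element of $\F_p$ — the partial sum or partial product accumulated so far — together with the $\Oh(\log n)$-bit identifier of the current vertex. Hence $(P(a)\bmod p)$ can be computed in time $3^s\cdot\poly(n)$ and space $\Oh(s\cdot\log n)$, exactly as sketched in the paragraph preceding the theorem.

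Next I would invoke the preceding theorem with $T=3^s\cdot\poly(n)$ and $S=\Oh(s\cdot\log n)$, and with $P(x)=\sum_{i=0}^n q_i x^i$ the generating polynomial whose coefficient $q_i$ counts dominating sets of $G$ of cardinality $i$; note $0\le q_i\le 2^n$, so the hypotheses are met. The theorem then yields, for each fixed $k\in\{0,\dots,n\}$, an algorithm computing $q_k$ in time $\Oh(T\cdot\poly(n))=3^s\cdot\poly(n)$ and space $\Oh(S+\log n)=\Oh(s\cdot\log n)$. Finally, to output the whole sequence $(q_i)_{0\le i\le n}$ one simply runs this for each $k=0,1,\dots,n$ in turn; since $n$ iterations multiply the running time by only $\poly(n)$ and do not increase the space bound, the overall algorithm still runs in time $3^s\cdot\poly(n)$ and space $\Oh(s\cdot\log n)$, as claimed.

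I expect the only genuinely delicate step to be the first one: arguing carefully that the evaluation-at-a-point variant of the recursion of Lemma~\ref{lem:ds} really fits in $\Oh(s\cdot\log n)$ space, i.e.\ that one never needs to hold more than one $\F_p$-element per active recursive call. This follows because the M\"obius-transformed dynamic programming turns the subset convolutions into pointwise products (the crux of Lemma~\ref{lem:ds}), so that each node of the recursion tree only accumulates a product or sum over its children one child at a time, recomputing child values on the fly; there is never a need for memoization or for holding a full coefficient vector. Everything else is routine bookkeeping that has already been carried out in Lemma~\ref{lem:ds} and in the interpolation theorem, and the compositionality of logspace-bounded algorithms lets us chain the pieces without adding to the space budget.
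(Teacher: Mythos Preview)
Your proposal is correct and follows exactly the approach the paper intends: evaluate the recursion of Lemma~\ref{lem:ds} pointwise in $\F_p$ to get $(P(a)\bmod p)$ within the stated resource bounds (as spelled out in the paragraph preceding the interpolation theorem), then apply that theorem to recover each coefficient $q_k$, and iterate over $k$. The paper treats the corollary as immediate from this combination, and your write-up faithfully expands that reasoning.
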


\section{Conclusions}\label{sec:conc}
In what follows we assume for conciseness that $s(n)$ is a nice function satisfying $s(n^c)=\Oh(s(n))$ for each constant $c$ (this includes $s(n)=\lg^k n$ for $k\geq 1$, in particular).

The hierarchy of graph parameters of Corollary~\ref{cor:td_pw_tw} together with Theorems~\ref{thm:pwCompleteness},~\ref{thm:main-td}, and~\ref{thm:alternation} implies the following hierarchy of complexity classes between $\cfont{NL}$ and $\cfont{NP}$.
\medskip

\newcommand{\balphantom}{\rule[-18pt]{0pt}{0pt}}
\begin{tabular}{c@{ }c@{ }c@{ }c@{ }c@{\ }c@{\ }c}
$\NAuxSA{\poly}{\log}{s}$          &=& $[\tdEx{s}]^L$             &=&$\ATiSz{s}{\poly}$ &$\subseteq$&$\DSpace{s}$\\
\balphantom                        & &\rotatebox{-90}{$\subseteq$}& & \\  
$\NTiSp{\poly}{s}$                 &=& $[\pwEx{s}]^L$             &=&$\NTiSp{\poly}{s}$ \\
\balphantom                        & &\rotatebox{-90}{$\subseteq$}& & \\
$\NAuxPDA{\poly}{s}$               &=& $[\twEx{s}]^L$             &=&$\ASpSz{s}{\poly}$&$\subseteq$&$\DTime{2^{\Oh(s)}}$\\
\balphantom                        & &\rotatebox{-90}{$\subseteq$}& & \\
$\NAuxSA{\poly}{\log}{s\cdot \log}$&=& $[\tdEx{s\cdot \log}]^L$    &=&$\ATiSz{s \cdot \log}{\poly}$&$\subseteq$&$\DSpace{s\cdot\log}$\\
\end{tabular}
\bigskip

In particular, when considering functions $s(n) = \log^k(n)$, the classes have sometimes been considered under different names:
\begin{itemize}
\item $\NAuxSA{\poly}{\log}{\log^{k}}$ was named \cfont{DC$^{k-1}$} (for \emph{divide and conquer}) in~\cite{AkatovG10,akatov2010exploiting},
\item $\NTiSp{\poly}{\log^k}$ are known as $\NSC{k}$ (the non-deterministic variant of \emph{Steve's Class}),
\item $\NAuxPDA{\poly}{\log^k}$ is shown equal to a class named $\SACq{k}$ in~\cite{AllenderCLPT14}.
\end{itemize}
This yields the following hierarchy:
$$
\cfont{L} \subseteq
\begin{matrix} \NL\\\parallel\\\NSC{1} \end{matrix} \subseteq
\begin{matrix} \cfont{SAC}^1\\\parallel\\\SACq{1} \end{matrix} \subseteq
\cfont{DC}^1 \subseteq
\dots \subseteq
\cfont{DC}^{k-1} \subseteq
\NSC{k} \subseteq
\SACq{k} \subseteq
\cfont{DC}^k \subseteq
\dots \subseteq
\cfont{NP} $$

We conclude with an open question stemming from this work. 
In Section~\ref{sec:treedepth} we have shown that {\textsc{3Coloring}} is complete for $\NAuxSA{\poly}{\log}{s}$ when a tree-depth decomposition of depth $s(n)$ is given on the input. 
By Lemma~\ref{lem:equivalenceOfProblems}, the same holds for equivalent CSP-like problems, 
like {\textsc{CNF-SAT}} (with primal or incidence graph), whereas {\textsc{Independent Set}} and {\textsc{Dominating Set}} are hard for the same class.

It is not hard to see that {\textsc{Independent Set}} actually can be solved in the complexity class $\NAuxSA{\poly}{\log}{s}$, using an approach very similar to that of Lemma~\ref{lem:deter-3col} as follows. 
The algorithm traverses the treedepth decomposition in the prefix order, nondeterministically guessing a maximum-size independent set $X$ on the fly, 
and storing the following information: on the stack we store the intersection of $X$ with the path from the current vertex to a root of the decomposition, 
whereas in the working memory we store the number of vertices from $X$ found so far.
By Theorem~\ref{thm:determinization}, this means that {\textsc{Independent Set}} on a graph given with tree-depth decomposition of depth $s$ can be solved deterministically in space  $\Oh(s+\log n)$.

As far as {\textsc{Dominating Set}} is concerned, in Section~\ref{sec:domset} we demonstrated how using the algebraic approach of Lokshtanov and Nederlof~\cite{LokshtanovN10}, and of F\"urer and Yu~\cite{FurerY14}, 
one can obtain an algorithm for {\textsc{Dominating Set}} with running time $3^s\cdot \poly(n)$ and space complexity $\Oh(s \cdot \log n)$. 
Thus, it is unclear to us whether the problem {\textsc{Dominating Set}} on graphs with treedepth decompositions of width $s(n)$ belongs to $\NAuxSA{\poly}{\log}{s}$. 
Observe that if this would be the case, then by Theorem~\ref{thm:determinization} it should be solvable in space $\Oh(s+\log n)$; however, already achieving space complexity $\Oh(s \cdot \log n)$ was highly nontrivial.

\subparagraph*{Acknowledgements.} The authors thank Yoichi Iwata for pointing out that {\textsc{Independent Set}} on graphs given 
with a treedepth decomposition of width $s(n)$ is actually in the class $\NAuxSA{\poly}{\log}{s}$.

\printbibliography
\appendix
\section{Reductions preserving structural parameters}\label{app:preserve}
To capture the structural dependencies in reductions for a more uniform proof, we use the following definition borrowed from Chen and M\"{u}ller~\cite{ChenM14}.
While very similar to a tree decomposition, it is not limited to trees and allows an edge to be covered by two adjacent bags instead of one, which turns out to give a generalization with better properties. 

\begin{definition}\label{def:deconstruction}
For graphs $G,H$, an \emph{$H$-deconstruction of $G$} is a family $(B_h)_{h\in V(H)}$ of subsets of $V(G)$ (called \emph{bags}) such that every vertex of $G$ is in some bag, every edge of $G$ has both endpoints contained in one, or two adjacent (in $H$) bags, and for each vertex $v\in V(G)$ the subset $\{h \in V(H) \mid v\in B_h\}$ is connected in $H$.
The \emph{width} of a deconstruction is the maximum size of a bag or a union of two adjacent bags.
\end{definition}

We observe that in many reductions, the output graph can be deconstructed into the input graph (or e.g. the incidence graph of the input formula) with constant width.
We first show that this guarantees the reduction preserves structural parameters.

\begin{lemma}\label{lem:deconstruction}
Let $\pi\in\{\td,\pw,\tw\}$. There is a logspace algorithm that given graphs $G,H$, an $H$-deconstruction of $G$ of width $w$, and a $\pi$-decomposition of $H$ of width/depth $w_h$, outputs a $\pi$-decomposition of $G$ of width/depth at most $w \cdot (w_h+1)$.
In particular, $\pi(G) \leq w \cdot \pi(H)$.
\end{lemma}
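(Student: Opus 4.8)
The plan is to take the given $\pi$-decomposition of $H$ and "blow up" each of its bags by replacing the single vertex $h\in V(H)$ with the entire bag $B_h\subseteq V(G)$, then argue that the resulting object is a valid $\pi$-decomposition of $G$ once we account for the deconstruction allowing edges to straddle two adjacent bags. Concretely, for the tree/path cases, let $(\mathcal{T},(X_t)_{t\in\mathcal{T}})$ be the given decomposition of $H$ (with $\mathcal{T}$ a tree or a path). Define a new decomposition on the same underlying tree $\mathcal{T}$ by setting $X'_t=\bigcup_{h\in X_t} B_h$. Then $|X'_t|\le w\cdot|X_t|\le w\cdot(w_h+1)$, giving the claimed width bound. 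It remains to check the three tree-decomposition axioms for $(\mathcal{T},(X'_t))$. Coverage of $V(G)$ is immediate since every vertex of $G$ lies in some $B_h$ and every $h$ lies in some $X_t$. For the connectivity axiom, the set $\{t : v\in X'_t\}=\{t : \exists h\in X_t,\, v\in B_h\}=\{t : X_t\cap \{h : v\in B_h\}\neq\emptyset\}$; since $\{h : v\in B_h\}$ is connected in $H$ and each $\{t : h\in X_t\}$ is connected in $\mathcal{T}$, a standard Helly-type argument for subtrees (the union of subtrees along a connected subgraph of $H$ is connected) shows this set is connected in $\mathcal{T}$.

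The edge axiom is where the "two adjacent bags" relaxation is handled, and it is the main point requiring care. Given an edge $uv\in E(G)$, the deconstruction gives either a single bag $B_h$ containing both $u,v$, in which case any $t$ with $h\in X_t$ has $u,v\in X'_t$; or two $H$-adjacent bags $B_{h_1}\ni u$, $B_{h_2}\ni v$ with $h_1h_2\in E(H)$. In the latter case, the decomposition of $H$ guarantees a node $t$ of $\mathcal{T}$ with $h_1,h_2\in X_t$, and then $u,v\in X'_t$. Thus every edge of $G$ is covered by a single bag of the new decomposition. This verifies that $(\mathcal{T},(X'_t))$ is a tree decomposition (resp. path decomposition, when $\mathcal{T}$ is a path) of $G$ of width at most $w\cdot(w_h+1)-1$, hence $\tw(G)\le w\cdot\tw(H)$ and $\pw(G)\le w\cdot\pw(H)$.

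For tree-depth, the given decomposition of $H$ is a rooted forest $\mathcal{T}$ with a bijection $\mu_H:V(H)\to V(\mathcal{T})$ of depth $w_h$; the natural analogue is to replace each node $\mu_H(h)$ by a path of $|B_h|$ nodes, to which the elements of $B_h$ are assigned bijectively in an arbitrary order, attaching children exactly as in the construction used in the proof of Lemma~\ref{lem:td_pw_tw}. Call the resulting rooted forest $\mathcal{T}'$ with mapping $\mu_G$. Its depth is at most $w\cdot w_h$. To check it is a valid tree-depth decomposition, take $uv\in E(G)$ with $u\in B_{h_1}$, $v\in B_{h_2}$ where $h_1=h_2$ or $h_1h_2\in E(H)$; in either case $\mu_H(h_1)$ and $\mu_H(h_2)$ are in ancestor–descendant relation in $\mathcal{T}$ (for the edge case, because $\mu_H$ is a tree-depth decomposition of $H$), and since the blow-up preserves the ancestor relation, $\mu_G(u)$ and $\mu_G(v)$ are comparable in $\mathcal{T}'$. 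Hence $\td(G)\le w\cdot\td(H)\le w\cdot(w_h+1)$.

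Finally, all constructions are manifestly logspace: reading the deconstruction and the decomposition of $H$, the new bag $X'_t$ (or the position of a vertex in $\mathcal{T}'$) can be enumerated by iterating over $h\in X_t$ and then over $v\in B_h$, tracking only a constant number of indices plus $\Oh(\log n)$ bits for the current identifiers; the path lengths $|B_h|$ and membership tests are computed on the fly exactly as in Lemma~\ref{lem:td_pw_tw}. The main obstacle is purely the bookkeeping of the edge axiom under the relaxed deconstruction notion, i.e.\ ensuring that an edge split across two $H$-adjacent bags is still captured — but this is exactly what the connectivity of $\{h : v\in B_h\}$ together with adjacency in the $H$-decomposition delivers. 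The combinatorial content is otherwise routine, and the $+1$ in the bound comes solely from translating between "width" (size minus one) and "number of bag nodes".
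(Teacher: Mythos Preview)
Your treatment of the treewidth and pathwidth cases is correct and matches the paper's proof essentially verbatim (the paper simply states that $C'_t=\bigcup_{h\in C_t}B_h$ works without spelling out the three axioms, but your verification is sound).

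The tree-depth case, however, has a genuine gap. You propose to replace each node $\mu_H(h)$ by a path of $|B_h|$ nodes and assign the elements of $B_h$ bijectively to that path. But a single vertex $v\in V(G)$ may lie in several bags $B_{h_1},B_{h_2},\dots$, so your map $\mu_G$ is not well-defined (equivalently, $\mathcal{T}'$ has $\sum_h |B_h|$ nodes, which may exceed $|V(G)|$, and there is no bijection). Even if you pick one bag per vertex arbitrarily, the correctness argument breaks: for an edge $uv$ with $u\in B_{h_1}$, $v\in B_{h_2}$ and $h_1h_2\in E(H)$, you need $\mu_G(u)$ and $\mu_G(v)$ comparable, but $\mu_G(u)$ may have been determined by some other bag $B_{h_1'}\ni u$ whose node $\mu_H(h_1')$ is incomparable with $\mu_H(h_2)$.

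The paper resolves this by assigning each $v\in V(G)$ to the set $M_t$ where $t$ is the \emph{lowest common ancestor} in $\mathcal{T}$ of $\{\mu_H(h):v\in B_h\}$, and then replacing each node $t$ by a path of $|M_t|$ nodes. The key observation (using connectedness of $\{h:v\in B_h\}$ in $H$ and the fact that adjacent $H$-vertices map to comparable $\mathcal{T}$-nodes) is that this LCA equals $\mu_H(h_0)$ for some $h_0$ with $v\in B_{h_0}$; hence $M_t\subseteq B_{\mu_H^{-1}(t)}$ and $|M_t|\le w$. With this canonical placement, for any edge $uv$ covered by adjacent bags $B_{h_1},B_{h_2}$, both $\mu_G(u)$ and $\mu_G(v)$ land at ancestors of $\mu_H(h_1)$ (or $\mu_H(h_2)$), hence are comparable. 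This LCA step is the missing idea in your argument.
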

\begin{proof}
Let $(B_h)_{h\in V(H)}$ be an $H$-deconstruction of $G$ of width $w$.

For treewidth and pathwidth, observe that if $(\mathcal{T},(C_t)_{t\in \mathcal{T}})$ is a tree (or path) decomposition of $H$ of width $w_h$, then the same tree $\mathcal{T}$ with bags defined as $C'_t = \bigcup_{h\in C_t} B_h$ is a valid decomposition of $G$ of width at most $w \cdot (w_h+1)$.

For tree-depth, let $(\mathcal{T},\mu)$ define a tree-depth decomposition of $H$ of depth $w_h$.
Create sets $M_t$ for nodes $t$ of $\mathcal{T}$ and place each vertex $v$ of $G$ in $M_t$ where $t$ is the lowest common ancestor of $\{\mu(h) \mid v \in B_h\}$ in $\mathcal{T}$.

Observe that by definition of a deconstruction, if $v \in B_h$ and $v \in B_{h'}$ for some $h,h'\in V(H)$, then there is a path connecting $h$ and $h'$ in $H$ containing only vertices $h''$ such that $v\in B_{h''}$;
hence there is an $h'' \in V(H)$ such that $\mu(h'')$ is a common ancestor of $\mu(h)$ and $\mu(h')$ in $\mathcal{T}$ and $v \in B_{h''}$.
Therefore, if $h_0$ is the lowest common ancestor of $\{\mu(h) \mid v \in B_h\}$ in $\mathcal{T}$, then also $v\in B_{h_0}$, and hence every vertex $v$ is put into a set $M_t$ such that $v \in B_{\mu^{-1}(t)}$. That is, $M_t \subseteq B_{\mu^{-1}(t)}$, which implies $|M_t|\leq w$ for all $t\in V(\mathcal{T})$.

Let us then modify $\mathcal{T}$ by replacing every node $t\in V(\mathcal{T})$ by a path of $|M_t|$ nodes, and define a bijection $\mu'$ between vertices of $G$ and nodes of this tree that maps vertices in $M_t$ to nodes of the path that replaced $t$ (in any order). Let $\mathcal{T}'$ be the modified decomposition.
Clearly $\mathcal{T}'$ has depth at most $w\cdot w_h$.
To check that $\mathcal{T}'$ with $\mu'$ defines a valid tree-depth decomposition, consider any edge $uv$ of $G$.
By definition of a deconstruction there are adjacent or equal vertices $h,h'$ in $H$ such that $u\in B_h$ and $v\in B_{h'}$.
Since $\mu(h)$ is an ancestor of $\mu(h')$ or vice versa, $u$ was assigned to a set $M_t$ such that $t$ is an ancestor of $\mu(h)$ and $v$ was assigned to a set $M_{t'}$ such that $t'$ is an ancestor of $\mu(h')$, it must be that $t$ is an ancestor of $t'$ or vice versa.
Hence $\mu'(u)\in M_t$ is an ancestor of $\mu'(v)\in M_{t'}$ or vice versa.
\end{proof}

Many reductions between NP-complete graph problems introduce components of bounded size replacing every edge of the original graph, or more generally, attach small components to cliques of the original graph.
We need the following lemma to show that such reductions also preserve structural parameters.
For a graph $G$ and a vertex set $S\subseteq V(G)$, the \emph{subgraph of $G$ induced by $S$}, denoted $G[S]$, is the graph with vertex set $S$ and edge set $E(G) \cap (S\times S)$.
We write $G-S$ for $G[V(G)\setminus S]$.

\begin{lemma}\label{lem:incidenceBound} 
Let $\pi\in\{\td,\pw,\tw\}$. Let $G$ be an induced subgraph of $G'$ such that for each connected component $C$ of $G'-V(G)$ we have that $C$ has at most $c$ vertices and the neighborhood of $C$ in $V(G)$ is a clique in $G$, for some constant $c$. Then
$$\pi(G') \leq \pi(G)+c.$$
Furthermore, given $G,G'$ and any $\pi$-decomposition of $G$, one can compute a $\pi$-decomposition of $G'$ of width/depth larger by at most $c$ in logspace.
\end{lemma}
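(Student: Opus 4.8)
The key structural fact to exploit is that each connected component $C$ of $G'-V(G)$ has a neighborhood $N(C)\subseteq V(G)$ that forms a clique in $G$; since cliques appear inside a single bag (for tree/path decompositions) or on a single root-to-leaf path (for tree-depth decompositions) of any $\pi$-decomposition of $G$, there is a natural ``attachment point'' for $C$ in the decomposition of $G$. The plan is to handle the three parameters somewhat uniformly: start from a given $\pi$-decomposition of $G$, and for each component $C$ find a location where all of $N(C)$ is simultaneously present, then splice in the $\le c$ vertices of $C$ there, at the cost of adding $c$ to the width/depth.

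\textbf{Treewidth and pathwidth.} First I would recall the standard fact that in a tree decomposition $(\T,(X_t)_{t\in\T})$ of $G$, for every clique $K$ of $G$ there is a node $t$ with $K\subseteq X_t$ (for pathwidth the same holds with $\T$ a path); moreover such a node can be found in logspace by scanning the decomposition. For each component $C$ of $G'-V(G)$, pick such a node $t_C$ for the clique $N(C)$, create a fresh node $t_C'$ whose bag is $N(C)\cup V(C)$, and attach $t_C'$ as a new leaf adjacent to $t_C$. Doing this for all components keeps $\T$ a tree (a path, if we instead insert $t_C'$ into the path sequence right next to $t_C$ — care is needed here so that the connectivity condition for vertices of $G$ is not violated, which it is not since we only duplicate a bag and add new vertices appearing in a single bag). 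The new bags have size at most $(\pi(G)+1)+c$, so the width is at most $\pi(G)+c$; every edge of $G'$ incident to $V(C)$ has both endpoints in $N(C)\cup V(C)\subseteq X_{t_C'}$, and every vertex of $V(C)$ lies in exactly one bag, so the connectivity condition holds. All steps are clearly logspace.

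\textbf{Tree-depth.} Here I would start from a tree-depth decomposition $(\T,\mu)$ of $G$ of depth $\pi(G)$. Since $N(C)$ is a clique of $G$, the nodes $\{\mu(v):v\in N(C)\}$ lie on a single root-to-leaf path of $\T$; let $t_C$ be the deepest of them. For each component $C$, attach below $\mu^{-1}(t_C)$ a fresh path of $|V(C)|\le c$ nodes, and extend $\mu$ to map $V(C)$ bijectively onto this path (in any order). Every edge $uv$ of $G'$ with $u,v\in V(G)$ is unchanged; an edge with $u\in V(C),v\in N(C)$ has $\mu(v)$ an ancestor of $t_C$ (or equal), hence an ancestor of $\mu(u)$; and an edge with both endpoints in $V(C)$ is fine since $V(C)$ sits on a single path. (If a vertex of $V(G)$ neighbors several components, we simply attach several disjoint pendant paths below the relevant nodes; the depth increases by at most $c$ along any one of them because each component contributes at most $c$ new nodes on its own branch.) The depth is thus at most $\pi(G)+c$, and the construction is logspace: for each component we only need to identify the deepest ancestor among $\mu(N(C))$, which is a logspace computation.

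\textbf{Main obstacle.} The only real subtlety is the pathwidth case: inserting the new bag $N(C)\cup V(C)$ adjacent to $t_C$ must preserve the property that $\T$ is a path \emph{and} that each vertex of $G$ occupies a contiguous segment. The clean fix is to duplicate the bag $X_{t_C}$ as a new path-node placed immediately after $t_C$ in the path order, then enlarge that duplicate to $N(C)\cup V(C)$; duplicating a bag never breaks contiguity, and the new vertices $V(C)$ occupy a single node, so contiguity for them is trivial. A second point to check is that when many components attach near the same location, the inserted copies can be chained one after another without the width exceeding $\pi(G)+c$ — this holds because each inserted bag still only contains $N(C)\cup V(C)$ for its own $C$, never accumulating the vertices of other components. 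Everything else is routine bookkeeping, and the logspace bound follows since all operations are local scans of the given decomposition together with reachability checks inside the bounded-size components $C$.
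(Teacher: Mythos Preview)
Your approach is essentially identical to the paper's: for each component $C$, locate the clique $N(C)$ in the given decomposition (in a bag for $\tw/\pw$, on a root-to-leaf chain for $\td$) and splice in the $\le c$ vertices of $C$ there. The paper states the construction in exactly the same way.

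One small slip to fix in the pathwidth case: when you insert the new node into the path between $t_C$ and its neighbor $t_{C+1}$, its bag must be $X_{t_C}\cup V(C)$, not $N(C)\cup V(C)$ as you wrote. If you only keep $N(C)\cup V(C)$, any vertex $v\in X_{t_C}\cap X_{t_{C+1}}\setminus N(C)$ would appear on both sides of the inserted node but not in it, breaking contiguity. Your own phrase ``duplicate a bag and add new vertices'' (and ``duplicating a bag never breaks contiguity'') shows you have the right construction in mind; the paper does precisely this, copying $X_{t_C}$ and adding $V(C)$. With that correction the argument is complete and matches the paper.
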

\begin{proof}
For \td, in any tree-depth decomposition, the vertices of a clique in $G$ must be mapped to nodes fully ordered by the ancestor relation.
We may thus simply take each connected component $C$ of $G'-V(G)$, examine the placement of the clique $N(C)\subseteq V(G)$ in the given tree-depth decomposition of $G$, and attach the vertices of $C$ as a path of length $|C|$ below the lowest node the clique $N(C)$ maps to. In this manner we create a tree-depth decomposition of $G'$ of depth larger than the input tree-depth decomposition of $G$ by at most $c$.

For \tw and \pw, we use the fact that each clique in $G$ must be fully contained in some bag of a decomposition (e.g., \cite[Lemma 1]{Bodlaender05}).
Hence, for each connected component $C$ of $G'-V(G)$, we find a bag that contains all the vertices of the clique $N(C)\subseteq V(G)$, and create a copy of this bag into which all the vertices of $C$ are added. It is straightforward to arrange the new bags in the decomposition. In this manner we construct a decomposition of $G'$ of width larger than the original decomposition of $G$ by at most $c$.

For all three parameters, it is trivial to implement the described procedure in logspace.
\end{proof}

We are now ready to show how standard reductions for some example NP-complete problems prove them to be equivalent to \exampleProblem, or at least as hard, in our setting.

{\renewcommand{\thetheorem}{\ref{lem:equivalenceOfProblems} (restated)}
\begin{lemma}
The following problems are equivalent under logspace reductions that preserve structural parameters: \textsc{3Coloring}, CNF-SAT (using a decomposition of the primal graph), $k$-SAT (using a decomposition of either the primal or incidence graph) for each $k\geq 3$.

Furthermore, the following problems admit logspace reductions that preserve structural parameters from the above problems: \textsc{Vertex Cover}, \textsc{Independent Set}, \textsc{Dominating Set}.
\end{lemma}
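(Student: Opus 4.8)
The plan is to close a cycle of parameter-preserving logspace reductions among \exampleProblem, 3-SAT on the primal graph, and CNF-SAT on the primal graph, to fold $k$-SAT (any fixed $k\ge 3$, primal or incidence graph) into this class, and then to reduce a member of the class to \textsc{Vertex Cover}, \textsc{Independent Set}, and \textsc{Dominating Set}. The two tools doing essentially all the work are Lemma~\ref{lem:deconstruction} --- a reduction whose output graph $H$ admits a bounded-width $G$-deconstruction (or a deconstruction into the primal/incidence graph of the input formula) automatically preserves all three parameters --- and Lemma~\ref{lem:incidenceBound} --- attaching bounded-size components along cliques costs only an additive constant.

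First, the cheap links. A 3-CNF is a $k$-CNF after repeating a literal to pad each clause to exactly $k$ literals, which does not touch the primal graph; and CNF-SAT reduces to $k$-SAT by the clause splitting handled below. For a fixed clause width $k$, the primal and incidence graphs of a formula have treewidth, pathwidth, and tree-depth within a factor $\Oh(k)$ of each other, with logspace transformations of decompositions in both directions: from the primal side one adds, per clause $C$, a node carrying $\{C\}\cup\mathrm{vars}(C)$ --- attached at a bag containing the clique $\mathrm{vars}(C)$, or, for tree-depth, hung as a leaf below the deepest image of $\mathrm{vars}(C)$; from the incidence side, for treewidth/pathwidth one replaces a bag $B$ by $(B\cap\text{variables})\cup\bigcup_{C\in B}\mathrm{vars}(C)$, and for tree-depth one assigns each variable $x$ to the topmost among the nodes $\{\mu(x)\}\cup\{\mu(C):x\in\mathrm{vars}(C)\}$ (these are pairwise comparable, since each is comparable with $\mu(x)$), noting that at most $k$ variables are assigned to any clause node, and then expands each node into a short path. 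Everything here is computable in logarithmic space.

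Now the gadget reductions. For $\exampleProblem\lred\text{3-SAT (primal)}$ use the textbook encoding with colour variables $c_{v,1},c_{v,2},c_{v,3}$, an ``exactly one colour'' group of clauses per vertex, and an ``unequal colours'' group per edge; the primal graph of the resulting 3-CNF is precisely $G$ with every vertex blown into a triangle and every edge into a matching between triangles, which has the deconstruction $B_v=\{c_{v,1},c_{v,2},c_{v,3}\}$ into $G$ of width at most $6$. For $\text{3-SAT (primal)}\lred\exampleProblem$, first switch to a decomposition of the incidence graph (clause width being bounded), then apply the textbook 3-SAT $\to$ 3-colouring reduction (a palette triangle $T,F,B$, a triangle $x_v$--$\lnot x_v$--$B$ per variable, a constant-size gadget per clause joined to its three literal vertices and to $T,F$); after deleting the three global vertices $T,F,B$, the remaining graph deconstructs into the incidence graph of the formula with constant width (variable bag $\{x_v,\lnot x_v\}$, clause bag the gadget of $C$), and reinserting $T,F,B$ into every bag --- or as a three-node path placed above a tree-depth decomposition --- costs only $+3$. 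For $\text{CNF-SAT (primal)}\lred\text{3-SAT (primal)}$ use clause splitting $(\ell_1\lor\cdots\lor\ell_m)\leadsto(\ell_1\lor\ell_2\lor y_1)\land(\lnot y_1\lor\ell_3\lor y_2)\land\cdots$; the new primal graph is not a supergraph of the old one, but for each clause $C$ the fresh variables $y_i$ form a path attached only to the clique $\mathrm{vars}(C)$, which lies in some bag $B(C)$ --- respectively on a single root-to-node path --- of the given decomposition, so the $m-2$ fresh variables can be threaded as a chain of bags $B(C)\cup\{y_i,y_{i+1}\}$ hung off $B(C)$, respectively as a path below the deepest node of $\mu(\mathrm{vars}(C))$. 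Since $m-2$ is bounded by the clique size and hence by the given width/depth, this blows the parameter up by at most a constant factor (indeed only $+3$ for pathwidth and treewidth). Composing these reductions with the trivial inclusions between $k$-SAT, 3-SAT and CNF-SAT yields the claimed equivalences.

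For the ``furthermore'' part, the standard 3-SAT $\to$ \textsc{Vertex Cover} reduction --- an edge $x$--$\lnot x$ per variable, a triangle per clause, each triangle vertex joined to the literal it stands for --- deconstructs into the incidence graph with width at most $6$, hence preserves structural parameters; and since $S$ is a vertex cover of a graph exactly when its complement is an independent set, the very same output graph realizes a parameter-preserving reduction to \textsc{Independent Set} after shifting the target size. Finally, \textsc{Vertex Cover} reduces to \textsc{Dominating Set} by adding, for each edge $uv$, a new vertex $w_{uv}$ adjacent to $u$ and $v$: a graph with no isolated vertex has a size-$k$ vertex cover iff the new graph has a size-$k$ dominating set, and since each $w_{uv}$ is a one-vertex component attached to the clique $\{u,v\}$, Lemma~\ref{lem:incidenceBound} with $c=1$ shows the parameters rise by at most one; as the \textsc{Vertex Cover} instance produced above has no isolated vertices, composing the two reductions is legitimate. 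The point I expect to demand the most care is the tree-depth bookkeeping precisely where the attached structure is not of constant size --- the fresh-variable chains in the CNF clause splitting, and dually the incidence-to-primal conversion --- where one must verify that the relevant tree nodes really form a chain, that not too many variables accumulate at one node, and that the whole rearrangement is producible on the fly within logarithmic space; the saving grace in all these cases is that the clique $\mathrm{vars}(C)$ is already forced onto a single root-to-node path of any tree-depth decomposition, so the extra material can simply be slipped in below it.
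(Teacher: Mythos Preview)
Your proposal is correct and follows essentially the same route as the paper: both rely on Lemmas~\ref{lem:deconstruction} and~\ref{lem:incidenceBound} together with the standard textbook gadget reductions, closing a cycle among \exampleProblem{} and the SAT variants and then reducing one-way to the three optimization problems. The only differences are organizational details --- you go 3-SAT$\to$\textsc{Vertex Cover}$\to$\textsc{Independent Set} while the paper does the reverse order, and for \textsc{Vertex Cover}$\to$\textsc{Dominating Set} you add a fresh vertex adjacent to both endpoints of each edge (the standard reduction) where the paper phrases it as subdividing each edge; your explicit treatment of the primal/incidence tree-depth conversion is also slightly more hands-on than the paper's direct appeal to the two helper lemmas.
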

\addtocounter{theorem}{-1}}
\begin{proof}
All the following reductions are standard, and hence we keep the description concise. Also, it will be straightforward to verify that they can be implemented in logspace. The only non-trivial check will be to verify, using Lemmas~\ref{lem:incidenceBound}  and~\ref{lem:deconstruction}, that the structural parameters are preserved.

\medskip

\noindent CNF-SAT (primal graph) $\lred$ $k$-SAT (primal graph) (for any $k\geq 3$):\\[0.1cm]
Replace every clause $(l_1 \vee l_2 \vee \dots \vee l_\ell)$ of length $\ell$ with clauses $(l_1 \vee l_2 \vee x_2), (\neg x_2 \vee l_3 \vee x_3), (\neg x_3 \vee l_4 \vee x_4) \dots (\neg x_{\ell-2} \vee l_{\ell-1} \vee l_\ell)$ using new variables $x_2, \dots, x_{\ell-2}$.
Let $G$ be the primal graph of the original formula, $G''$ be the primal graph of the new formula, and let $G'=G\cup G''$, i.e., a graph on the vertex set $V(G'')$ where the edge set is the union of the edge sets of $G$ and $G''$. By the construction it follows that $G'[V(G)]=G$ and that each connected component of $G'-V(G)$ has size at most $c-3$, where $c$ is the maximum clause size in the original formula. Moreover, the neighborhood of such a connected component is a clique in $G$. Since each clause induces a clique in the primal graph, it follows that $c\leq \pi(G)+1$ for each $\pi\in\{\td,\pw,\tw\}$. Hence, from Lemma~\ref{lem:incidenceBound} we have that $\pi(G')\leq \max(\pi(G),2\pi(G)-2)$, and an appropriate decomposition can be constructed from a decomposition of $G$ in logspace. Since $G''$ is a subgraph of $G$, it is also a decomposition of $G''$.

\medskip

\noindent $k$-SAT (primal graph) (for any fixed $k \in \N$) $\lred$ $k$-SAT (incidence graph):\\[0.1cm]
Use the same formula, bounds follow immediately from Lemma~\ref{lem:incidenceBound} (the connected components are single vertices).

\medskip

\noindent $k$-SAT (incidence graph) (for any fixed $k \in \N$) $\lred$ $k$-SAT (primal graph):\\[0.1cm]
The primal graph has a natural width-$k$ deconstruction into the incidence graph of the formula. For every variable of the formula we create a bag containing only it. For every clause of the formula we create a bag containing all the variables contained in this clause. It is easy to verify that this is a deconstruction.

\medskip

\noindent $k$-SAT (primal graph) (for any fixed $k \in \N$) $\lred$ CNF-SAT (primal graph):\\[0.1cm]
Trivial.

\medskip

\noindent 3-SAT (incidence graph) $\lred$ \textsc{3Coloring}:\\[0.1cm]
The reduction of Garey, Johnson, Stockmeyer~\cite{GareyJS76} creates a pair of adjacent vertices for every variable of the formula (a variable gadget), and a 6-vertex subgraph for every clause (clause gadget). For each clause, three edges are added to connect it to gadgets for variables occurring in this clause. Then a single triangle is created, whose one vertex is connected to all the vertices of all the variable gadgets. The graph created can easily be seen to have a width-11 deconstruction into the formula's incidence graph. Namely, a variable's (clause's) bag contains the 2 (6) corresponding vertices, and all bags contain the last triangle.

\medskip

\noindent \textsc{3Coloring} $\lred$ 3-SAT (primal graph):\\[0.1cm]
Create three variables $x,y,z$ and four clauses $(x \vee y \vee z),(\neg x \vee \neg y),(\neg y \vee \neg z),(\neg z \vee \neg x)$ for each vertex of the input graph $G$, describing that exactly one of the variables corresponding to this vertex is true. Then, for each edge of $G$ add three clauses of size $2$, describing that the true variable corresponding to one endpoint has a different label than for the other endpoint. It is easy to see that the formula's primal graph has a width-6 deconstruction into the original graph. Namely, for each original vertex of $G$ create a bag that contains the corresponding $3$ variables $x,y,z$.

\medskip

\noindent 3-SAT (incidence graph) $\lred$ \textsc{Independent Set}:\\[0.1cm]
Create two adjacent vertices $x,\neg x$ for every variable $x$ (variable gadget), and a triangle for every clause (clause gadget). In every clause gadget label the vertices of the triangle by the literals occurring in the clause, and connect these vertices to corresponding literals in clause gadgets. Then the input formula is satisfiable if and only if there is an independent set in the output graph with as many vertices as there are variables and clauses in total. The output graph has a trivial width-5 deconstruction into the incidence graph of the input formula, where for every variable/clause we create a bag containing the corresponding gadget.

\medskip

\noindent \textsc{Independent Set} $\lred$ \textsc{Vertex Cover}:\\[0.1cm]
Given a graph $G$ and a number $k$, output $G$ and $|V(G)|-k$.

\medskip

\noindent \textsc{Vertex Cover} $\lred$ \textsc{Dominating Set}:\\[0.1cm]
Given a graph $G$ and a number $k$, let $G'$ be obtained from $G$ by subdividing every edge once; output $G'$ and $k$. Bounds follow from Lemma~\ref{lem:incidenceBound} (the connected components are single vertices). 
\end{proof}

\end{document}